\documentclass[a4paper,runningheads,envcountsect]{llncs}

\usepackage[utf8]{inputenc}
\usepackage[T1]{fontenc}
\usepackage[english]{babel}
\usepackage{amsmath, amssymb,mathtools}
\usepackage[margin=1in]{geometry}
\usepackage{microtype}
\usepackage[colorlinks, citecolor=blue]{hyperref}
\usepackage{IEEEtrantools}
\usepackage{array}
\usepackage{breakcites}
\usepackage{subcaption}

\usepackage{amsthm}

\usepackage{tikz}
\usetikzlibrary{automata,positioning,matrix}

\numberwithin{equation}{section}
\numberwithin{figure}{section}
 
\title{Different Strokes in Randomised Strategies:\texorpdfstring{\\}{ }Revisiting Kuhn's Theorem under Finite-Memory Assumptions\thanks{James C.~A.~Main is an F.R.S.-FNRS Research Fellow and Mickael Randour is an F.R.S.-FNRS Research Associate. Both authors are members of the TRAIL institute. This work has been supported by the Fonds de la Recherche Scientifique – FNRS under Grant n° T.0188.23 (PDR ControlleRS).}}
\titlerunning{Revisiting Kuhn's Theorem Under Finite-Memory Assumptions}

\author{{James C.~A.} Main \and Mickael Randour}

\institute{F.R.S.-FNRS \& UMONS -- Université de Mons, Belgium}

\tikzset{
  >=stealth,
  left sided/.style={
    draw=none,
    append after command={
      [shorten <= -0.5\pgflinewidth]
      (\tikzlastnode.north west) edge[dashed](\tikzlastnode.south west)
    }
  },
  two sided/.style={
    draw=none,
    append after command={
      [shorten <= -0.5\pgflinewidth]
      (\tikzlastnode.north west) edge[dashed](\tikzlastnode.south west)
      (\tikzlastnode.north east) edge[dashed](\tikzlastnode.south east)
    }
  },
  right sided/.style={
    draw=none,
    append after command={
      [shorten <= -0.5\pgflinewidth]
      (\tikzlastnode.north east) edge[dashed](\tikzlastnode.south east)
    }
  },
  every state/.style={circle, minimum size=1cm},
  every path/.style={thick},
  initial text=,
  node distance=1cm
}

\tikzstyle{stochasticc} = [fill, circle, minimum size=0.1cm, inner sep=0.05cm, outer sep=0cm]
\tikzstyle{stochastics} = [fill, rectangle, minimum size=0.1cm, inner sep=0.05cm, outer sep=0cm]

\newcommand{\rk}{\mathsf{rk}}

\newcommand{\init}{\mathsf{init}}

\newcommand{\IN}{\mathbb{N}}
\newcommand{\IQ}{\mathbb{Q}}

\newcommand{\proba}{\mathbb{P}}

\newcommand{\dist}[1]{\mathcal{D}(#1)}
\newcommand{\cyl}[1]{\mathsf{Cyl}(#1)}
\newcommand{\supp}[1]{\mathsf{supp}(#1)}

\newcommand{\nPlayer}{n}
\newcommand{\player}[1]{\mathcal{P}_{#1}}
\newcommand{\playerOne}{\player{1}}
\newcommand{\playerTwo}{\player{2}}
\newcommand{\playerI}{\player{i}}
\newcommand{\game}{\mathcal{G}}

\newcommand{\arenaState}{s}

\newcommand{\arenaActionSpace}{A}
\newcommand{\arenaAction}{a}

\newcommand{\concurStateSpace}{S}
\newcommand{\concurState}{s}
\newcommand{\concurActionSpace}{\bar{A}}
\newcommand{\concurActionSpacePl}[1]{A^{(#1)}}
\newcommand{\concurAction}{\bar{a}}
\newcommand{\concurActionPl}[1]{a^{(#1)}}
\newcommand{\concurActionOne}{\concurActionPl{1}}
\newcommand{\concurActionTwo}{\concurActionPl{2}}
\newcommand{\concurActionI}{\concurActionPl{i}}
\newcommand{\concurActionIAdv}{\concurActionPl{3-i}}

\newcommand{\concurActionAlt}{\bar{b}}
\newcommand{\concurActionPlAlt}[1]{b^{(#1)}}

\newcommand{\concurActionIAlt}{\concurActionPlAlt{i}}

\newcommand{\concurActionSpaceOne}{\concurActionSpacePl{1}}
\newcommand{\concurActionSpaceTwo}{\concurActionSpacePl{2}}
\newcommand{\concurActionSpaceI}{\concurActionSpacePl{i}}
\newcommand{\concurActionSpaceN}{\concurActionSpacePl{\nPlayer}}
\newcommand{\concurTrans}{\delta}
\newcommand{\concurTuple}{(\concurStateSpace, \concurActionSpaceOne, \concurActionSpaceTwo, \concurTrans)}

\newcommand{\nConcurTuple}{(\concurStateSpace, (\concurActionSpaceI)_{1\leq i \leq\nPlayer}, \concurTrans)}

\newcommand{\poGame}{\Gamma}
\newcommand{\poStateSpace}{S}
\newcommand{\poState}{s}
\newcommand{\poActionSpace}{\bar{A}}
\newcommand{\poActionSpacePl}[1]{A^{(#1)}}
\newcommand{\poAction}{\bar{a}}
\newcommand{\poActionPl}[1]{a^{(#1)}}
\newcommand{\poActionOne}{\poActionPl{1}}
\newcommand{\poActionTwo}{\poActionPl{2}}
\newcommand{\poActionI}{\poActionPl{i}}
\newcommand{\poActionIAdv}{\poActionPl{3-i}}
\newcommand{\poActionAlt}{\bar{b}}
\newcommand{\poActionPlAlt}[1]{b^{(#1)}}

\newcommand{\poActionIAlt}{\poActionPlAlt{i}}
\newcommand{\poActionIAdvAlt}{\poActionPlAlt{3-i}}
\newcommand{\poActionSpaceOne}{\poActionSpacePl{1}}
\newcommand{\poActionSpaceTwo}{\poActionSpacePl{2}}
\newcommand{\poActionSpaceI}{\poActionSpacePl{i}}

\newcommand{\poTuple}{(\poStateSpace, \poActionSpaceOne, \poActionSpaceTwo, \delta, \obsSpaceOne, \obsFunOne, \obsSpaceTwo, \obsFunTwo)}

\newcommand{\obsFun}[1]{\mathsf{Obs}_{#1}}
\newcommand{\obsSpace}[1]{\mathcal{Z}_{#1}}
\newcommand{\obsFunOne}{\obsFun{1}}
\newcommand{\obsSpaceOne}{\obsSpace{1}}
\newcommand{\obsFunTwo}{\obsFun{2}}
\newcommand{\obsSpaceTwo}{\obsSpace{2}}
\newcommand{\obsFunI}{\obsFun{i}}
\newcommand{\obsSpaceI}{\obsSpace{i}}

\newcommand{\mdpStateSpace}{S}
\newcommand{\mdpState}{s}
\newcommand{\mdpTrans}{\delta}
\newcommand{\mdpActionSpace}{A}
\newcommand{\mdpAction}{a}
\newcommand{\mdpTuple}{(\mdpStateSpace, \mdpActionSpace, \mdpTrans)}

\newcommand{\playSet}[1]{\mathsf{Plays}(#1)}
\newcommand{\play}{\pi}
\newcommand{\histSet}[1]{\mathsf{Hist}(#1)}
\newcommand{\hist}{h}
\newcommand{\prefHist}{w}

\newcommand{\paths}{\mathsf{Paths}}

\newcommand{\last}[1]{\mathsf{last}(#1)}

\newcommand{\safe}[1]{\mathsf{Safe}(#1)}
\newcommand{\reach}[1]{\mathsf{Reach}(#1)}

\newcommand{\target}{T}

\newcommand{\emptyword}{\varepsilon}

\newcommand{\mealy}{\mathcal{M}}
\newcommand{\mealyState}{m}
\newcommand{\mealyStateSpace}{M}
\newcommand{\mealyDistInit}{\mu_{\init}}
\newcommand{\mealyStateInit}{m_{\init}}
\newcommand{\mealyDist}[1]{\mu_{#1}}
\newcommand{\update}{\alpha_{\mathsf{up}}}
\newcommand{\nextmove}{\alpha_{\mathsf{nxt}}}
\newcommand{\mealyTuple}{(\mealyStateSpace, \mealyDistInit, \nextmove, \update)}
\newcommand{\mealyTupleInSt}{(\mealyStateSpace, \mealyStateInit, \nextmove, \update)}

\newcommand{\mealyAlt}{\mathcal{N}}
\newcommand{\mealyStateAlt}{n}
\newcommand{\mealyStateInitAlt}{n_\init}
\newcommand{\mealyStateSpaceAlt}{N}
\newcommand{\mealyDistInitAlt}{\nu_\init}
\newcommand{\mealyDistAlt}[1]{\nu_{#1}}
\newcommand{\updateAlt}{\beta_\mathsf{up}}
\newcommand{\nextmoveAlt}{\beta_\mathsf{nxt}}
\newcommand{\mealyTupleAlt}{(\mealyStateSpaceAlt, \mealyDistInitAlt, \nextmoveAlt, \updateAlt)}
\newcommand{\mealyTupleInStAlt}{(\mealyStateSpaceAlt, \mealyStateInitAlt, \nextmoveAlt, \updateAlt)}

\newcommand{\strat}[1]{\sigma_{#1}}
\newcommand{\stratAlt}[1]{\tau_{#1}}

\newcommand{\stratOne}{\strat{1}}
\newcommand{\stratTwo}{\strat{2}}
\newcommand{\stratI}{\strat{i}}
\newcommand{\stratAltOne}{\stratAlt{1}}
\newcommand{\stratAltTwo}{\stratAlt{2}}
\newcommand{\stratAltI}{\stratAlt{i}}

\newcommand{\stratClass}{\mathcal{C}}

\newcommand{\segmentpoint}[2]{x_{#1}^{#2}}

\newcommand{\segmentstrat}[2]{\sigma_{#1}^{#2}}

\makeatletter
\def\squarecorner#1{
    \pgf@x=\the\wd\pgfnodeparttextbox \pgfmathsetlength\pgf@xc{\pgfkeysvalueof{/pgf/inner xsep}}\advance\pgf@x by 2\pgf@xc \pgfmathsetlength\pgf@xb{\pgfkeysvalueof{/pgf/minimum width}}\ifdim\pgf@x<\pgf@xb \pgf@x=\pgf@xb \fi \pgf@y=\ht\pgfnodeparttextbox \advance\pgf@y by\dp\pgfnodeparttextbox \pgfmathsetlength\pgf@yc{\pgfkeysvalueof{/pgf/inner ysep}}\advance\pgf@y by 2\pgf@yc \pgfmathsetlength\pgf@yb{\pgfkeysvalueof{/pgf/minimum height}}\ifdim\pgf@y<\pgf@yb \pgf@y=\pgf@yb \fi \ifdim\pgf@x<\pgf@y \pgf@x=\pgf@y \else
        \pgf@y=\pgf@x \fi
\pgf@x=#1.5\pgf@x \advance\pgf@x by.5\wd\pgfnodeparttextbox \pgfmathsetlength\pgf@xa{\pgfkeysvalueof{/pgf/outer xsep}}\advance\pgf@x by#1\pgf@xa \pgf@y=#1.5\pgf@y \advance\pgf@y by-.5\dp\pgfnodeparttextbox \advance\pgf@y by.5\ht\pgfnodeparttextbox \pgfmathsetlength\pgf@ya{\pgfkeysvalueof{/pgf/outer ysep}}\advance\pgf@y by#1\pgf@ya }
\makeatother

\pgfdeclareshape{square}{
    \savedanchor\northeast{\squarecorner{}}
    \savedanchor\southwest{\squarecorner{-}}

    \foreach \x in {east,west} \foreach \y in {north,mid,base,south} {
        \inheritanchor[from=rectangle]{\y\space\x}
    }
    \foreach \x in {east,west,north,mid,base,south,center,text} {
        \inheritanchor[from=rectangle]{\x}
    }
    \inheritanchorborder[from=rectangle]
    \inheritbackgroundpath[from=rectangle]
}

\begin{document}
\maketitle
\begin{abstract}
Two-player (antagonistic) games on (possibly stochastic) graphs are a prevalent model in theoretical computer science, notably as a framework for reactive synthesis.

Optimal strategies may require randomisation when dealing with inherently probabilistic goals, balancing multiple objectives, or in contexts of partial information. There is no unique way to define randomised strategies. For instance, one can use so-called \textit{mixed} strategies or \textit{behavioural} ones. In the most general setting, these two classes do not share the same expressiveness. A seminal result in game theory --- \textit{Kuhn's theorem} --- asserts their equivalence in games of perfect recall. 

This result crucially relies on the possibility for strategies to use \textit{infinite memory}, i.e., unlimited knowledge of all past observations. However, computer systems are finite in practice. Hence it is pertinent to restrict our attention to \textit{finite-memory} strategies, defined as automata with outputs. Randomisation can be implemented in these in different ways: the \textit{initialisation}, \textit{outputs} or \textit{transitions} can be randomised or deterministic respectively. Depending on which aspects are randomised, the expressiveness of the corresponding class of finite-memory strategies differs.

In this work, we study two-player concurrent stochastic games and provide a complete taxonomy of the classes of finite-memory strategies obtained by varying which of the three aforementioned components are randomised.
Our taxonomy holds in games of perfect and imperfect information with perfect recall, and in games with more than two players.
We also provide an adapted taxonomy for games with imperfect recall.
\end{abstract}

\keywords{two-player games on graphs \and stochastic games \and Markov decision processes \and finite-memory strategies \and randomised strategies}

\newcommand{\continueAct}{\mathsf{P}}
\newcommand{\checkAct}{\mathsf{C}}
\newcommand{\simpleGame}{\game_{a, b}}

\section{Introduction}
\smallskip\noindent\textbf{Games on graphs.} Games on (possibly stochastic) graphs have been studied for decades, both for their own interest (e.g.,~\cite{EM79,Con92,GZ05}) and for their value as a framework for \textit{reactive synthesis} (e.g.,~\cite{DBLP:conf/dagstuhl/2001automata,rECCS,DBLP:conf/lata/BrenguierCHPRRS16,DBLP:reference/mc/BloemCJ18}). The core problem is almost always to find \textit{optimal strategies} for the players: strategies that guarantee winning for Boolean winning conditions (e.g.,~\cite{DBLP:conf/focs/EmersonJ88,DBLP:journals/tcs/Zielonka98,DBLP:journals/corr/BruyereHR16,DBLP:conf/concur/BrihayeDOR19}), or strategies that achieve the best possible payoff in quantitative contexts (e.g.,~\cite{EM79,DBLP:journals/acta/BouyerMRLL18,DBLP:conf/concur/BruyereHRR19}). In multi-objective settings, one is interested in \textit{Pareto-optimal} strategies (e.g.,~\cite{DBLP:journals/acta/ChatterjeeRR14,DBLP:journals/iandc/VelnerC0HRR15,DBLP:journals/fmsd/RandourRS17,DBLP:conf/tacas/DelgrangeKQR20}), but the bottom line is the same: players are looking for strategies that guarantee the best possible results.

In reactive synthesis, we model the interaction between a system and its uncontrollable environment as a two-player antagonistic game, and we represent the specification to ensure as a winning objective. An optimal strategy for the system in this game then constitutes a formal blueprint for a \textit{controller} to implement in the real world~\cite{DBLP:reference/mc/BloemCJ18}.

\smallskip\noindent\textbf{Randomness in strategies.} In essence, a \textit{pure strategy} is simply a function mapping histories (i.e., the past and present of a play) to an action deterministically.

Optimal strategies may require \textit{randomisation} when dealing with inherently probabilistic goals, balancing multiple objectives, or in contexts of partial information: see, e.g.,~\cite{DBLP:conf/lics/ChatterjeeD12,DBLP:journals/fmsd/RandourRS17,DBLP:conf/icalp/BerthonRR17,DBLP:conf/tacas/DelgrangeKQR20}. There are different ways of randomising strategies. For instance, a \textit{mixed} strategy is essentially a probability distribution over a set of pure strategies. That is, the player randomly selects a pure strategy at the beginning of the game and then follows it for the entirety of the play without resorting to randomness ever again. By contrast, a \textit{behavioural} strategy randomly selects an action at each step: it thus maps histories to probability distributions over actions. 

\smallskip\noindent\textbf{Kuhn's theorem.} In full generality, these two definitions yield different classes of strategies (e.g.,~\cite{DBLP:journals/corr/abs-1006-1404} or~\cite[Chapter 11]{OR94}). Nonetheless, Kuhn's theorem~\cite{Aumann64} proves their equivalence under a mild hypothesis: in games of \textit{perfect recall}, for any mixed strategy there is an equivalent behavioural strategy and vice-versa. A game is said to be of perfect recall for a given player if said player never forgets their previous knowledge and the actions they have played (i.e., they can observe their own actions). Let us note that perfect recall and \textit{perfect information} are two different notions: perfect information is not required to have perfect recall.

Let us highlight that Kuhn's theorem crucially relies on two elements. First, mixed strategies can be distributions over an \textit{infinite} set of pure strategies. Second, strategies can use \textit{infinite memory}, i.e., they are able to remember the past completely, however long it might be. Indeed, consider a game in which a player can choose one of two actions in each round. One could define a (memoryless) behavioural strategy that selects one of the two actions by flipping a coin each round. This strategy generates infinitely many sequences of actions, therefore any equivalent mixed strategy needs the ability to randomise between infinitely many different sequences, and thus, infinitely many pure strategies. Moreover, infinitely many of these sequences require infinite memory to be generated (due to their non-regularity).

\smallskip\noindent\textbf{Finite-memory strategies.} From the point of view of reactive synthesis, infinite-memory strategies, along with randomised ones relying on infinite supports, are undesirable for implementation. This is why a plethora of recent advances has focused on \textit{finite-memory} strategies, usually represented as (a variation on) Mealy machines, i.e., finite automata with outputs. See, e.g.,~\cite{GZ05,DBLP:journals/acta/ChatterjeeRR14,DBLP:journals/iandc/BruyereFRR17,DBLP:conf/tacas/DelgrangeKQR20,DBLP:journals/lmcs/BouyerRORV22,DBLP:conf/concur/BouyerORV21}. Randomisation can be implemented in these finite-memory strategies in different ways: the \textit{initialisation}, \textit{outputs} or \textit{transitions} can be randomised or deterministic respectively.

Depending on which aspects are randomised, the expressiveness of the corresponding class of finite-memory strategies differs: in a nutshell, \textit{Kuhn's theorem crumbles when restricting ourselves to finite memory}. For instance, we show that some finite-memory strategies with only randomised outputs (i.e., the natural equivalent of behavioural strategies) cannot be emulated by finite-memory strategies with only randomised initialisation (i.e., the natural equivalent of mixed strategies) --- see Lemma~\ref{lemma:behavioural:mixed:strictness}. Similarly, it is known that some finite-memory strategies that are encoded by Mealy machines using randomisation in all three components admit no equivalent using randomisation only in outputs~\cite{DBLP:journals/tcs/AlfaroHK07,DBLP:journals/corr/abs-1006-1404}. 

\begin{figure*}[tbh]
  \begin{center}
    \begin{tikzpicture}
      \matrix (a) [matrix of nodes, align=center, text width=4cm, column sep=0.1cm, row sep=0.7cm]{
        & {DRR = RRR = RDR \\ (Thm.~\ref{theorem:RRR:DRR},~\ref{theorem:RRR:RDR})}  & \\
        DDR &  & RRD \\
        &  & DRD (behavioural) \\
        &  & RDD (mixed) \\
        & DDD (pure) & \\
      };
      \draw (a-1-2) -- (a-2-1) node[midway, above left] {Lem.~\ref{lemma:rrd_notin_ddr} (strictness)};
      \draw (a-1-2) -- (a-2-3) node[midway, above right] {Lem.~\ref{lemma:ddr_notin_rrd} (strictness)};
      \draw (a-2-3) -- (a-3-3) node[midway, right] {Lem.~\ref{lemma:rrd:behavioural:strictness} (strictness)};
      \draw (a-3-3) -- (a-4-3) node[midway, right] {Thm.~\ref{theorem:mixed:behavioural}, Lem.~\ref{lemma:behavioural:mixed:strictness}};
      \draw (a-2-1) -- (a-5-2) node[midway, below left] {Direct};
      \draw (a-4-3) -- (a-5-2) node[midway, below right] {Lem.~\ref{lemma:rrd_notin_ddr} (strictness)};
      \node at (-6, 3) (top) {};
      \node at (-6, -3) (bot) {};
      \draw[->] (bot) -- (top) node[rotate=90,midway,above] {Expressiveness};
    \end{tikzpicture}
    \caption{Lattice of strategy classes in terms of expressible probability distributions over plays against all strategies of the other player. In the three-letter acronyms, the letters, in order, refer to the initialisation, outputs and updates of the Mealy machines: D and R respectively denote deterministic and randomised components. Each line in the figure indicates that the class above is strictly more expressive than the class below.}\label{figure:lattice}
\end{center}
\end{figure*}

\smallskip\noindent\textbf{Our contributions.} We consider \textit{two-player zero-sum concurrent stochastic games of perfect information} (e.g.,~\cite{Sha53,MaitraSudderth}), encompassing two-player turn-based (deterministic) games and Markov decision processes as particular subcases. We establish a \textit{Kuhn-like taxonomy} of the classes of finite-memory strategies obtained by varying which of the three aforementioned Mealy machine components are randomised: we illustrate it in Figure~\ref{figure:lattice}, and describe it fully in Section~\ref{section:relationships}.

Let us highlight a few elements. Naturally, the least expressive model corresponds to pure strategies. In contrast to what happens with infinite memory, and as noted in the previous paragraph, we see that mixed strategies are strictly less expressive than behavioural ones. We also observe that allowing randomness both in initialisation and in outputs (RRD strategies) yields an even more expressive class --- and incomparable to what is obtained by allowing randomness in updates only. Finally, the most expressive class is obviously obtained when allowing randomness in all components; yet it may be dropped in initialisation or in outputs without reducing the expressiveness --- but not in both simultaneously.

To compare the expressiveness of strategy classes, we consider \textit{outcome-equivalence}, as defined in Section~\ref{section:preliminaries}. Intuitively, two strategies are outcome-equivalent if, against any strategy of the opponent, they yield identical probability distributions (i.e., they induce identical Markov chains). Hence we are agnostic with regard to the objective, winning condition, payoff function, or preference relation of the game, and with regard to how they are defined (e.g., colours on actions, states, transitions, etc).

Finally, let us note that in our setting of two-player concurrent stochastic games, the perfect recall hypothesis holds. Most importantly, we assume that actions are visible. Lifting this hypothesis drastically changes the relationships between the different models. While our main presentation considers two-player perfect-information games for the sake of simplicity, we show in Section~\ref{section:multiplayer} that \textit{our results hold in games with more than two players} and, in Section~\ref{section:imperfect:information}, that \textit{our results hold in games of imperfect information} too, assuming visible actions.
We provide an adapted taxonomy for \textit{games in which actions are not visible} in Section~\ref{section:imperfect:information}.

\smallskip\noindent\textbf{Related work.} We discuss several axes of research related to our work.

The first one deals with the \textit{various types of randomness} one can inject in strategies and their consequences. Obviously, Kuhn's theorem \cite{Aumann64} is a major inspiration, as well as the examples of differences between strategy models presented in~\cite{DBLP:journals/corr/abs-1006-1404}. On a different but related note,~\cite{DBLP:journals/iandc/Chatterjee0GH15} studies when randomness is not helpful in games nor strategies (as it can be simulated by other means or does not intervene).

A second direction focuses on trying to characterise the \textit{power of finite-memory strategies}, with or without randomness. One can notably cite~\cite{GZ05} for memoryless strategies, and~\cite{DBLP:conf/fsttcs/0001PR18,DBLP:journals/lmcs/BouyerRORV22},~\cite{DBLP:conf/concur/BouyerORV21}, and~\cite{DBLP:conf/stacs/BouyerRV22} for finite-memory ones in deterministic, stochastic, and infinite-arena games respectively.

The power of strategies also depends on the information they are allowed to register to update their memory: colours, as in the  papers of the previous paragraph, or the sequence of states~\cite{KopThesis,DBLP:journals/lmcs/0001HPW19,DBLP:conf/icalp/CasaresO23}, observations~\cite{DBLP:journals/jacm/BertrandGG17} or sequences of actions or labels~\cite{DBLP:journals/lmcs/0001HPW19}.

The last axis concentrates on the use of \textit{randomness as a means to simplify strategies} and/or reduce their memory requirements. Examples of this endeavour can be found in~\cite{DBLP:conf/qest/ChatterjeeAH04,DBLP:conf/hybrid/ChatterjeeHP08,DBLP:conf/stacs/Horn09,DBLP:journals/acta/ChatterjeeRR14,DBLP:conf/concur/MonmegePR20}. These are further motivations to understand randomised strategies even in contexts where randomness is not needed a priori to play optimally.

\smallskip\noindent\textbf{Outline.} Section~\ref{section:preliminaries} summarises all preliminary notions. In Section~\ref{section:relationships}, we present the taxonomy illustrated in Figure~\ref{figure:lattice} and comment on it. We divide its proofs into two sections: Section~\ref{section:inclusions} establishes the inclusions, and Section~\ref{section:strictness} establishes the separation of distinct strategy classes. Finally, Sections~\ref{section:multiplayer} and~\ref{section:imperfect:information} present how we transfer our results to the richer settings of multi-player games and of games of imperfect information respectively.
We conclude in Section~\ref{section:conclusion}.
Appendix~\ref{appendix:probability_memory} is a technical appendix dedicated to the details of an equation introduced in Section~\ref{section:preliminaries}.

A preliminary version of this work has been previously published as a conference paper~\cite{DBLP:conf/concur/MainR22}.
This version presents in detail the contributions of the conference paper with full proofs and extends the results of the conference paper by considering a broader class of games; only turn-based games are considered in~\cite{DBLP:conf/concur/MainR22}, whereas we consider \textit{concurrent games} here.
The separation of strategy classes presented in Section~\ref{section:strictness} has been enriched with examples derived from specifications, to complement the examples provided on a one-player games with one state and two actions (this is arguably the simplest possible setting in which we can consider non-trivial strategies).
Finally, the generalisation to games of imperfect information presented in Section~\ref{section:imperfect:information} has been extended to consider the case of games with imperfect recall.

\section{Preliminaries}\label{section:preliminaries}
\smallskip\noindent\textbf{Set-theoretic notation.}
We let $\IN$ and $\IQ$ denote the sets of natural and rational numbers respectively.
Given sets $A$ and $B'\subseteq B$, and a function $f\colon A\to B$, we let $f^{-1}(B') = \{a\in A\mid f(a)\in B'\}$ denote the inverse image of $B'$ by $f$.
For the inverse image of singleton sets, we write $f^{-1}(b)$ instead of $f^{-1}(\{b\})$ for any $b\in B$.

\smallskip\noindent\textbf{Probability.} Given any countable set $A$, we write $\dist{A}$ for the set of probability distributions over $A$, i.e., the set of functions $\mu\colon A\to [0, 1]$ such that $\sum_{a\in A}\mu(a)=1$.
Given such a probability distribution $\mu$, we let $\supp{\mu}=\{a\in A\mid \mu(a)> 0\}$ be the support of $\mu$.

Given a set $A$ and a $\sigma$-algebra $\mathcal{F}$ over $A$,
we denote by $\dist{A, \mathcal{F}}$ the set of probability distributions over
the measurable space $(A, \mathcal{F})$.

\smallskip\noindent\textbf{Games.} We consider two-player concurrent stochastic games of perfect information played on graphs.
We denote the two players by $\playerOne$ and $\playerTwo$.
At the start of a play, a pebble is placed on some initial state.
In each round, both players simultaneously select an action available in said state and the next state is chosen randomly following a distribution depending on the current state and the actions chosen by the players.
The game proceeds for an infinite number of rounds, yielding an infinite play.

Formally, a two-player \textit{concurrent stochastic game of perfect information}, or simply a \textit{game}, is a tuple $\game= \concurTuple$ where $\concurStateSpace$ is a non-empty finite set of states, $\concurActionSpaceOne$ and $\concurActionSpaceTwo$ are finite sets of actions for each player and $\concurTrans\colon\concurStateSpace\times\concurActionSpaceOne\times\concurActionSpaceTwo\to \dist{\concurStateSpace}$ is a (partial) probabilistic transition function.
We write $\concurActionSpace = \concurActionSpaceOne\times \concurActionSpaceTwo$ in the following.
Elements of $\concurActionSpace$ are denoted with a bar to emphasise that they are pairs of actions.
Given $\concurAction\in\concurActionSpace$, we adopt the convention that $\concurAction$ is given by the pair $(\concurActionOne, \concurActionTwo)$.

For any state $\concurState\in\concurStateSpace$, we let $\concurActionSpace(\concurState) = \{\concurAction\in\concurActionSpace\mid \concurTrans(\arenaState, \concurAction) \text{ is defined}\}$ and require that $\concurActionSpace(\concurState)$ is of the form $\concurActionSpaceOne(\concurState)\times \concurActionSpaceTwo(\concurState)$ for some subsets $\concurActionSpaceI(\concurState)$ of $\concurActionSpaceI$, i.e., the actions available to a player in a state are not constrained by the choices of the other.
We assume that for all $\concurState\in \concurStateSpace$, $\concurActionSpace(\concurState)$ is non-empty, i.e., there are no deadlocks in the game.

A \textit{play} of $\game$ is an infinite sequence $\concurState_0\concurAction_0\concurState_1\ldots\in (\concurStateSpace\concurActionSpace)^\omega$ such that for all $k\in\IN$, $\concurTrans(\concurState_k, \concurAction_k)(\concurState_{k+1})>0$.
A \textit{history} is a finite prefix of a play ending in a state.
Given a play $\play = \concurState_0\concurAction_0\concurState_1\concurAction_1\ldots$ and $k\in\IN$, we write $\play_{|k}$ for the history $\concurState_0\concurAction_0\ldots \concurAction_{k-1}\concurState_k$.
For any history $\hist =\concurState_0\concurAction_0\ldots \concurAction_{k-1}\concurState_k$, we let $\last{\hist} = \concurState_k$.
We write $\playSet{\game}$ to denote the set of plays of $\game$, $\histSet{\game}$ to denote the set of histories of $\game$.
Given some initial state $\concurState_\init\in\concurStateSpace$, we write $\histSet{\game, \concurState_\init}$ for the set of histories starting in state $\concurState_\init$.

There exist several classes of games that have been studied in their own right.
A game is \textit{turn-based} if at each round, only one player can influence the next transition.
In other words, $\game = \concurTuple$ is turn-based if for all states $\concurState\in\concurStateSpace$, there exists $i\in\{1, 2\}$ such that $|\concurActionSpaceI(\concurState)|=1$ (in which case $\player{3-i}$ controls $\concurState$).
Turn-based games are traditionally described via a partition of the state space into states controlled by $\playerOne$ and states controlled by $\playerTwo$.
A game is \textit{deterministic} if its transitions are not subject to randomness; a game $\game= \concurTuple$ is a deterministic game if for all $s\in S$ and $\concurAction\in \concurActionSpace(\concurState)$, $\concurTrans(\concurState, \concurAction)$ is a Dirac distribution.

An interesting subclass of turn-based games is that of one-player games.
A game is a \textit{one-player game} if only one player controls all transitions.
A game $\game = \concurTuple$ is a one-player game if there exists $i\in\{1, 2\}$ such that for all $\concurState\in\concurStateSpace$, $|\concurActionSpaceI(\concurState)| = 1$.
A one-player game in the sense above is the equivalent of a \textit{Markov decision process} (MDP) in our context, and will be referred to as such.
When dealing with MDPs, we lighten notation and drop information related to the inactive player.
We view MDPs as tuples $\mdpTuple$ where $\mdpStateSpace$ is a finite set of states, $\mdpActionSpace$ is a finite set of actions and $\mdpTrans\colon\mdpStateSpace\times\mdpActionSpace\to\dist{\mdpStateSpace}$ is the transition function.
Notions defined for two-player concurrent games can be directly adapted to MDPs, e.g., a play is a sequence in $(\mdpStateSpace\mdpActionSpace)^\omega$ instead of a sequence in $(\concurStateSpace\concurActionSpace)^\omega$.

We fix a game $\game = \concurTuple$ for the remainder of the section.

\smallskip\noindent\textbf{Strategies and outcomes.} A strategy is a function that describes how a player should act based on a history.
Players need not act in a deterministic fashion: they can use randomisation  to select an action.
Formally, a (behavioural) \textit{strategy} of $\playerI$ is a function $\stratI\colon \histSet{\game}\to\dist{\concurActionSpaceI}$ such that for all histories $\hist\in\histSet{\game}$, $\supp{\stratI(\hist)}\subseteq\concurActionSpaceI(\last{\hist})$.
In other words, a strategy assigns to any history a distribution over the actions available to $\playerI$ in this state.

When both players fix a strategy and an initial state is decided, we obtain a purely stochastic process, i.e., a Markov chain.
Let us recall the relevant $\sigma$-algebra for the definition of probabilities over plays.
For any history $\hist\in\histSet{\game}$, we define
$\cyl{\hist} = \{\play\in\playSet{\game}\mid \hist\text{ is a prefix of } \play\}$,
the \textit{cylinder} of $\hist$, consisting of plays that extend $\hist$.
Let us denote by $\mathcal{F}_{\game}$ the $\sigma$-algebra generated by all cylinder sets.

Let $\stratOne$ and $\stratTwo$ be strategies of $\playerOne$ and $\playerTwo$ respectively and $\concurState_\init\in\concurStateSpace$ be an initial state.
We define the probability measure (over $(\playSet{\game}, \mathcal{F}_\game)$)
induced by playing $\stratOne$ and $\stratTwo$ from $\concurState_\init$ in $\game$, written $\proba^{\stratOne, \stratTwo}_{\concurState_\init}$, as follows.
For any history $\hist=\concurState_0\concurAction_0\ldots \concurState_n\in\histSet{\game, \concurState_\init}$, the probability $\proba^{\stratOne, \stratTwo}_{\concurState_\init}(\cyl{h})$ assigned to $\cyl{\hist}$ is given by the product
\[
  \prod_{k=0}^{n-1}
  \stratOne(\concurState_0\concurAction_0\ldots \concurState_k)(\concurActionOne_k)\cdot
  \stratTwo(\concurState_0\concurAction_0\ldots \concurState_k)(\concurActionTwo_k)\cdot
  \concurTrans(\concurState_k, \concurAction_k)(\concurState_{k+1}).\]
For any history $\hist\in\histSet{\game}\setminus \histSet{\game, \concurState_\init}$, we set $\proba^{\stratOne, \stratTwo}_{\concurState_\init}(\cyl{\hist}) = 0$. By Carath\'{e}odory's extension theorem~\cite[Theorem A.1.3]{Dur19}, the measure described above can be extended in a unique fashion to $(\playSet{\game}, \mathcal{F}_\game)$.
For MDPs, we drop the strategy of the absent player in the notation of this distribution and write $\proba^{\stratOne}_{\concurState_\init}$.

Let $\stratI$ be a strategy of $\playerI$.
A play or play prefix $\concurState_0\concurAction_0\concurState_1\ldots$ is said to be \textit{consistent} with $\stratI$ if for all action indices $k$, it holds that $\stratI(\concurState_0\concurAction_0\ldots \concurState_k)(\concurActionI_k)> 0$.\footnote{We use the terminology of consistency not only for plays and histories, but also for prefixes of plays that end with an action pair.}

\smallskip\noindent\textbf{Outcome-equivalence of strategies.} In later sections, we study the expressiveness of finite-memory strategy models depending on the type of randomisation allowed.
Two strategies may yield the same outcomes despite being different: the actions suggested by a strategy in an inconsistent history can be changed without affecting which probability distributions are induced by the strategy.
Therefore, instead of using the equality of strategies as a measure of equivalence, we consider some weaker notion of equivalence, referred to as outcome-equivalence.

We say that two strategies $\stratOne$ and $\stratAltOne$ of $\playerOne$ are \textit{outcome-equivalent} if for any strategy $\stratTwo$ of $\playerTwo$ and for any initial state $\concurState_\init$, the probability distributions $\proba^{\stratOne, \stratTwo}_{\concurState_\init}$ and $\proba^{\stratAltOne, \stratTwo}_{\concurState_\init}$ coincide.

We now provide the criterion used in our proofs to establish the outcome-equivalence of strategies.
This criterion does not invoke the probability distributions induced by strategies directly.
The idea is that when comparing two strategies of a player, we need only be concerned with the suggestions these strategies provide in histories that are consistent with them.
In other words, any deviation in unreachable histories does not affect the outcome. Hence, one could reformulate outcome-equivalence as having to suggest the same distributions over actions in histories that are consistent with (one of) the strategies. In the sequel, we prove that this reformulation is indeed equivalent to the definition of outcome-equivalence. We rely on this reformulation to prove the outcome-equivalence of two strategies.

\begin{lemma}[Strategic criterion for outcome-equivalence]\label{lemma:prelim:consistent_equivalence}
  Let $\stratI$ and $\stratAltI$ be two strategies of $\playerI$.
  These two strategies are outcome-equivalent if and only if for all histories $\hist\in\histSet{\game}$, $\hist$ consistent with $\stratI$ implies $\stratI(\hist) = \stratAltI(\hist)$.
\end{lemma}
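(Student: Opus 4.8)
The plan is to reduce both implications to comparing the two induced measures on cylinder sets. Since the cylinders $\cyl(h)$ generate $\mathcal{F}_\game$ and form a $\pi$-system, the uniqueness part of Carathéodory's extension theorem (already invoked in the preliminaries) guarantees that two such measures coincide if and only if they agree on every cylinder. Hence I would first reformulate outcome-equivalence as: $\proba^{\stratone, \strattwo}_{s_\init}(\cyl(h)) = \proba^{\strataltone, \strattwo}_{s_\init}(\cyl(h))$ for every history $h$, every $\strattwo$ and every $s_\init$. Both directions then amount to manipulating the explicit products defining these cylinder probabilities.

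For the $\Leftarrow$ direction I would assume that $\stratone(h) = \strataltone(h)$ for every $h \in \hist_1(\game)$ consistent with $\stratone$, fix $\strattwo$, $s_\init$ and a history $h = s_0 a_0 \ldots s_n$. If $h$ does not start in $s_\init$ both cylinder probabilities vanish, so I may assume $s_0 = s_\init$. The two products then share every transition factor $\delta(s_k, a_k, s_{k+1})$ and every factor arising at a $\mathcal{P}_2$-state (both governed by the common $\strattwo$), differing only at indices with $s_k \in S_1$, where they read $\stratone(s_0 a_0 \ldots s_k)(a_k)$ versus $\strataltone(s_0 a_0 \ldots s_k)(a_k)$. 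When $h$ is consistent with $\stratone$, each such prefix lies in $\hist_1(\game)$ and is itself consistent with $\stratone$, so the hypothesis makes these factors equal and the two products coincide.

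The step I expect to be the main obstacle is the subcase where $h$ is \emph{not} consistent with $\stratone$: there the $\stratone$-product is $0$ and I must argue the $\strataltone$-product vanishes as well, even though the hypothesis says nothing about inconsistent histories. The plan is to take the least index $j$ with $s_j \in S_1$ and $\stratone(s_0 a_0 \ldots s_j)(a_j) = 0$; by minimality the prefix $s_0 a_0 \ldots s_j$ is consistent with $\stratone$ and lies in $\hist_1(\game)$, so the hypothesis yields $\strataltone(s_0 a_0 \ldots s_j)(a_j) = \stratone(s_0 a_0 \ldots s_j)(a_j) = 0$. This zero factor sits inside the $\strataltone$-product, forcing it to $0$ as well, so the two cylinder probabilities agree in every case and outcome-equivalence follows.

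For the $\Rightarrow$ direction I would assume outcome-equivalence and fix $h = s_0 a_0 \ldots s_n \in \hist_1(\game)$ consistent with $\stratone$, so $\last(h) = s_n \in S_1$, aiming to isolate each output probability via one-step extensions. I would pick any full-support $\strattwo$ (e.g. uniform over available actions) and set $s_\init = s_0$; consistency of $h$, full support of $\strattwo$, and positivity of the transition factors of a genuine history together give $p := \proba^{\stratone, \strattwo}_{s_0}(\cyl(h)) > 0$, and outcome-equivalence gives $\proba^{\strataltone, \strattwo}_{s_0}(\cyl(h)) = p$. For a fixed $a \in A(s_n)$ and some $s' \in \supp(\delta(s_n, a))$, expanding the product for the extension $h\, a\, s'$ and using $s_n \in S_1$ yields
\[
\proba^{\stratone, \strattwo}_{s_0}(\cyl(h\, a\, s')) = p \cdot \stratone(h)(a) \cdot \delta(s_n, a)(s'),
\]
and the same identity with $\strataltone$ and the same $p$. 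Outcome-equivalence equates the left-hand sides, and since $p > 0$ and $\delta(s_n, a)(s') > 0$ I may cancel to obtain $\stratone(h)(a) = \strataltone(h)(a)$. As $a \in A(s_n)$ is arbitrary and both distributions vanish outside $A(s_n)$, this gives $\stratone(h) = \strataltone(h)$, completing the proof.
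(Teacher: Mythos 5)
Your proposal is correct and follows essentially the same route as the paper's proof: the backward direction compares the defining products on cylinders, splitting into the consistent case (factorwise agreement) and the inconsistent case via the first index where $\stratone$ assigns probability zero (the paper's decomposition $h = h'ah''$), and the forward direction isolates $\stratone(h)(a)$ as a ratio of cylinder probabilities of $h$ and a one-step extension $has'$ against an opponent strategy consistent with $h$. The only cosmetic differences are that you make the reduction to cylinders via the uniqueness of the Carathéodory extension explicit and choose a full-support $\strattwo$ rather than merely one consistent with $h$; both choices are sound.
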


\begin{proof}
  To aid with notation, we assume that $i = 1$; the proof of the other case is done by exchanging the players below.
  First, we assume that $\stratOne$ and $\stratAltOne$ are outcome-equivalent.
  Let $\hist\in\histSet{\game}$ be a history that is consistent with $\stratOne$.
  Let $\concurState_\init$ denote the first state of $\hist$ and let $\stratTwo$ be a strategy of $\playerTwo$ consistent with $\hist$.
  Let $\concurActionOne\in\concurActionSpaceOne(\concurState)$ and $\concurActionTwo\in\supp{\stratTwo(\hist)}$, and write $\concurAction = (\concurActionOne, \concurActionTwo)$.
  Let $\concurState\in\supp{\delta(\last{\hist}), \concurAction)}$.
  By definition of the probability of a cylinder set and consistency of $\hist$ with both $\stratOne$ and $\stratTwo$, we have
  \[\stratOne(\hist)(\concurActionOne) =
    \frac{\proba_{\concurState_\init}^{\stratOne, \stratTwo}(\cyl{\hist\concurAction\concurState})}{\proba_{\concurState_\init}^{\stratOne, \stratTwo}(\cyl{\hist})\cdot \stratTwo(\hist)(\concurActionTwo)\cdot\concurTrans(\last{\hist}, \concurAction)(\concurState)}.\]
  Furthermore, the outcome-equivalence of $\stratOne$ and $\stratAltOne$ implies that $\proba_{\concurState_\init}^{\stratAltOne, \stratTwo}(\cyl{\hist}) = \proba_{\concurState_\init}^{\stratOne, \stratTwo}(\cyl{\hist})>0$.
  Therefore, we have
    \[\stratAltOne(\hist)(\concurActionOne) =
    \frac{\proba_{\concurState_\init}^{\stratAltOne, \stratTwo}(\cyl{\hist\concurAction\concurState})}{\proba_{\concurState_\init}^{\stratAltOne, \stratTwo}(\cyl{\hist})\cdot \stratTwo(\hist)(\concurActionTwo) \cdot\concurTrans(\last{\hist}, \concurAction)(\concurState)}.\]
  It follows from the equations above and the outcome-equivalence of $\stratOne$ and $\stratAltOne$ that $\stratOne(\hist)(\concurActionOne) = \stratAltOne(\hist)(\concurActionOne)$.
  We have shown that $\stratOne(\hist) = \stratAltOne(\hist)$, which ends the proof of the first direction.
  
  Let us now assume that $\stratOne$ and $\stratAltOne$ coincide over histories consistent with $\stratOne$.
  Let $\stratTwo$ be a strategy of $\playerTwo$ and $\concurState_\init\in S$ be an initial state.
  It suffices to study the probability of cylinder sets.
  Let $\hist\in\histSet{\game}$ be a history starting in $\concurState_\init$.
  If $\hist$ is consistent with $\stratOne$, then all prefixes of $\hist$ also are, therefore the definition of the probability of a cylinder ensures that $\proba_{\concurState_\init}^{\stratOne, \stratTwo}(\cyl{\hist}) = \proba_{\concurState_\init}^{\stratAltOne, \stratTwo}(\cyl{\hist})$.
  Otherwise, if $\hist$ is not consistent with $\stratOne$, then $\hist$ is necessarily of the form $\hist'\concurAction\hist''$ with $\hist'$ consistent with $\stratOne$ and $\stratOne(\hist')(\concurActionOne) = 0$.
  It follows that $\stratAltOne(h')(\concurActionOne) = 0$, thus $\proba_{\concurState_\init}^{\stratOne, \stratTwo}(h) = \proba_{\concurState_\init}^{\stratAltOne, \stratTwo}(h) = 0$.
  This shows that $\stratOne$ and $\stratAltOne$ are outcome-equivalent, ending the proof.
\end{proof}

\smallskip\noindent\textbf{Subclasses of  strategies.}
A strategy is called \textit{pure} if it does not use randomisation; a pure strategy of $\playerI$ can be viewed as a function $\histSet{\game}\to \concurActionSpaceI$.
A strategy that only uses information on the current state of the play is called \textit{memoryless}: a strategy $\stratI$ of $\playerI$ is memoryless if for all histories $\hist, \hist'\in\histSet{\game}$, $\last{\hist}=\last{\hist'}$ implies $\stratI(\hist) =\stratI(\hist')$.
Memoryless strategies can be viewed as functions $\concurStateSpace\to \dist{\concurActionSpaceI}$.
Strategies that are both memoryless and pure can be viewed as functions $\concurStateSpace\to\concurActionSpaceI$.

A strategy is said to be \textit{finite-memory} (FM) if it can be encoded by a Mealy machine, i.e., an automaton with outputs along its edges.
We can include randomisation in the initialisation, outputs and updates (i.e., transitions) of the Mealy machine.
Formally, a \textit{stochastic Mealy machine} of $\playerI$ is a tuple $\mealy = \mealyTuple$, where $\mealyStateSpace$ is a finite set of memory states, $\mealyDistInit\in\dist{M}$ is an initial distribution, $\nextmove\colon \mealyStateSpace\times \concurStateSpace\to \dist{\concurActionSpaceI}$ is the (stochastic) next-move function and $\update\colon \mealyStateSpace\times\concurStateSpace\times\concurActionSpaceI\to \dist{\mealyStateSpace}$ is the (stochastic) update function.

Before we explain how to define the strategy induced by a Mealy machine, let us first describe how these machines work.
Fix a Mealy machine $\mealy = \mealyTuple$.
Let $\concurState_0\in \concurStateSpace$.
At the start of a play, an initial memory state $\mealyState_0$ is selected randomly following $\mealyDistInit$.
Then, at each step of the play, the action $\concurActionI_k$ of $\playerI$ is chosen following the distribution $\nextmove(\mealyState_k, \concurState_k)$, whereas the action $\concurActionIAdv$ is independently chosen according to the strategy of $\player{3-i}$.
The memory state $\mealyState_{k+1}$  is then randomly updated following the distribution $\update(\mealyState_k, \concurState_k, \concurAction_k)$ and the game state $\concurState_{k+1}$ is chosen following the distribution $\concurTrans(\concurState_k, \concurAction_k)$, both choices being made independently.

Let us now explain how a strategy can be derived from a Mealy
machine. As explained previously, when in a certain memory state $\mealyState\in\mealyStateSpace$ and game state $\concurState\in\concurStateSpace$, the probability of an action $\concurActionI\in \concurActionSpaceI(\concurState)$ being chosen is given by $\nextmove(\mealyState,\concurState)(\concurActionI)$.
Therefore, the probability of choosing the action $\concurActionI\in\concurActionSpaceI$ after some history $\hist = \prefHist\concurState$ (where $\prefHist\in (\concurStateSpace\concurActionSpace)^*$ and $\concurState=\last{\hist}$) is given by the sum, for each memory state $\mealyState\in\mealyStateSpace$, of the probability that $\mealyState$ was reached after $\prefHist$ has taken place (i.e., after $\mealy$ processes $\prefHist$), multiplied by $\nextmove(\mealyState, \concurState)(\concurActionI)$.

To provide a formal definition of the strategy induced by $\mealy$, we first describe the distribution over memory states of $\mealy$ after elements of $(\concurStateSpace\concurActionSpace)^*$ take place (under the strategy induced by $\mealy$).
We formally define this distribution inductively.
Details for the derivation of the inductive formula, which rely on conditional probabilities, are deferred to Appendix~\ref{appendix:probability_memory}.

The distribution $\mealyDist{\emptyword}$ over memory states after the empty word $\emptyword$ (i.e., nothing) has taken place is by definition $\mealyDistInit$.
Assume inductively that we know the distribution $\mealyDist{\prefHist}$ for $\prefHist= \concurState_0\concurAction_0\ldots \concurState_{k-1}\concurAction_{k-1}$.
We explain how to derive $\mealyDist{\prefHist\concurState_k\concurAction_k}$ from $\mealyDist{\prefHist}$ for any state $\concurState_k\in\supp{\concurTrans(\concurState_{k-1}, \concurAction_{k-1})}$ and for any pair of actions $\concurAction_k\in \concurActionSpace(\concurState_k)$.

In general, the choice of an action by $\playerI$ conditions what the predecessor memory states could be.
First, we note that if $\nextmove(\mealyState', \concurState_k)(\concurActionI_k) = 0$ holds for all memory states $\mealyState'\in \supp{\mealyDist{\prefHist}}$, then the action $\concurActionI_k$ is actually never chosen.
We leave this case undefined (the related conditional probabilities are ill-defined) and assume $\concurActionI_k\in\supp{\nextmove(\mealyState', \concurState_k)}$ for some $\mealyState'\in\supp{\mealyDist{\prefHist}}$.
The equation for $\mealyDist{\prefHist\concurState_k\concurAction_k}$ uses the likelihood of being in a memory state knowing that the action $\concurActionI_k$ was chosen, and not $\mealyDist{\prefHist}$ directly.
We have, for any memory state $\mealyState\in\mealyStateSpace$,
\[\mealyDist{w\concurState_k\concurAction_k}(\mealyState) = \frac{\sum_{\mealyState'\in\mealyStateSpace} \mealyDist{\prefHist}(\mealyState')\cdot \update(\mealyState', \concurState_k, \concurAction_k)(\mealyState)\cdot \nextmove(\mealyState', \concurState_k)(\concurActionI_k)}{\sum_{\mealyState'\in\mealyStateSpace} \mealyDist{\prefHist}(\mealyState')\cdot \nextmove(\mealyState', \concurState_k)(\concurActionI_k)}.\]
We remark that this quotient is not well-defined whenever for all $\mealyState'\in\supp{\mealyDist{\prefHist}}$, $\nextmove(\mealyState', \concurState_k)(\concurActionI_k) = 0$, further justifying the distinction above.

Using these distributions, we formally define the (partial) strategy $\stratI^{\mealy}$ induced by the Mealy machine $\mealy = \mealyTuple$ as the strategy $\stratI^{\mealy}\colon \histSet{\game}\to\dist{\concurActionSpaceI}$ such that for all histories $\hist = \prefHist\concurState$, for all actions $\concurActionI\in \concurActionSpaceI(\concurState)$,
\[\stratI^{\mealy}(\hist)(\concurActionI) = \sum_{\mealyState\in\mealyStateSpace}\mealyDist{\prefHist}(\mealyState)\cdot \nextmove(\mealyState,\concurState)(\concurActionI).\]
This strategy is only partially defined because distributions $\mealyDist{\prefHist}$ are not defined for all $\prefHist\in(\concurStateSpace\concurActionSpace)^*$.
All histories for which $\stratI^{\mealy}$ is undefined can be shown to be of the form $\hist\concurAction\hist'$ such that $\stratI^{\mealy}$ is defined for $\hist$ and $\stratI^{\mealy}(\hist)(\concurActionI) = 0$.
Therefore, no matter how the partial definition of $\stratI^{\mealy}$ given above is extended, it does not influence the induced probability distribution over plays involving this strategy.

\smallskip\noindent\textbf{Classifying finite-memory strategies.} In the sequel, we investigate the
relationships between different classes of finite-memory strategies with respect to
expressive power. We classify finite-memory strategies following the type of
stochastic Mealy machines that can induce them.
We introduce a concise notation for each class: we use three-letter acronyms of the form XXX with X $\in \{$D,R$\}$, where the letters, in order, refer to the initialisation, outputs and updates of the Mealy machines, with D and R respectively denoting deterministic and randomised components.
For instance, we will write RRD to denote the class of Mealy machines
that have randomised initialisation and outputs, but deterministic updates.
We also apply this terminology to FM strategies: we will say that an FM
strategy is in the class XXX --- i.e., it is an XXX strategy --- if it is induced by an XXX Mealy machine.

Moreover, in the remainder of the paper, we will abusively identify Mealy machines and
their induced FM strategies. For instance, we will say
that $\mathcal{M}$ is an XXX strategy to mean that $\mathcal{M}$ is an XXX
Mealy machine (thus inducing an XXX strategy). 
As a by-product of this identification, we apply the terminology introduced previously
for strategies to Mealy machines, without explicitly referring to the
strategy they induce.
For instance, we may say a history is consistent with some Mealy machine,
or that two Mealy machines are outcome-equivalent.
Let us note however that we will not use a Mealy machine in lieu of its induced
strategy whenever we are interested in the strategy itself as a function.
This choice lightens notations; the strategy induced by a Mealy machine need not
be introduced unless it is required as a function.

We close this section by commenting on some of the classes, and discuss previous appearances in the literature, under different names.
Pure strategies use no randomisation: hence, the class DDD corresponds to pure FM strategies, which can be represented by Mealy machines
that do not rely on randomisation.

Strategies in the class DRD have been referred to as \textit{behavioural} FM
strategies in~\cite{DBLP:journals/corr/abs-1006-1404}. The name comes from the randomised
outputs, reminiscent of behavioural strategies that output a distribution over
actions after a history. We note that stochastic Mealy machines that induce DRD
strategies are such that their distributions over memory states are Dirac due to the
deterministic initialisation and updates.

Similarly, RDD strategies have been referred to as \textit{mixed} FM
strategies~\cite{DBLP:journals/corr/abs-1006-1404}.
The general definition of a mixed strategy is a distribution over pure strategies:
under a mixed strategy, a player randomly selects a pure strategy at the start of a
play and plays according to it for the whole play. RDD strategies are similar in the
way that the random initialisation can be viewed as randomly selecting some DDD strategy
(i.e., a pure FM strategy) among a \textit{finite} selection of such strategies.

The elements of RRR, the broadest class of FM strategies, have been referred
to as general FM strategies~\cite{DBLP:journals/corr/abs-1006-1404} and stochastic-update
FM strategies~\cite{DBLP:journals/corr/abs-1104-3489,DBLP:journals/lmcs/ChatterjeeKK17}.
The latter name highlights the random nature of updates and insists on the
difference with models that rely on deterministic updates, more common in the literature.

\section{Taxonomy of finite-memory strategies}\label{section:relationships}
In this section, we comment on the relationships between the classes of finite-memory strategies in terms of expressiveness.
We say that a class $\stratClass_1$ of FM strategies is no less expressive than a class $\stratClass_2$ if for all games $\game$, for all FM strategies $\mealy\in\stratClass_2$ in $\game$, one can find some FM strategy $\mealy'\in\stratClass_1$ of $\game$ such that $\mealy$ and $\mealy'$ are outcome-equivalent strategies.
For the sake of brevity, we will say that $\stratClass_2$ is included in $\stratClass_1$, and write $\stratClass_2\subseteq\stratClass_1$.

Figure~\ref{figure:lattice} summarises our results.
Each line representing an inclusion is decorated with a reference to the relevant results.
The strictness
results hold in one-player deterministic games.
In particular, there are no collapses in the diagram in the turn-based setting, which subsumes two-player deterministic turn-based games and Markov decision processes.

Some inclusions follow directly from some classes having more randomisation power than others: a deterministic component can be emulated using Dirac distributions.
For instance, the inclusion $\mathrm{DRD} \subseteq \mathrm{RRD}$ follows from the fact that RRD Mealy machines have both randomised initialisation and outputs whereas DRD ones only have randomised outputs.
The inclusions $\mathrm{RDD} \subseteq \mathrm{DRD}$, $\mathrm{RRR} \subseteq \mathrm{DRR}$ and $\mathrm{RRR} \subseteq \mathrm{RDR}$, which do not follow from such arguments, are covered in Section~\ref{section:inclusions}.

Pure strategies are strictly less expressive than any other class of FM strategies; pure strategies cannot induce any non-Dirac distributions on plays in deterministic one-player games.
Other arguments for the separation of classes of strategies are provided in Section~\ref{section:strictness}.

We close this section by comparing our results with Kuhn's theorem.
Kuhn's theorem asserts that the classes of behavioural strategies and mixed strategies in games of perfect recall share the same expressiveness.
Games of perfect recall have two traits: players never forget the sequence of histories controlled by them that have taken place and they can see their own actions.
In particular, stochastic games of perfect information are a special case of games of perfect recall.
Recall that mixed strategies are distributions over pure strategies.
We comment briefly on the techniques used in the proof of Kuhn's theorem, and compare them with the finite-memory setting.
Let us fix a game $\game=\concurTuple$.

On the one hand, the emulation of mixed strategies with behavioural strategies is performed as follows.
Let $p_i$ be a mixed strategy  of $\playerI$, i.e., a distribution over pure strategies of $\game$.
An outcome-equivalent behavioural strategy $\stratI$ is constructed such that, for all histories $\hist\in\histSet{\game}$ and actions $\concurActionI\in \concurActionSpaceI(\last{\hist})$, the probability $\stratI(\hist)(\concurActionI)$ is defined as
\[\frac {p_i(\{\stratAltI \text{ pure strategy}\mid \stratAltI \text{ consistent with } \hist\text{ and }\stratAltI(\hist)=\concurActionI\})}
  {p_i(\{\stratAltI \text{ pure strategy}\mid \stratAltI \text{ consistent with } \hist\})}.\]
In the finite-memory case, similar ideas can be used to show that $\mathrm{RDD}\subseteq\mathrm{DRD}$.
In the proof of Theorem~\ref{theorem:mixed:behavioural}, from some RDD strategy (i.e., a so-called mixed FM strategy), we construct a DRD strategy (i.e., a so-called behavioural FM strategy) that keeps track of the finitely many pure FM strategies that the RDD strategy mixes and that are consistent with the current history.
An adaption of the quotient above is used in the next-move function of the DRD strategy.

On the other hand, the emulation of behavioural strategies by mixed strategies exploits the fact that mixed strategies may randomise over \textit{infinite} sets.
In a finite-memory setting, the same techniques cannot be applied.
As a consequence, the class of RDD strategies is strictly included in the class of DRD strategies.
In a certain sense, one could say that Kuhn's theorem only partially holds in the case of FM strategies.

\section{Non-trivial inclusions}\label{section:inclusions}

This section covers the non-trivial inclusions that are asserted in the lattice of Figure~\ref{figure:lattice}.
The structure of this section is as follows.
Section~\ref{subsection:mixed_included_behavioural} covers the inclusion $\mathrm{RDD}\subseteq\mathrm{DRD}$.
The inclusion $\mathrm{RRR}\subseteq\mathrm{DRR}$ is presented in Section~\ref{subsection:RRR_vs_DRR}.
Finally, we close this section by proving the inclusion $\mathrm{RRR}\subseteq\mathrm{RDR}$ in Section~\ref{subsection:RRR_vs_RDR}.

\subsection{Simulating RDD strategies with DRD ones}\label{subsection:mixed_included_behavioural}
In this section, we focus on the classes RDD and DRD.
We prove that for all RDD strategies in any game, one can find some outcome-equivalent DRD strategy (Theorem~\ref{theorem:mixed:behavioural}).
Let us note that the converse inclusion is not true, and this discussion is relegated to Section~\ref{subsection:mixed_different_behavioural}.
The construction provided in the proof of Theorem~\ref{theorem:mixed:behavioural} yields a DRD strategy that has a state space of size exponential in the size of the state space of the original RDD strategy.
We complement Theorem~\ref{theorem:mixed:behavioural} by proving that there are some RDD strategies for which this exponential blow-up in the number of states is necessary for any outcome-equivalent DRD strategy (Lemma~\ref{lemma:mixed:behavioural:bound}).
We show that this blow-up is unavoidable in both deterministic turn-based two-player games and MDPs.

Let $\game = \concurTuple$ be a game.
Fix an RDD strategy $\mealy = \mealyTuple$ of $\playerI$.
Let us sketch how to emulate $\mealy$ with a DRD strategy $\mealyAlt = \mealyTupleInStAlt$ built with a subset construction-like approach.
The memory states of $\mealyAlt$ are functions $f\colon\supp{\mealyDistInit}\to \mealyStateSpace\cup\{\bot\}$.
A memory state $f$ is interpreted as follows.
For all initial memory states $\mealyState_0\in\supp{\mealyDistInit}$, we have $f(\mealyState_0) = \bot$ if the history seen up to now is not consistent with the pure FM strategy $(\mealyStateSpace, \mealyState_0, \nextmove, \update)$, and otherwise $f(\mealyState_0)$ is the memory state reached in the same pure FM strategy after processing the current history.
Updates are naturally derived from these semantics.

Using this state space and update scheme, we can compute the likelihood of each memory state of the mixed FM strategy $\mealy$ after some sequence $\prefHist\in (\concurStateSpace\concurActionSpace)^*$ has taken place.
Indeed, we keep track of each initial memory state from which it was possible to be consistent with $\prefHist$, and, for each such initial memory state $\mealyState_0$, the memory state reached after $\prefHist$ was processed starting in $\mealyState_0$.
Therefore, this likelihood can be inferred from $\mealyDistInit$; the probability of $\mealy$ being in $\mealyState\in\mealyStateSpace$ after $\prefHist$ has been processed is given by the (normalised) sum of the probability of each initial memory state $\mealyState_0\in\supp{\mealyDistInit}$ such that $f(\mealyState_0) = \mealyState$.

The definition of the next-move function of $\mealyAlt$ is directly based on the distribution over states of $\mealy$ described in the previous paragraph, and ensures that the two strategies select actions with the same probabilities at any given state.
For any action $\concurActionI\in\concurActionSpaceI(\concurState)$, the probability of $\concurActionI$ being chosen in game state $\concurState$ and in memory state $f$ is determined by the probability of $\mealy$ being in some memory state $\mealyState$ such that $\nextmove(\mealyState,\concurState) = \concurActionI$, where this probability is inferred from $f$.

Intuitively, we postpone the initial randomisation and instead randomise at each step in an attempt of replicating the initial distribution in the long run.
In the sequel, we formalise the DRD strategy outlined above and prove its outcome-equivalence with the RDD strategy it is based on.

\begin{theorem}\label{theorem:mixed:behavioural}
  Let $\game = \concurTuple$ be a game.
  Let $\mealy = \mealyTuple$ be an RDD strategy of $\playerI$.
  There exists a DRD strategy $\mealyAlt = \mealyTupleInStAlt$ such that $\mealyAlt$ and $\mealy$ are outcome-equivalent.
\end{theorem}
\begin{proof}
  We formalise the strategy described above.
  Let us write $\mealyStateSpace_0$ for the support of the initial distribution $\mealyDistInit$ of $\mealy$.
  We define the set of memory states $\mealyStateSpaceAlt$ to be the set of functions $\mealyStateSpace_0\to\mealyStateSpace\cup\{\bot\}$.
  The initial memory state of $\mealyAlt$ is given by the identity function $\mealyStateInitAlt\colon \mealyState_0\mapsto \mealyState_0$ over $\mealyStateSpace_0$.
  The update function $\updateAlt$ is as follows.
  For any $f\in\mealyStateSpaceAlt$, any $\concurState\in\concurStateSpace$ and $\concurAction\in \concurActionSpace(\concurState)$, we let $\updateAlt(f, \concurState, \concurAction)$ be the function $f'$ such that for all $\mealyState_0\in \mealyStateSpace_0$, we have
  \[f'(\mealyState_0) = \begin{cases}
      \update(f(\mealyState_0), \concurState, \concurAction) & \text{ if } f(\mealyState_0)\in \mealyStateSpace\text{ and } \nextmove(f(\mealyState_0), \concurState) = \concurActionI \\
      \bot & \text{ otherwise.}
    \end{cases}\]
  Whenever we perform an update of the memory, we refine our knowledge on what the initial memory state could have been according to the actions selected by $\playerI$ prior to the update.
  This refinement proceeds by mapping to $\bot$ any initial memory states $\mealyState_0$ such that the played action would not have been selected in the memory state $f(\mealyState_0)\in M$, effectively removing $\mealyState_0$ from the set of initial memory states from which we could have started.

  The next-move function $\nextmoveAlt$ is defined as follows: for any memory state $f\in\mealyStateSpaceAlt$ and $\concurState\in\concurStateSpace$, we let $\nextmoveAlt(f, \concurState)$ be arbitrary if $f$ maps $\bot$ to all memory states, and otherwise $\nextmoveAlt(f, \concurState)$ is the distribution over $\concurActionSpaceI$ such that, for all $\concurActionI\in\concurActionSpaceI(\concurState)$, we have
  \[\nextmoveAlt(f, \concurState)(\concurActionI) = \sum_{\substack{\mealyState_0\in \mealyStateSpace_0 \\ \nextmove(f(\mealyState_0), \concurState) =\concurActionI}}\frac{\mealyDistInit(\mealyState_0)}{\sum_{\mealyState_0'\in f^{-1}(\mealyStateSpace)}\mealyDistInit(\mealyState_0')}.\]

  We note that the memory state $f\in\mealyStateSpaceAlt$ mapping $\bot$ to all initial memory states is only reached whenever a history inconsistent with $\mealy$ has taken place under $\mealy$.
  Thanks to Lemma~\ref{lemma:prelim:consistent_equivalence}, we need not take in account histories inconsistent with $\mealy$ to establish the outcome-equivalence of $\mealy$ and $\mealyAlt$.
  This explains why the next-move function is left arbitrary in that case.

  We now show that $\mealy$ and $\mealyAlt$ are outcome-equivalent via Lemma~\ref{lemma:prelim:consistent_equivalence}.
  To this end, we first show a relation, for each $\prefHist\in (\concurStateSpace\concurActionSpace)^*$ consistent with $\mealy$, between the distribution $\mealyDist{\prefHist}\in\dist{\mealyStateSpace}$ over the memory states of $\mealy$ after processing $\prefHist$ and the function $f_\prefHist$ reached after $\mealyAlt$ reads $\prefHist$ (recall that for a DRD strategy, the distribution over its states after processing $\prefHist$ is a Dirac distribution).
  Formally, this relation is as follows: for any $\prefHist\in (\concurStateSpace\concurActionSpace)^*$ consistent with $\mealy$ and any memory state $\mealyState\in\mealyStateSpace$, we have
  \begin{equation}\label{equation:mixed:behavioural:1}
    \mealyDist{\prefHist}(\mealyState) = \frac{\sum_{\mealyState_0\in f_\prefHist^{-1}(\mealyState)}\mealyDistInit(\mealyState_0)}{\sum_{\mealyState_0\in f_{\prefHist}^{-1}(\mealyStateSpace)}\mealyDistInit(\mealyState_0)}.
  \end{equation}
In the above, $f_{\prefHist}^{-1}(\mealyStateSpace)$ is the set of potential initial $\mealyState_0\in \mealyStateSpace_0$ of $\mealy$ that are compatible with $\prefHist$ taking place.
This equation intuitively expresses that $\mealyAlt$ accurately keeps track of the current distribution over memory states of $\mealy$ along a play. A corollary of the above is that whenever we follow histories consistent with $\mealy$, we are assured to never reach the memory state of $\mealyAlt$ that assigns $\bot$ to all states in $\mealyStateSpace_0$.

  We prove Equation~\eqref{equation:mixed:behavioural:1} with an inductive argument.
  The case of $\prefHist=\varepsilon$ is trivial: by definition $\mealyDist{\varepsilon} = \mealyDistInit$ and $f_\varepsilon$ is the identity function over $\mealyStateSpace_0$.
  Now, let us assume that Equation~\eqref{equation:mixed:behavioural:1} holds for $\prefHist'\in (\concurStateSpace\concurActionSpace)^*$ consistent with $\mealy$, and let us prove it for $\prefHist = \prefHist'\concurState\concurAction$ consistent with $\mealy$.

  When writing relations between $\mealyDist{w'}$ and $\mealyDist{w}$ in the remainder of the proof, we adopt notation slightly different to Section~\ref{section:preliminaries}.
  In this case, the update function $\update$ and next-move $\nextmove$ of $\mealy$ are deterministic.
  Thus, instead considering sums weighted by Dirac distributions, we only sum over relevant states for clarity.

  First, we remark that it may be the case that $f_{\prefHist}^{-1}(\mealyStateSpace) \neq f_{\prefHist'}^{-1}(\mealyStateSpace)$.
  In light of this, we must take care not to have $f_{\prefHist}^{-1}(\mealyStateSpace)=\emptyset$, in which case the denominator of the right-hand side of Equation~\eqref{equation:mixed:behavioural:1} evaluates to zero.
  From the definition of $\updateAlt$, it follows that $f_{\prefHist}^{-1}(\mealyStateSpace)$ is formed of the memory elements $\mealyState_0\in f_{\prefHist'}^{-1}(\mealyStateSpace)$ such that $\nextmove(f_{\prefHist'}(\mealyState_0), \concurState) =\concurActionI$.
  We know that $\prefHist = \prefHist'\concurState\concurAction$ is consistent with $\mealy$.
  This implies there is some $\mealyState\in\mealyStateSpace$ such that $\nextmove(\mealyState,\concurState) = \concurActionI$ and $\mealyDist{\prefHist'}(\mealyState)> 0$.
  From the inductive hypothesis (Equation~\eqref{equation:mixed:behavioural:1} with $\prefHist'$), we obtain that there is some $\mealyState_0\in f_{\prefHist'}^{-1}(\mealyStateSpace)$ such that $f_{\prefHist'}(\mealyState_0) = \mealyState$, otherwise the right-hand side of the equation would evaluate to zero.
  The equality $f_{\prefHist'}(\mealyState_0) = \mealyState$ implies $\mealyState_0\in f_{\prefHist}^{-1}(\mealyStateSpace)$, thus we have shown that $\mealyStateSpace_0(\mealyState)$ is non-empty.

  Now that we have shown that Equation~\eqref{equation:mixed:behavioural:1} is well-defined for $\prefHist$, we move on to its proof.
  Let us write $\nextmove(\cdot, \concurState)^{-1}(\concurActionI)$ for the set $\{\mealyState\in\mealyStateSpace\mid \nextmove(\mealyState, \concurState) = \concurActionI\}$.
  By definition, we have
  \[\mealyDist{\prefHist}(\mealyState) =
    \frac{\sum_{\substack{m'\in \nextmove(\cdot, \concurState)^{-1}(\concurActionI)\\ \update(\mealyState', \concurState, \concurAction) =\mealyState}}\mealyDist{\prefHist'}(\mealyState')}{\sum_{\mealyState'\in \nextmove(\cdot, \concurState)^{-1}(\concurActionI)}\mealyDist{\prefHist'}(\mealyState')}.\]

  For the numerator, we obtain from the inductive hypothesis that
  \begin{align*}
    \sum_{\substack{m'\in \nextmove(\cdot,\concurState)^{-1}(\concurActionI)\\ \update(m',\concurState, \concurAction) = m}}\mealyDist{\prefHist'}(m') &=  \sum_{\substack{m'\in \nextmove(\cdot,\concurState)^{-1}(\concurActionI)\\ \update(m',\concurState, \concurAction) = m}} \sum_{\mealyState_0 \in f_{\prefHist'}^{-1}(m')}\frac{\mealyDistInit(\mealyState_0)}{\sum_{\mealyState_0'\in f_{\prefHist'}^{-1}(\mealyStateSpace)}\mealyDistInit(\mealyState_0')} \\
   &= \sum_{\mealyState_0 \in f_{\prefHist}^{-1}(\mealyState)}\frac{\mealyDistInit(\mealyState_0)}{\sum_{\mealyState_0'\in f_{\prefHist'}^{-1}(\mealyStateSpace)}\mealyDistInit(\mealyState_0')}.
  \end{align*}
  To derive the simple sum from the double sum, we rely on the fact that $f_\prefHist(\mealyState_0) = \mealyState$ holds if and only if $\update(f_{\prefHist'}(\mealyState_0), \concurState, \concurAction) = \mealyState$ and $\nextmove(f_{\prefHist'}(\mealyState_0), \concurState)=\concurActionI$, by definition of $\updateAlt$.

  For the denominator, we obtain from the inductive hypothesis,
  \begin{align*}
    \sum_{m'\in \nextmove(\cdot, \concurState)^{-1}(\concurActionI)}
    \mealyDist{\prefHist'}(\mealyState')
    & =  \sum_{\mealyState'\in \nextmove(\cdot, \concurState)^{-1}(\concurActionI)}
      \sum_{\mealyState_0 \in f_{\prefHist'}^{-1}(\mealyState')}
      \frac{\mealyDistInit(\mealyState_0)}{\sum_{\mealyState_0'\in f_{\prefHist'}^{-1}(\mealyStateSpace)}\mealyDistInit(\mealyState_0')} \\
    & = \sum_{\mealyState_0 \in f_{\prefHist}^{-1}(\mealyStateSpace)}
      \frac{\mealyDistInit(\mealyState_0)}{\sum_{\mealyState_0'\in f_{\prefHist'}^{-1}(\mealyStateSpace)}\mealyDistInit(\mealyState_0')}.
  \end{align*}
  The last equality is a consequence of the definition of $\updateAlt$: recall that $f_{\prefHist}^{-1}(\mealyStateSpace)$ consists of the elements $\mealyState_0$ of $\mealyStateSpace_0(\prefHist')$ such that $\nextmove(f_{w'}(\mealyState_0), \concurState) =\concurActionI$. By combining the two equations above, we immediately obtain Equation~\eqref{equation:mixed:behavioural:1}, ending the inductive argument.

  We now establish the outcome-equivalence of $\mealy$ and $\mealyAlt$.
  Let $\hist = \prefHist\concurState\in \histSet{\game}$ be a history of $\game$ consistent with $\mealy$.
  Let $\concurActionI\in \concurActionSpaceI(\concurState)$ be an action enabled in $\concurState$.
  The probability of $\concurActionI$ being played after $\hist$ under $\mealy$ is given by the weighted sum $\sum_{\mealyState\in \nextmove(\cdot, \concurState)^{-1}(\concurActionI)}\mealyDist{\prefHist}(\mealyState)$.
  Under $\mealyAlt$, the probability of $\concurActionI$ being played is $\nextmoveAlt(f_{\prefHist}, \concurState)(\concurActionI)$.
  It follows from Equation~\eqref{equation:mixed:behavioural:1} that these two probabilities coincide.
  We have shown the outcome-equivalence of strategies $\mealy$ and $\mealyAlt$, ending the proof.
\end{proof}

The construction of a DRD strategy provided in the proof of Theorem~\ref{theorem:mixed:behavioural} leads to an exponential blow-up of the memory state space.
For an RDD strategy $\mealy = \mealyTuple$, we have constructed an outcome-equivalent DRD strategy with a state space consisting of functions $\supp{\mealyDistInit}\to \mealyStateSpace\cup\{\bot\}$, therefore with a state space of size $(|\mealyStateSpace|+1)^{|\supp{\mealyDistInit}|}$.
In the upcoming lemma, we state that an exponential blow-up in the number of initial memory states cannot be avoided in general, even in the turn-based setting.

\begin{lemma}\label{lemma:mixed:behavioural:bound}
  Let $k\in \IN_0$. There exist a two-player turn-based deterministic game (respectively an MDP) $\game_k$ with $k+2$ states, $4k+2$ transitions, $k+2$ actions, and an RDD strategy $\mealy_k$ of $\playerOne$ with $k$ states such that any outcome-equivalent DRD strategy must have at least $2^k-1$ states.
\end{lemma}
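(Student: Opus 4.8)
The plan is to exhibit, for each $n$, a single arena together with an RDD strategy $\mathcal{M}_n$ that mixes $n$ \emph{identity} pure strategies, arranged so that every one of the $2^n-1$ non-empty subsets of $\{1,\dots,n\}$ arises as the set of still-consistent pure components after some reachable history, and so that these subsets are pairwise distinguished by the distribution $\mathcal{M}_n$ prescribes at one distinguished state. Concretely, I take actions $a_0,a_1,\dots,a_n$ and states $h,u_1,\dots,u_n,s^*$, with $u_1,\dots,u_n,s^*\in S_1$. The strategy $\mathcal{M}_n$ has memory $M=\{m_1,\dots,m_n\}$, uniform $\mu_\init$, the identity update $\update(m_j,\cdot,\cdot)=m_j$ on all reachable transitions, and next-move function $\nextmove(m_j,u_k)=a_0$ when $j\neq k$, $\nextmove(m_j,u_j)=a_j$, and $\nextmove(m_j,s^*)=a_j$. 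Thus the component started in $m_j$ keeps its identity forever, plays the common action $a_0$ in every gadget $u_k$ except its own, and plays its private action $a_j$ at $s^*$. The two arenas differ only in who routes the play among the gadgets: in the two-player deterministic game $h\in S_2$ and $\player_2$ moves, via $a_0,\dots,a_n$, to $s^*$ or to any $u_k$ (each $u_k$ returning to $h$); in the MDP, $h\in S_1$ carries a single action $a_0$ (with $\nextmove(m_j,h)=a_0$) whose transition is uniform over $\{u_1,\dots,u_n,s^*\}$. Setting $A(s^*)=\{a_0,\dots,a_n\}$ with self-loops, both arenas have exactly $n+2$ states, $4n+2$ transitions and $n+1$ actions.

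Next I would show every non-empty $T$ is realisable. Fix such a $T$ and route the play --- through $\player_2$ in the game, or along a positive-probability sequence of random outcomes in the MDP --- so that it visits $u_k$ exactly for $k\notin T$, plays $a_0$ there, and finally reaches $s^*$. Playing $a_0$ at $u_k$ is consistent with $\mathcal{M}_n$, since every component of index $j\neq k$, in particular those in the non-empty $T$, prescribes $a_0$ there; and this action eliminates precisely the component of index $k$. As the identity updates never merge nor resurrect components, the resulting history $h_T$ ends in $s^*$, is consistent with $\mathcal{M}_n$, and has set of consistent pure strategies exactly $\{m_j\mid j\in T\}$.

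For the counting argument I would compute that the distribution $\mathcal{M}_n$ prescribes at $s^*$ after $h_T$ assigns probability $\tfrac{\mu_\init(m_j)}{\sum_{i\in T}\mu_\init(m_i)}$ to each $a_j$ with $j\in T$, so its support is exactly $\{a_j\mid j\in T\}$ and distinct $T$ yield distinct distributions. Now let $\mathcal{B}=(B,b_\init,\nextmovealt,\updatealt)$ be any DRD strategy outcome-equivalent to $\mathcal{M}_n$. Since $\mathcal{B}$ has deterministic initialisation and updates, reading the word underlying $h_T$ leaves it in a single memory state $b_T\in B$, and the distribution it prescribes at $s^*$ is exactly $\nextmovealt(b_T,s^*)$. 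Because $h_T$ is consistent with $\mathcal{M}_n$, Lemma~\ref{lemma:prelim:consistent_equivalence} forces $\nextmovealt(b_T,s^*)$ to equal the distribution above, whose support is $\{a_j\mid j\in T\}$. Hence $b_T=b_{T'}$ would force the supports, and therefore $T$ and $T'$, to coincide; the $2^n-1$ states $b_T$ are pairwise distinct, giving $|B|\ge 2^n-1$. (Note that the canonical DRD of Theorem~\ref{theorem:mixed:behavioural} already reaches these $2^n-1$ states here, but the argument above applies to \emph{every} outcome-equivalent DRD strategy.)

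The delicate point, and the main obstacle, is realising all $2^n-1$ subsets as consistency sets within this minimal state budget. A single forced kill-or-keep gadget per index cannot simultaneously preserve one component and selectively eliminate another at a single state with memoryless components. The construction sidesteps this by delegating the choice of \emph{which} indices to confront to the router ($\player_2$ or the random transition), so that only the uniform action $a_0$ --- consistent with all surviving components --- is ever needed, while the router simply avoids the gadgets of the indices we wish to keep. Verifying that the MDP attains each subset with positive probability (rather than under adversarial control) and that both arenas meet the exact size bounds is the part requiring the most care.
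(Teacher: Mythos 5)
Your proposal is correct and follows essentially the same approach as the paper: a hub state (owned by $\player_2$, or with a uniform random transition in the MDP) routing among $n$ gadget states and a distinguished state, an RDD strategy mixing $n$ memoryless pure components each revealing a private action at the distinguished state, and histories that eliminate a chosen subset of components by playing the common action in the corresponding gadgets, so that the $2^n-1$ distinct supports at the distinguished state force $2^n-1$ DRD memory states. The only differences are cosmetic relabelings (your $h$, $u_k$, $a_0$ are the paper's $t$, $s_k$, $b$).
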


\begin{proof}
  We construct a two-player turn-based deterministic game $\game_k= (\concurStateSpace_k, \concurActionSpaceOne_k, \concurActionSpaceTwo_k, \concurTrans_k)$ as follows.
  We let $\concurStateSpace_k = \{\concurState_j\mid 1\leq j\leq k\}\cup\{t, \concurState^\star\}$.
  The sets of actions, common to the two players, are $\arenaActionSpace_k\coloneqq \concurActionSpaceOne_k = \concurActionSpaceTwo_k =\{a_i\mid 1\leq i\leq k\}\cup \{b, \bot\}$.
  All states besides $t$ are controlled by $\playerOne$ in the following sense.
  For all $1\leq j\leq k$, we let $\concurActionSpaceOne_k(s_j) = \{a_j, b\}$ and $\concurActionSpaceTwo_k(s_j) = \{\bot\}$.
  Next, we let $\concurActionSpaceOne_k(s^\star) = \{a_j\mid 1\leq j\leq k\}$ and $\concurActionSpaceTwo_k(s^\star) = \{\bot\}$.
  Finally, for state $t$, we have $\concurActionSpaceOne_k(t) = \{\bot\}$ and $\concurActionSpaceOne_k(t) = \arenaActionSpace_k\setminus\{\bot\}$.
  
  We define the deterministic transition function $\concurTrans_k$ as a function $S_k\times \concurActionSpace_k\to S_k$ (instead of dealing with Dirac distributions over successor states).
  For each $j\in\{1, \ldots, k\}$, all transitions from $s_j$ move back to $t$, i.e., $\concurTrans_k(s_j, a_j, \bot) = \concurTrans_k(s_j, b, \bot) = t$.
  In state $t$, we set for all $j\in\{1, \ldots, k\}$, $\concurTrans_k(t, \bot, a_j) = s_j$ and $\concurTrans_k(t, \bot, b) = s^\star$.
  In state $s^\star$, for all $j\in\{1, \ldots, k\}$, the action $a_j$ labels a self-loop, i.e., we have $\concurTrans_k(s^\star, a_j, \bot) = s^\star$.
  We illustrate the game $\game_3$ in Figure~\ref{figure:game:mixed:behavioural:bound}.
  We omit $\bot$ actions from edge labels to lighten the figure.

  \begin{figure}
    \begin{center}
      \begin{tikzpicture}
        \begin{scope}
          \node[draw, square, minimum size=1cm] (t)  {$t$};
          \node[state, left=of t] (s2)  {$s_2$};
          \node[state, above=of t] (s1)  {$s_1$};
          \node[state, below=of t] (s3)  {$s_3$};
          \node[state, right=of t] (star)  {$s^\star$};
\path[->] (t) edgenode[align=center, left] {$a_1$} (s1);
          \path[->] (s1) edge[bend right=37]node[align=center, left] {$a_1$} (t);
          \path[->] (s1) edge[bend left=37]node[align=center, right] {$b$} (t);
          \path[->] (t) edgenode[align=center, above] {$a_2$} (s2);
          \path[->] (s2) edge[bend right=37]node[align=center, below] {$a_2$} (t);
          \path[->] (s2) edge[bend left=37]node[align=center, above] {$b$} (t);
          \path[->] (t) edgenode[align=center, left] {$a_3$} (s3);
          \path[->] (s3) edge[bend right=37]node[align=center, right] {$a_3$} (t);
          \path[->] (s3) edge[bend left=37]node[align=center, left] {$b$} (t);
          \path[->] (t) edgenode[align=center, above] {$b$} (star);
          \path[->] (star) edge[loop above]node[align=center, above] {$a_1$} (star);
          \path[->] (star) edge[loop right]node[align=center, right] {$a_2$} (star);
          \path[->] (star) edge[loop below]node[align=center, below] {$a_3$} (star);
        \end{scope}
      \end{tikzpicture}
      \caption{The game $\game_3$ from the proof of Lemma~\ref{lemma:mixed:behavioural:bound}. Circles and squares respectively represent states controlled by $\playerOne$ and $\playerTwo$.}\label{figure:game:mixed:behavioural:bound}
    \end{center}
  \end{figure}
  
  We define an RDD strategy $\mealy_k = (\mealyStateSpace, \mealyDistInit, \nextmove, \update)$ of $\playerOne$ as follows.
  We let $\mealyStateSpace = \{1, \ldots, k\}$, and $\mealyDistInit$ is taken to be the uniform distribution over $\mealyStateSpace$.
  The memory update function is taken to be trivial: we set $\update(\mealyState, \concurState, \concurAction) = \mealyState$ for all $\mealyState\in\mealyStateSpace$, $\concurState\in\concurStateSpace_k$ and $\concurAction\in\concurActionSpace_k$.
  For each memory state $\mealyState\in\mealyStateSpace$, we let $\nextmove(\mealyState, \concurState_m) = \nextmove(\mealyState, s^\star) = \arenaAction_m$ and, for all $j\neq \mealyState$, we let $\nextmove(\mealyState, \concurState_j) = b$, and we let $\nextmove(\mealyState, t)=\bot$.
  In $\mealy$, once the initial state is decided, it no longer changes.
  In the memory state $\mealyState\in\mealyStateSpace$, the strategy prescribes action $\arenaAction_\mealyState$ in the states $\concurState_\mealyState$ and $\concurState^\star$, and in states $\concurState_j$ with $j\neq\mealyState$, the strategy prescribes action $b$.

  We now establish that all DRD strategies that are outcome-equivalent to $\mealy$ must have at least $2^k-1$ memory states.
  Let $\mealyAlt = \mealyTupleInStAlt$ be one such FM strategy.
  We give a lower bound on $|\mealyStateSpaceAlt|$ by showing that there must be at least $2^k-1$ distinct distributions of the form $\nextmoveAlt(\cdot, \concurState^\star)$.
  
  Let $E = \{j_1, \ldots, j_\ell\} \subsetneq \mealyStateSpace$ be a proper subset of $\mealyStateSpace$.
  Consider the history ($\bot$ actions are omitted and parentheses are provided to improve readability) $\hist_E = (t\, \arenaAction_{j_1}\,\concurState_{j_1}\,b)(t\,\arenaAction_{j_2}\,\concurState_{j_2}\,b)\ldots (t\,\arenaAction_{j_\ell}\,\concurState_{j_\ell}\,b)t\,b\,\concurState^\star$.
  Let $\mealyState\in E$.
  We see that along the history $\hist_E$, the action $b$ is used in state $\concurState_\mealyState$.
  Therefore, $\hist_E$ is not consistent with the pure FM strategy $(\mealyStateSpace, \mealyState, \nextmove, \update)$ derived from $\mealy$ by setting its initial state to $\mealyState$.
  Similarly, we see that for $m\notin E$, the history $\hist_E$ is consistent with the pure FM strategy $(\mealyStateSpace, \mealyState, \nextmove, \update)$.
  Thus, the set of actions that can be played after $\hist_E$ when following $\mealy_n$ is exactly the set $\{\arenaAction_\mealyState\mid \mealyState\in\mealyStateSpace\setminus E\}\neq\emptyset$.
  Due to the deterministic initialisation and updates of DRD strategies, there must be some $\mealyStateAlt_E\in \mealyStateSpaceAlt$ such that $\supp{\nextmoveAlt(\mealyStateAlt_E, s^\star)} = \{a_m\mid m\in M\setminus E\}$.
  Necessarily, we must have $\supp{\nextmoveAlt(\mealyStateAlt_E, s^\star)}\neq \supp{\nextmoveAlt(\mealyStateAlt_{E'}, s^\star)}$ whenever $E\neq E'$, hence $\mealyStateAlt_E\neq \mealyStateAlt_{E'}$. 
  Consequently, we must have at least one memory state in $\mealyAlt$ per proper subset of $\mealyStateSpace$, i.e., $|\mealyStateSpaceAlt|\geq 2^k-1$.

  The proof of the existence of a suitable MDP remains.
  We explain how to adapt the deterministic game $\game_k$.
  To change $\game_k$ to a suitable MDP $\game_k'$, keep the same state space and remove all actions of $\playerTwo$.
  All transitions are left unchanged except the transitions from state $t$, which are altered as follows.
  When using $\bot$ in $t$, we let there be a uniform probability of reaching a state other than $t$ in $\game_k'$.
  The only (formal) change to be made to $\mealy_k$ to obtain a suitable RDD strategy $\mealy_k'$ of $\game_k'$ is to remove the actions of $\playerTwo$ from updates.

  By performing these changes, we can reuse the argument above for the two-player case to conclude that any DRD strategy that is outcome-equivalent to $\mealy_k'$ in $\game_k'$ requires at least $2^k-1$ memory states.
  This concludes our explanation of how to adapt the game and strategy above to the context of MDPs.
\end{proof}

\subsection{Simulating RRR strategies with DRR ones}\label{subsection:RRR_vs_DRR}
In this section, we establish that DRR strategies are as expressive as RRR strategies, i.e., randomness in the initialisation can be removed.
We outline the ideas behind the construction of a DRR strategy that is outcome-equivalent to a given RRR strategy.
The general idea is to simulate the behaviour of the RRR strategy at the start of the play using a new initial memory state and then move back into the RRR strategy we simulate.

We substitute the random selection of an initial memory element in two stages.
To ensure the first action is selected in the same way under both the supplied strategy and the strategy we construct, we rely on randomised outputs.
The probability of selecting an action $\concurActionI$ in a given state $\concurState$ of the game in our new initial memory state is given as the sum of selecting action $\concurActionI$ in state $\concurState$ in each memory state $\mealyState$ weighed by the initial probability of $\mealyState$.

We then leverage the stochastic updates to simulate that we had been using the original RRR strategy from the start.
To achieve this, we base the update function of the constructed Mealy machine on the equations for the update of the distribution over memory states after a some sequence in $\prefHist\in(\concurStateSpace\concurActionSpace)^*$  takes place (denoted by $\mealyDist{\prefHist}$ in Section~\ref{section:preliminaries}).

We now state our expressiveness result and formalise the construction outlined
above.
\begin{theorem}\label{theorem:RRR:DRR}
  Let $\game = \concurTuple$ be a game.
  Let $\mealy = \mealyTuple$ be an RRR strategy of $\playerI$.
  There exists a DRR strategy $\mealyAlt = \mealyTupleInStAlt$ such that $\mealyAlt$ and $\mealy$ are outcome-equivalent, and such that $|\mealyStateSpaceAlt| = |\mealyStateSpace|+1$.
\end{theorem}

\begin{proof}
  Let us define $\mealyAlt = \mealyTupleInStAlt$ as follows.
  Let $\mealyStateInitAlt$ be such that $\mealyStateInitAlt\notin M$. We set $\mealyStateSpaceAlt = \mealyStateSpace\cup\{\mealyStateInitAlt\}$.
  We let $\updateAlt$ and $\nextmoveAlt$ coincide with $\update$ and $\nextmove$ over $\mealyStateSpace\times\concurStateSpace\times\concurActionSpace$ and $\mealyStateSpace\times\concurStateSpace$ respectively (for the update function, we identify distributions over $\mealyStateSpace$ to distributions over $\mealyStateSpaceAlt$ that assign probability zero to $\mealyStateInitAlt$).
  It remains to define these two functions over $\{\mealyStateInitAlt\}\times \concurStateSpace\times\concurActionSpace$ and $\{\mealyStateInitAlt\}\times \concurStateSpace$ respectively.

  First, we complete the definition of the memory update function $\updateAlt$.
  Let $\concurState\in\concurStateSpace$ and $\concurAction\in\concurActionSpace$.
  We let $\updateAlt(\mealyStateInitAlt, \concurState,\concurAction)(\mealyStateInitAlt)=0$.
  We assume that there exists some $\mealyState_0\in\mealyStateSpace$ such that $\mealyDistInit(\mealyState_0)>0$  and $\nextmove(\mealyState_0, \concurState)(\concurActionI)> 0$ (i.e., the action $\concurActionI$ has a positive probability of being played in $\concurState$ at the start of a play under the strategy $\mealy$).
  We set, for all $\mealyState\in\mealyStateSpace$,
  \[\updateAlt(\mealyStateInitAlt,\concurState,\concurAction)(\mealyState) =
    \frac{\sum_{\mealyState' \in\mealyStateSpace}\mealyDistInit(\mealyState')\cdot\update(\mealyState',\concurState, \concurAction)(\mealyState)\cdot\nextmove(\mealyState',\concurState)(\concurActionI)}{\sum_{\mealyState' \in\mealyStateSpace}\mealyDistInit(\mealyState')\cdot\nextmove(\mealyState',\concurState)(\concurActionI)}.\]
  Whenever we have $\nextmove(\mealyState_0,\concurState)(\concurActionI)= 0$ for all $\mealyState_0\in\mealyStateSpace$ such that $\mealyDistInit(\mealyState_0)>0$, we let $\updateAlt(\mealyStateInitAlt,\concurState,\concurAction)$ be arbitrary.
  
  For the next-move function $\nextmoveAlt$, we define, for all states $\concurState\in\concurStateSpace$ and actions $\concurActionI\in\concurActionSpaceI(\concurState)$,
  \[\nextmoveAlt(\mealyStateInitAlt, \concurState)(\concurActionI) = \sum_{\mealyState\in\mealyStateSpace}\mealyDistInit(\mealyState)\cdot \nextmove(\mealyState, \concurState)(\concurActionI).\]

  It remains to prove that $\mealy$ and $\mealyAlt$ are outcome-equivalent.
  By Lemma~\ref{lemma:prelim:consistent_equivalence}, it suffices to show that both strategies suggest the same distributions over actions along histories consistent with $\mealy$.
  We provide a proof in two steps.
  First, we consider histories with a single state.
  Second, we show that the distributions over memory states coincide in both Mealy machines after any $\prefHist\in \concurStateSpace\concurActionSpace$ that is consistent with $\mealy$ takes place.
  We conclude from this and the construction of $\mealyAlt$ that $\mealy$ and $\mealyAlt$ map all histories that are consistent with $\mealy$ and have more than one state to the same distribution over actions of $\playerI$, ending the proof.

  We show the first claim above.
  Let $\concurState\in\concurStateSpace$ and $\concurActionI\in\concurActionSpaceI(\concurState)$.
  On the one hand, the probability of the action $\concurActionI$ being played after the history $\concurState$ under $\mealy$ is given by
  \[\sum_{\mealyState\in\mealyStateSpace}\mealyDistInit(\mealyState)\cdot \nextmove(\mealyState, \concurState)(\concurActionI).\]
  On the other hand, the probability of this same action $\concurActionI$ being played after the history $\concurState$ under $\mealyAlt$ is given by $\nextmoveAlt(\mealyStateInitAlt, \concurState)(\concurActionI)$.
  These two probabilities coincide by construction.

  Second, let $\prefHist=\concurState\concurAction\in\concurStateSpace\concurActionSpace$ be consistent with $\mealy$.
  Let $\mealyDist{\prefHist}$ and $\mealyDistAlt{\prefHist}$ denote the distribution over memory states after $\prefHist$ takes place under $\mealy$ and $\mealyAlt$ respectively.
  Fix some $\mealyState\in\mealyStateSpace$, and let us prove that $\mealyDist{\prefHist}(\mealyState) = \mealyDistAlt{\prefHist}(\mealyState)$.
  On the one hand, we have
  \begin{align*}
    \mealyDist{\prefHist}(\mealyState)
    & = \frac{\sum_{\mealyState' \in \mealyStateSpace}\mealyDistInit(\mealyState')\cdot \update(\mealyState', \concurState, \concurAction)(\mealyState)\cdot \nextmove(\mealyState', \concurState)(\concurActionI)}{\sum_{\mealyState' \in \mealyStateSpace}\mealyDistInit(\mealyState')\cdot \nextmove(\mealyState', \concurState)(\concurActionI)} \\
    & = \updateAlt(\mealyStateInitAlt, \concurState, \concurActionI)(\mealyState),\end{align*}
  and on the other hand, we have (because $\mealyStateInitAlt$ is the sole initial state of $\mealyAlt$), 
  \[\mealyDistAlt{\prefHist}(\mealyState) =
    \frac{\updateAlt(\mealyStateInitAlt, \concurState, \concurAction)(\mealyState)\cdot \nextmoveAlt(\mealyStateInitAlt, \concurState)(\concurActionI)}{\nextmoveAlt(\mealyStateInitAlt, \concurState)(\concurActionI)}
    = \updateAlt(\mealyStateInitAlt, \concurState, \concurAction)(\mealyState).\]

  We have shown that $\mealyDist{\prefHist} = \mealyDistAlt{\prefHist}$.
  Furthermore, because $\nextmove$ and $\nextmoveAlt$ agree over $\mealyStateSpace\times \concurStateSpace$, and that $\update$ and $\updateAlt$ agree over $\mealyStateSpace\times\concurStateSpace\times\concurActionSpace$, this equality generalises to all $\prefHist\in(\concurStateSpace\concurActionSpace)^+$ that are consistent with $\mealy$.
  It follows that for any history $\hist\in (\concurStateSpace\concurActionSpace)^+\concurStateSpace$ that is consistent with $\mealy$, the images of $\hist$ by the strategies induced by $\mealy$ and $\mealyAlt$ match.
  We conclude that $\mealy$ and $\mealyAlt$ are outcome-equivalent by Lemma~\ref{lemma:prelim:consistent_equivalence}.
\end{proof}

\subsection{Simulating RRR strategies with RDR ones}\label{subsection:RRR_vs_RDR}

We are concerned in this section with the simulation of RRR strategies by RDR strategies, i.e., with substituting randomised outputs with deterministic outputs.
The idea behind the removal of randomisation in outputs is to simulate said randomisation by means of both stochastic initialisation and updates.
These are used to preemptively perform the random selection of an action, simultaneously with the selection of an initial or successor memory state.

Let $\game = \concurTuple$ be a game and let $\mealy = \mealyTuple$ be an RRR strategy of $\playerI$.
We construct an RDR strategy $\mealyAlt = \mealyTupleAlt$ that is outcome-equivalent to $\mealy$ and such that $|\mealyStateSpaceAlt|\leq |\mealyStateSpace|\cdot |\concurStateSpace|\cdot |\concurActionSpaceI|$.
The state space of $\mealyAlt$ consists of pairs $(\mealyState, \stratI)$ where $\mealyState\in\mealyStateSpace$ and $\stratI\colon\concurStateSpace\to\concurActionSpaceI$ is a pure memoryless strategy of $\playerI$.
To achieve our bound on the size of $\mealyStateSpaceAlt$, we cannot consider all pure memoryless strategies of $\playerI$, as there are exponentially many.
We illustrate how we select pure memoryless strategies to achieve the aforementioned bound through the following example.
We apply the upcoming construction on a DRD strategy (which is a special case of RRR strategies) with a single memory state (i.e., a memoryless randomised strategy).

\begin{example}\label{example:RRR:RDR}
  We consider a game $\game = \concurTuple$ where $\concurStateSpace = \{\concurState_1, \concurState_2, \concurState_3\}$, $\concurActionSpaceOne = \{\arenaAction_1, \arenaAction_2, \arenaAction_3\}$ and all actions are enabled in all states.
  We need not specify $\concurActionSpaceTwo$ and $\concurTrans$ for this example.
  For our construction, we fix an order on the actions of $\game$: $\arenaAction_1<\arenaAction_2<\arenaAction_3$.
  
  Let $\mealy = (\{\mealyState\}, \mealyState, \nextmove, \update)$ be the DRD strategy such that $\nextmove(\mealyState, \concurState_1)$  and $\nextmove(\mealyState, \concurState_2)$ are uniform distributions over $\{\arenaAction_1, \arenaAction_2\}$ and $\concurActionSpaceOne$ respectively,
  and $\nextmove(\mealyState, \concurState_3)$ is defined by $\nextmove(\mealyState, \concurState_3)(\arenaAction_1) = \frac{1}{3}$, $\nextmove(\mealyState, \concurState_3)(\arenaAction_2) = \frac{1}{6}$ and
  $\nextmove(\mealyState, \concurState_3)(\arenaAction_3) = \frac{1}{2}$.

  Figure~\ref{figure:RRR:RDR} illustrates the probability of each action being chosen in each state as the length of a segment.
  Let us write $0 = x_1 < x_2 < x_3 < x_4 < x_5 = 1$ for all of the endpoints of the segments appearing in the illustration.
  For each index $k\in\{1, \ldots, 4\}$, we define a pure memoryless strategy $\strat{k}$ that assigns to each state the action lying in the segment above it in the figure.
  For instance, $\strat{2}$ is such that $\strat{2}(\concurState_1) = \arenaAction_1$ and $\strat{2}(\concurState_2) = \strat{2}(\concurState_3) = \arenaAction_2$.
  Furthermore, for all $k\in\{1, \ldots, 4\}$, the length $x_{k+1}-x_k$ of its corresponding interval denotes the probability of the strategy being chosen during stochastic updates.
    \begin{figure}[htb]
    \begin{center}
      \begingroup
      \def\width{0.1\textwidth}
      \def\widthtwo{0.2\textwidth}
      \def\widththree{0.3\textwidth}
      \def\widthhalf{0.05\textwidth}
      \def\widthandhalf{0.15\textwidth}
      \begin{tikzpicture}[outer sep=0pt, node distance=0pt, minimum height=0.7cm, minimum width=\width, align=center]
        \node[rectangle, draw] (s1) {$\concurState_1$};
        \node[rectangle, draw, minimum width=\widththree, right=of s1] (a1s1) {$a_1$};
        \node[rectangle, draw, minimum width=\widththree, right=of a1s1] (a2s1) {$a_2$};
        \node[rectangle, draw, below=of s1] (s2) {$\concurState_2$};
        \node[rectangle, draw, minimum width=\widthtwo, right=of s2] (a1s2) {$\arenaAction_1$};
        \node[rectangle, draw, minimum width=\widthtwo, right=of a1s2] (a2s2) {$\arenaAction_2$};
        \node[rectangle, draw, minimum width=\widthtwo, right=of a2s2] (a3s2) {$\arenaAction_3$};
        \node[rectangle, draw, below=of s2] (s3) {$\concurState_3$};
        \node[rectangle, draw, minimum width=\widthtwo, right=of s3] (a1s3) {$\arenaAction_1$};
        \node[rectangle, draw, minimum width=\width, right=of a1s3] (a2s3) {$\arenaAction_2$};
        \node[rectangle, draw, minimum width=\widththree, right=of a2s3] (a3s3) {$\arenaAction_3$};
        \node[rectangle, below=of s3] (sig) {$\strat{k}$};
        \node[left sided, minimum width=\widthtwo, right=of sig] (sig1) {$\strat{1}$};
        \node[left sided, minimum width=\width, right=of sig1] (sig2) {$\strat{2}$};
        \node[left sided, minimum width=\width, right=of sig2] (sig3) {$\strat{3}$};
        \node[two sided, minimum width=\widthtwo, right=of sig3] (sig4) {$\strat{4}$};

        \node[rectangle, minimum width=\width, below = of sig] (s0) {};
        \node[rectangle, minimum width=\widthtwo, right = of s0] (proba1) {};
        \node[left of=proba1, node distance=\width] {$x_1=0$};
        \node[rectangle, minimum width=\width, right=of proba1] (proba2) {};
        \node[left of=proba2, node distance=\widthhalf] {$x_2=\frac{1}{3}$};
        \node[rectangle, minimum width=\width, right=of proba2] (proba3) {};
        \node[left of=proba3, node distance=\widthhalf] {$x_3=\frac{1}{2}$};
        \node[rectangle, minimum width=\widthtwo, right=of proba3] (proba4) {};
        \node[left of=proba4, node distance=\width] {$x_4=\frac{2}{3}$};
        \node[right of=proba4, node distance=\width] {$x_5=1$};

        \node[right sided, minimum width=\width, above= of a2s3] {};
        \node[right sided, minimum width=\width, above= of sig3] {};
        \node[right sided, minimum width=\width, right= of a1s1] {};
        \node[left sided, minimum width=\width, left= of a2s1] {};

      \end{tikzpicture}
      \endgroup
      \caption{Representation of cumulative probability of actions under strategy $\mealy$ and derived memoryless strategies.}\label{figure:RRR:RDR}
    \end{center}
  \end{figure}
  
  We construct an RDR strategy $\mealyAlt = \mealyTupleAlt$ that is outcome-equivalent to $\mealy$ in the following way.
  We let $\mealyStateSpaceAlt = \{\mealyState\}\times\{\strat{1}, \strat{2}, \strat{3}, \strat{4}\}$.
  The initial distribution is given by $\mealyDistInitAlt(\mealyState, \strat{k}) = x_{k+1} - x_{k}$, i.e., the probability of $\strat{k}$ in the illustration.
  We set, for any $j, k\in\{1, \ldots,4\}$, $\concurState\in\concurStateSpace$ and $\concurActionOne\in\concurActionSpaceOne$, $\updateAlt((\mealyState, \strat{k}),\concurState,\concurActionOne)((\mealyState, \strat{j})) = x_{j+1} - x_j$.
  Finally, we let $\nextmoveAlt((\mealyState, \strat{k}), \concurState) = \strat{k}(\concurState)$ for all $k\in\{1, \ldots, 4\}$ and $\concurState\in\concurStateSpace$. 

  The argument for the outcome-equivalence of $\mealyAlt$ and $\mealy$ is the following; for any state $\concurState\in\concurStateSpace$, the probability of moving into a memory state $(\mealyState, \strat{k})$ such that $\strat{k}(\concurState) = \arenaAction$ is by construction the probability $\nextmove(\mealyState, \concurState)$.
  \hfill$\lhd$
\end{example}

In the previous example, we had a unique memory state $\mealyState$ and we defined some memoryless strategies from the next-move function partially evaluated in this state (i.e., from $\nextmove(\mealyState, \cdot)$).
In general, each memory state may have a different partially evaluated next-move function, and therefore we must define some memoryless strategies for each individual memory state.
For each memory state, we can bound the number of derived memoryless strategies by $|\concurStateSpace|\cdot |\concurActionSpaceI|$; we look at cumulative probabilities over actions (of which there are at most $|\concurActionSpaceI|$) for each state.
This explains our announced bound on $|\mealyStateSpaceAlt|$.

Furthermore, in general, the memory update function is not trivial.
Generalising the construction above can be done in a straightforward manner to handle updates.
Intuitively, the probability to move to some memory state of the form $(\mealyState, \stratI)$ is given by the probability of moving into $\mealyState$ multiplied by the probability of $\sigma$ (in the sense of Figure~\ref{figure:RRR:RDR}).

We now formally state our result in the general setting and provide its proof.
The Mealy machine we construct has updates that do not depend on the actions of the player who owns it; this property is useful when we study games of imperfect information in Section~\ref{section:imperfect:information}.

\begin{theorem}\label{theorem:RRR:RDR}
  Let $\game = \concurTuple$ be a game.
  Let $\mealy = \mealyTuple$ be an RRR strategy of $\playerI$.
  There exists an RDR strategy $\mealyAlt = \mealyTupleAlt$ such that $\mealyAlt$ and $\mealy$ are outcome-equivalent, and such that $|\mealyStateSpaceAlt|\leq |\mealyStateSpace|\cdot (|\concurStateSpace|\cdot (|\concurActionSpaceI| - 1) + 1)$.
  Furthermore, the updates of $\mealyAlt$ do not depend on the actions of $\playerI$.
\end{theorem}

\begin{proof}
  Let us fix a linear order on the set of actions $\concurActionSpaceI$, denoted by <.
  Fix some $\mealyState\in\mealyStateSpace$.
  We let $\segmentpoint{1}{\mealyState} < \ldots < \segmentpoint{\ell(\mealyState)}{\mealyState}$ denote the elements of the set
  \[\left\{\sum_{\concurActionIAlt < \concurActionI} \nextmove(\mealyState, \concurState)(\concurActionIAlt) \mid \concurState\in \concurStateSpace,\, \concurActionI\in\concurActionSpaceI\right\}\]
  that are strictly inferior to $1$, and let $\segmentpoint{\ell(\mealyState)+1}{\mealyState} = 1$.
  These $\segmentpoint{j}{\mealyState}$ represent the cumulative probability provided by $\nextmove(\mealyState, \cdot)$ over actions of $\playerI$ taken in order, for each state of $\game$.
  For each $j \in\{1, \ldots, \ell(\mealyState)\}$, we define a memoryless strategy $\segmentstrat{j}{\mealyState}\colon \concurStateSpace\to\concurActionSpaceI$ as follows: we have $\segmentstrat{j}{\mealyState}(\concurState) = \concurActionI$ if
  $\sum_{\concurActionIAlt < \concurActionI} \nextmove(\mealyState,\concurState)(\concurActionIAlt) \leq \segmentpoint{j}{\mealyState} <
  \sum_{\concurActionIAlt \leq \concurActionI} \nextmove(\mealyState, \concurState)(\concurActionIAlt)$.
  In other words, for any state $\concurState\in\concurStateSpace$, we have $\segmentstrat{j}{\mealyState}(\concurState) = \concurActionI$ whenever $\segmentpoint{j}{\mealyState}$ is at least the cumulative probability of actions strictly inferior to $\concurActionI$ in $\nextmove(\mealyState,\concurState)$ and at most the cumulative probability of actions up to action $\concurActionI$ included.
  Refer to Figure~\ref{figure:RRR:RDR} of Example~\ref{example:RRR:RDR} for an explicit illustration.
  We refer to $\segmentpoint{j+1}{\mealyState} - \segmentpoint{j}{\mealyState}$  as the probability of $\segmentstrat{j}{\mealyState}$ in the sequel.

  Let $\mealyState\in\mealyStateSpace$, $\concurState\in\concurStateSpace$ and $\concurActionI\in \concurActionSpaceI(\concurState)$.
  We show that we can relate  $\nextmove(\mealyState,\concurState)(\concurActionI)$ and the sum of the probabilities of each $\segmentstrat{j}{\mealyState}$ such that $\segmentstrat{j}{\mealyState}(\concurState)=\concurActionI$ as follows.
  First, we introduce some notation.
  Let $I(\mealyState, \concurState, \concurActionI)$ denote the set of indices $j$ such that $\segmentstrat{j}{\mealyState}(\concurState)= \concurActionI$, i.e., the indices such that the $j$th strategy related to $m$ prescribes action $\concurActionI$ in $\concurState$.
  It holds that
  \begin{equation}\label{equation:RRR:RDR:1}
    \sum_{j\in I(\mealyState, \concurState, \concurActionI)}
    (\segmentpoint{j+1}{\mealyState} - \segmentpoint{j}{\mealyState}) = \nextmove(\mealyState, \concurState)(\concurActionI).
  \end{equation}
  Let $\concurState\in\concurStateSpace$ and $\concurActionI\in \concurActionSpaceI(\concurState)$.
  Equation~\eqref{equation:RRR:RDR:1} can be proven as follows.
  First, note that all indices $j$ appearing in the sum are consecutive by construction.
  Therefore, the sum above is telescoping and is equal to $\segmentpoint{j^++1}{\mealyState} - \segmentpoint{j^-}{\mealyState}$, where $j^+$ and $j^-$ denote the largest and smallest indices in the sum respectively.
  By construction, we have $\segmentpoint{j^-}{\mealyState} = \sum_{\concurActionIAlt <\concurActionI}\nextmove(\mealyState, \concurState)(\concurActionIAlt)$ and
  $\segmentpoint{j^++1}{\mealyState} = \sum_{\concurActionIAlt \leq \concurActionI}\nextmove(\mealyState, \concurState)(\concurActionIAlt)$.
  We conclude that $\segmentpoint{j^++1}{\mealyState} - \segmentpoint{j^-}{\mealyState} = \nextmove(\mealyState, \concurState)(\concurActionI)$, proving Equation~\eqref{equation:RRR:RDR:1}.
  This equation is used to establish the outcome-equivalence of $\mealy$ with the strategy defined below.

  We  now define an RDR strategy $\mealyAlt = \mealyTupleAlt$.
  We define
  \[\mealyStateSpaceAlt = \{(\mealyState, \segmentstrat{j}{\mealyState}) \mid \mealyState\in\mealyStateSpace,\, 1\leq j \leq \ell(\mealyState)\}.\]
  The initial distribution and update function of $\mealyAlt$ are derived from those of $\mealy$ multiplied with the probability of the memoryless strategy that appears in the second component of the memory state of $\mealyAlt$ into which we move.
  The initial distribution $\mealyDistInitAlt$ is defined as
  $\mealyDistInitAlt((\mealyState, \segmentstrat{j}{\mealyState})) =
  \mealyDistInit(\mealyState)\cdot (\segmentpoint{j+1}{\mealyState} - \segmentpoint{j}{\mealyState})$
  for all $(\mealyState, \segmentstrat{j}{\mealyState})\in \mealyStateSpaceAlt$. The update function is defined as
  $\updateAlt((\mealyState, \segmentstrat{j}{\mealyState}), \concurState, \concurAction)((\mealyState', \segmentstrat{k}{\mealyState'})) =
  \update(\mealyState, \concurState, \concurActionAlt)(\mealyState')\cdot (\segmentpoint{k+1}{\mealyState'} - \segmentpoint{k}{\mealyState'})$, where $\concurActionAlt = (\segmentstrat{j}{\mealyState}(\concurState), \concurActionTwo)$ if $i=1$ (respectively $\concurActionAlt = (\concurActionOne, \segmentstrat{j}{\mealyState}(\concurState))$ if $i=2$),
  for all $(\mealyState, \segmentstrat{j}{\mealyState}), (\mealyState', \segmentstrat{k}{\mealyState'})\in \mealyStateSpaceAlt$, $\concurState\in\concurStateSpace$ and $\concurAction\in\concurActionSpace$.
  We remark that this update function does not depend on the action of $\playerI$ given as input.
  Finally, the deterministic next-move function of $\mealyAlt$ is defined as
  $\nextmoveAlt((\mealyState, \segmentstrat{j}{\mealyState}), \concurState) = \segmentstrat{j}{\mealyState}(\concurState)$ for all
  $(\mealyState, \segmentstrat{j}{\mealyState})\in \mealyStateSpaceAlt$ and all $\concurState\in\concurStateSpace$.

  We now prove the outcome-equivalence of $\mealy$ and $\mealyAlt$.
  For any $\prefHist\in (\concurStateSpace\concurActionSpace)^*$, let $\mealyDist{\prefHist}$ (resp.~$\mealyDistAlt{\prefHist}$) denote the distribution over $\mealyStateSpace$ (resp.~$\mealyStateSpaceAlt$) after $\prefHist$ has occurred under strategy $\mealy$ (resp.~$\mealyAlt$).
  It follows from Lemma~\ref{lemma:prelim:consistent_equivalence} and the definition of strategies derived from FM strategies that it suffices to establish, for all histories $\hist = \prefHist\concurState$ consistent with $\mealy$, that the following holds:
  \begin{equation}\label{equation:RRR:RDR:2}
    \sum_{\mealyState\in\mealyStateSpace} \mealyDist{\prefHist}(\mealyState)\cdot \nextmove(\mealyState, \concurState)(\concurActionI) =
    \sum_{\mealyState\in\mealyStateSpace}\sum_{j\in I(\mealyState, \concurState, \concurActionI)} \mealyDistAlt{\prefHist}((\mealyState, \segmentstrat{j}{\mealyState})).
  \end{equation}

  To prove Equation~\eqref{equation:RRR:RDR:2}, we rely on the following property: for any $\prefHist\in (\concurStateSpace\concurActionSpace)^*$ consistent with $\mealy$, $\mealyDist{\prefHist}(\mealyState)$ is proportional to $\mealyDistAlt{\prefHist}((\mealyState, \segmentstrat{j}{\mealyState}))$.
  To be precise, for any $\prefHist\in (\concurStateSpace\concurActionSpace)^*$ consistent with $\mealy$, we have
  \begin{equation}\label{equation:RRR:RDR:3}
    \mealyDistAlt{\prefHist}((\mealyState, \segmentstrat{j}{\mealyState})) = (\segmentpoint{j+1}{\mealyState}-\segmentpoint{j}{\mealyState})\cdot\mealyDist{\prefHist}(\mealyState).
  \end{equation}
  To show Equation~\eqref{equation:RRR:RDR:3}, we proceed by induction.
  Consider the empty word $\prefHist= \varepsilon$.
  Because $\mealyDistInit=\mealyDist{\varepsilon}$ and $\mealyDistInitAlt =\mealyDistAlt{\varepsilon}$, Equation~\eqref{equation:RRR:RDR:3} follows from the definition of $\mealyDistInitAlt$.
  Let us now assume inductively that for $\prefHist'\in (\concurStateSpace\concurActionSpace)^*$ consistent with $\mealy$, we have Equation~\eqref{equation:RRR:RDR:3} and let us prove it for $\prefHist = \prefHist'\concurState\concurAction$ consistent with $\mealy$.
  Fix $(\mealyState, \segmentstrat{j}{\mealyState})\in \mealyStateSpaceAlt$.

  To invoke the inductive relation between $\mealyDistAlt{\prefHist}$ and $\mealyDistAlt{\prefHist'}$, we must have that $\prefHist$ is consistent with $\mealyAlt$.
  There exist $\mealyState'\in\supp{\mealyDist{\prefHist'}}$ such that $\nextmove(\mealyState', \concurState)(\concurActionI)>0$ and $k\in I(\mealyState', \concurState, \concurActionI)$ (this set is non-empty due to $\nextmove(\mealyState', \concurState)(\concurActionI)>0$).
  By the induction hypothesis, we obtain $\mealyDistAlt{\prefHist'}((\mealyState', \segmentstrat{k}{\mealyState'})) > 0$, which is sufficient to conclude that $\prefHist$ is consistent with $\mealyAlt$.  
  We thus obtain, from the equation relating $\mealyDistAlt{\prefHist}$ and $\mealyDistAlt{\prefHist'}$,
  \[\mealyDistAlt{\prefHist}((\mealyState, \segmentstrat{j}{\mealyState})) =
    \frac{\sum_{\mealyState'\in M}\sum_{k \in I(\mealyState', \concurState, \concurActionI)}
      \mealyDistAlt{\prefHist'}((\mealyState', \segmentstrat{k}{\mealyState'})) \cdot
      \updateAlt((\mealyState', \segmentstrat{k}{\mealyState'}), \concurState, \concurAction )((\mealyState, \segmentstrat{j}{\mealyState}))}
    {\sum_{\mealyState'\in M}\sum_{k \in I(\mealyState', \concurState, \concurActionI)}
      \mealyDistAlt{\prefHist'}((\mealyState', \segmentstrat{k}{\mealyState'}))}.\]
  The numerator of the above can be rewritten as follows, by successively using the
  definition of $\updateAlt$ followed by the inductive hypothesis and
  Equation~\eqref{equation:RRR:RDR:1}:
  \begin{align*}
    \sum_{\mealyState'\in M}
    &\sum_{k\in I(\mealyState', \concurState, \concurActionI)}
      \mealyDistAlt{\prefHist'}((\mealyState', \segmentstrat{k}{\mealyState'})) \cdot
      \update(\mealyState', \concurState, \concurAction)(\mealyState) \cdot
      (\segmentpoint{j+1}{\mealyState}-\segmentpoint{j}{\mealyState}) \\
    = & (\segmentpoint{j+1}{\mealyState}-\segmentpoint{j}{\mealyState})\cdot
        \sum_{\mealyState'\in M} \left(\update(\mealyState', \concurState, \concurAction)(\mealyState) \cdot
        \mealyDist{\prefHist'}(\mealyState') \cdot
        \sum_{k\in I(\mealyState', \concurState, \concurActionI)}
        (\segmentpoint{k+1}{\mealyState'}-\segmentpoint{k}{\mealyState'})\right) \\
    = & (\segmentpoint{j+1}{\mealyState}-\segmentpoint{j}{\mealyState})\cdot
        \sum_{\mealyState'\in M} \update(\mealyState', \concurState,\concurAction)(\mealyState) \cdot
        \mealyDist{\prefHist'}(\mealyState')\cdot
        \nextmove(\mealyState', \concurState)(\concurActionI).
  \end{align*}
  Following the same reasoning, the denominator can be rewritten as
  \[\sum_{\mealyState'\in M} \mealyDist{\prefHist'}(\mealyState') \cdot \nextmove(\mealyState', \concurState)(\concurActionI).\]
  By combining the equations above and the formula for the update of $\mealyDist{\prefHist}$, we obtain
  $\mealyDistAlt{\prefHist}((\mealyState, \segmentstrat{j}{\mealyState})) = (\segmentpoint{j+1}{\mealyState} - \segmentpoint{j}{\mealyState})\cdot \mealyDist{\prefHist}(\mealyState)$, ending the proof of Equation~\eqref{equation:RRR:RDR:3}.

  We show that Equation~\eqref{equation:RRR:RDR:3} implies Equation~\eqref{equation:RRR:RDR:2}, which will prove that $\mealy$ and $\mealyAlt$ are outcome-equivalent.
  Let $\hist = \prefHist\concurState\in\histSet{\game}$ be a history consistent with $\mealy$.
  Let $\concurActionI\in \concurActionSpaceI(\concurState)$.
  The probability that the action $\concurActionI$ is chosen after history $\hist$ under $\mealy$ is given by $\sum_{\mealyState\in\mealyStateSpace}\mealyDist{\prefHist}(\mealyState)\cdot\nextmove(\mealyState, \concurState)(\concurActionI)$.
  The probability that $\concurActionI$ is selected after $\hist$ under $\mealyAlt$, on the other hand, is given by
  \begin{align*}
    \sum_{\mealyState\in\mealyStateSpace}\sum_{j\in I(\mealyState, \concurState, \concurActionI)}\mealyDistAlt{\prefHist}((\mealyState,\segmentstrat{j}{\mealyState}))
    &= \sum_{\mealyState\in\mealyStateSpace}\left(\mealyDist{\prefHist}(\mealyState) \cdot \sum_{j\in I(\mealyState, \concurState, \concurActionI)}
      (\segmentpoint{j+1}{\mealyState}-\segmentpoint{j}{\mealyState}) \right) \\
    &= \sum_{\mealyState\in\mealyStateSpace}\mealyDist{\prefHist}(\mealyState) \cdot\nextmove(\mealyState, \concurState)(\concurActionI).
  \end{align*}
  In the above, the first equation is obtained from Equation~\eqref{equation:RRR:RDR:3} and the second equation follows from Equation~\eqref{equation:RRR:RDR:1}.
  This concludes the argument for the outcome-equivalence of our two FM strategies.

  To end the proof of this theorem, we prove the upper bound on $|\mealyStateSpaceAlt|$ given in the statement of the result.
  For any memory state $\mealyState\in\mealyStateSpace$, $\ell(\mealyState)$ is bounded by $|\concurStateSpace|\cdot (|\concurActionSpaceI|-1) + 1$: by definition of the numbers $\segmentpoint{j}{\mealyState}$, we see that we must have $\ell(\mealyState)\leq |\concurStateSpace|\cdot |\concurActionSpaceI|$.
  To obtain the aforementioned bound, observe that for all $\concurState\in\concurStateSpace$, we have $\sum_{\concurActionIAlt<\min\concurActionSpaceI}\nextmove(\mealyState, \concurState)(\concurActionIAlt) = 0$, i.e., $0$ admits (at least) $|\concurStateSpace|$ different writings in the set of the $\segmentpoint{j}{\mealyState}$s, hence $\ell(\mealyState) \leq |\concurStateSpace|\cdot |\concurActionSpaceI| - (|\concurStateSpace| - 1) = |\concurStateSpace|\cdot (|\concurActionSpaceI| - 1) + 1$.
  Therefore, we have at most $|\concurStateSpace|\cdot (|\concurActionSpaceI| - 1) + 1$ pairs of the form $(\mealyState, \segmentstrat{j}{\mealyState})$ per memory state $\mealyState\in\mealyStateSpace$.
  It follows that $|\mealyStateSpaceAlt|\leq |\mealyStateSpace|\cdot (|\concurStateSpace|\cdot (|\concurActionSpaceI| - 1) + 1)$.
\end{proof}

\begin{remark}
  The choice of the order on the set of actions fixed at the start of the previous proof influences the size of the constructed strategy.
  It is not necessary to use the same ordering of actions for all memory states.
  The order is used to define all memoryless strategies of the form $\segmentstrat{j}{\mealyState}$, which do not interact with strategies associated to other memory states.
  For this reason, it is possible to use different orderings on actions depending on the memory state $\mealyState$ that is considered.
  \hfill$\lhd$
\end{remark}

\begin{remark}
  The upper bound on the number of memory states given in the statement of Theorem~\ref{theorem:RRR:RDR} can be slightly improved in a turn-based setting.
  In general, we can replace the term $|\concurStateSpace|$ in the bound by the number of states that $\playerI$ controls (more precisely, by the number of $\playerI$-controlled states with at least two enabled actions).
  \hfill$\lhd$
\end{remark}

\section{Separating classes of strategies}\label{section:strictness}
We now discuss the separation of strategies given in the lattice of Figure~\ref{figure:lattice}, and in particular we consider the strictness of inclusions.
All separation results hold in one-player deterministic games with a single state and two actions.
This is one of the simplest possible settings to show that strategy classes are distinct.
Indeed, in a game with a single state and a single action, the only strategy is to always play the unique action, and therefore all strategy classes collapse into one.
For the entirety of this section, we let $\simpleGame$ denote the game depicted in Figure~\ref{figure:game:mixed:behavioural}, and we provide strategies of $\simpleGame$ to show that strategy classes differ.
We complement the separating strategies from $\simpleGame$ with problem instances from the literature for which strategies from some class suffice whereas strategies from the compared class do not.

\begin{figure}
  \begin{center}
    \begin{tikzpicture}
      \node[state] (s0)  {$s$};
\path[->] (s0) edge[loop left]node[align=center, left] {$a$} (s0);
      \path[->] (s0) edge[loop right]node[align=center, right] {$b$} (s0);
    \end{tikzpicture}
    \caption{The MDP $\simpleGame$ with a single state and two actions.}\label{figure:game:mixed:behavioural}
  \end{center}
\end{figure}

We illustrate FM strategies witnessing non-inclusions asserted in the lattice of Figure~\ref{figure:lattice} in Figures~\ref{figure:strictness:1} and~\ref{figure:strictness:2}.
The Mealy machines are interpreted as follows.
Edges that exit memory states read a game state (omitted in these figures due to
$s$ being the sole involved game state) and split into edges labelled by an action and a probability of this action being played, e.g., for $c\in \{a, b\}$ and $p\in [0, 1]$, the notation $c\mid p$ indicates that the probability of playing action $c$ in the current memory state is $p$.
In Figure~\ref{figure:strictness:2}, the edges are further split after the choice of an action for randomised updates.
The edge labels following this second split represent the probabilities of stochastic updates.
This second split is omitted whenever an update is deterministic.

\begin{figure}
  \captionsetup[subfigure]{justification=centering}
  \centering
    \begin{subfigure}[b]{0.3\textwidth}
    \centering
    \begin{tikzpicture}
      \begin{scope}[initial text={$\frac{1}{2}$},]
        \node[state, initial below] (m0) {$m_1$};
        \node[stochastics, node distance=0.5cm, above = of m0] (m0s) {};
        \node[state, initial below, right = of m0] (m1) {$m_2$};
        \node[stochastics, node distance=0.5cm, above = of m1] (m1s) {};
\path[-] (m0) edge[bend left, in=120](m0s);
        \path[->] (m0s) edge[bend left, out=60]node[right] {$a\mid 1$} (m0);
        \path[-] (m1) edge[bend left, in=120](m1s);
        \path[->] (m1s) edge[bend left, out=60]node[right] {$b\mid 1$} (m1);
      \end{scope}
    \end{tikzpicture}
    \caption{DDR $\nsubseteq$ RDD.}
    \label{figure:DDD:strict:RDD}
  \end{subfigure}
  \hfill
  \begin{subfigure}[b]{0.3\textwidth}
    \centering
    \begin{tikzpicture}
      \node[state, initial below, initial text={\phantom{$\frac{1}{2}$}}] (m0) {$m$};
      \node[stochastics, node distance=0.5cm, above = of m0] (m0s) {};
\path[-] (m0) edge (m0s);
      \path[->] (m0s) edge[bend left, out=60]node[right] {$a\mid \frac{1}{2}$} (m0);
      \path[->] (m0s) edge[bend right, out=300]node[left] {$b\mid \frac{1}{2}$} (m0);
    \end{tikzpicture}
    \caption{RDD $\subsetneq$ DRD.}
    \label{figure:RDD:strict:DRD}
  \end{subfigure}
  \hfill
  \begin{subfigure}[b]{0.3\textwidth}
    \centering
    \begin{tikzpicture}
      \begin{scope}[initial text={$\frac{1}{2}$}]
        \node[state, initial below] (m0) {$m_1$};
        \node[stochastics, node distance=0.5cm, above = of m0] (m0s) {};
        \node[state, initial below, right = of m0] (m1) {$m_2$};
        \node[stochastics, node distance=0.5cm, above = of m1] (m1s) {};
\path[-] (m0) edge (m0s);
        \path[->] (m0s) edge[bend left, out=60]node[right] {$a\mid \frac{1}{2}$} (m0);
        \path[->] (m0s) edge[bend right, out=300]node[left] {$b\mid \frac{1}{2}$} (m0);
        \path[-] (m1) edge[bend left, in=120](m1s);
        \path[->] (m1s) edge[bend left, out=60]node[right] {$b\mid 1$} (m1);
      \end{scope}
    \end{tikzpicture}
    \caption{DRD $\subsetneq$ RRD.}
    \label{figure:DRD:strict:RRD}
  \end{subfigure}
  \caption{Depictions of Mealy machines witnessing the strictness of three
    inclusions asserted in Figure~\ref{figure:lattice}. 
    For the sake of readability, we do not label transitions by $s$ as it is the
    sole state the Mealy machines can read in $\simpleGame$, and the only state with a choice in the games of Figure~\ref{figure:strictness:games}.}
  \label{figure:strictness:1}
\end{figure}

The rest of the section is structured as follows.
We discuss the strict inclusion of DDD in RDD and show that RDD is not included in DDR in Section~\ref{subsection:pure_different_mixed}.
Section~\ref{subsection:mixed_different_behavioural} complements the previous Section~\ref{subsection:mixed_included_behavioural} and presents a DRD strategy that has no outcome-equivalent RDD counterpart.
The strict inclusion of the class DRD in the class of RRD strategies is covered in Section~\ref{subsection:behavioural_different_RRD}.
Finally, we prove that DDR is not included in RRD in Section~\ref{subsection:DDR_different_RRD}, which implies that DDR is incomparable to the strategy classes RDD, DRD and RRD.

\subsection{DDD strategies are strictly less expressive than RDD ones}\label{subsection:pure_different_mixed}
Pure FM strategies are less powerful than RDD strategies.
The latter class of strategies can induce non-Dirac distributions over the plays of $\simpleGame$, whereas the former cannot.
We illustrate a strategy that has no outcome-equivalent DDD strategy in Figure~\ref{figure:DDD:strict:RDD}.
Furthermore, there is no DDR strategy that is outcome equivalent to the strategy depicted in Figure~\ref{figure:DDD:strict:RDD}: DDR strategies lack the ability to provide a randomised action at the first step of a game.
We obtain the following result.

\begin{lemma}\label{lemma:rrd_notin_ddr}
  There exists an RDD strategy of $\playerOne$ in $\simpleGame$ such that there is no outcome-equivalent DDR strategy (in particular, there is no outcome-equivalent DDD strategy).
\end{lemma}

We now provide a setting and example from~\cite{DBLP:journals/lmcs/EtessamiKVY08,DBLP:journals/fmsd/RandourRS17} in which RDD strategies can satisfy a specification that pure strategies cannot.
We consider MDPs with several reachability objectives with absorbing targets.

Let $\game = \mdpTuple$ be an MDP and let $k\geq 1$.
Given $\target\subseteq\mdpStateSpace$, we let $\reach{\target}$ denote the set of plays in which a state of $\target$ occurs; this set of plays is the \textit{reachability objective} for target $\target$.
For all $1\leq j\leq k$, we let $\target_j\subseteq\mdpStateSpace$ be a set of \textit{absorbing states}, i.e., for all $\mdpState\in\target_j$ and all $\mdpAction\in\mdpActionSpace(\mdpState)$, $\mdpTrans(\mdpState, \mdpAction)(\mdpState)=1$.
Given a vector $q = (q_j)_{1\leq j\leq k}\in ([0, 1]\cap\IQ)^k$ and an initial state $\mdpState_\init\in\mdpStateSpace$, we consider the problem of determining the existence of a strategy $\stratOne$ such that for all $1\leq j\leq k$, $\proba^{\stratOne}_{\mdpState_\init}(\reach{\target_j})\geq q_j$.
If there exists such a strategy, we say that $q$ is \textit{achievable} (from $\mdpState_\init$).
We give an instance of the problem that illustrates that pure strategies do not suffice below.

\begin{figure}
  \centering
    \begin{subfigure}[b]{0.34\textwidth}
    \centering
    \begin{tikzpicture}[node distance=0.7cm]
      \node[state] (s) {$s$};
      \node[state, left = of s, accepting] (t1) {$t_1$};
      \node[state, right = of s, accepting, dashed] (t2) {$t_2$};
\path[->] (s) edge node[above] {$a$} (t1);
      \path[->] (s) edge node[above] {$b$} (t2);
    \end{tikzpicture}
    \vspace{1,5cm}
    \caption{An MDP with two reachability objectives.}
    \label{figure:strictness:game1}
  \end{subfigure}
  \hfill
  \begin{subfigure}[b]{0.3\textwidth}
    \centering
    \begin{tikzpicture}
      \node[state] (s) {$s$};
      \node[state, right = of s, accepting] (t1) {$t$};
\path[->] (s) edge node[align=center,above] {$(a, a)$\\$(b, b)$} (t1);
      \path[->] (s) edge[loop below] node[align=center,below] {$(a, b)$\\$(b, a)$} (s);
    \end{tikzpicture}
    \caption{A concurrent reachability game.}
    \label{figure:strictness:game2}
  \end{subfigure}
  \hfill
  \begin{subfigure}[b]{0.34\textwidth}
    \centering
    \begin{tikzpicture}
      \node[state] (s) {$s$};
      \node[node distance=1.5cm, right = of s] (mid) {};
      \node[state, node distance=0.1cm, above = of mid, fill=black!10] (bot) {$u$};
      \node[state, node distance=0.1cm, below = of mid] (s1) {$s'$};
\path[->] (s) edge node[align=center,above] {$(a, a)$\\$(b, b)$} (bot);
      \path[->] (s) edge node[align=center,below] {$(a, b)$} (s1);
      \path[->] (s) edge[loop below] node[align=center,below] {$(b, a)$} (s);
    \end{tikzpicture}
    \caption{A concurrent safety game (snowball game~\cite{DBLP:journals/tcs/AlfaroHK07}).}
    \label{figure:strictness:game3}
  \end{subfigure}
  \caption{Games we use to further illustrate the separation of classes of strategies via example specifications.
    States depicted without outgoing transitions have outgoing self-loops that are omitted to lighten figures.}
  \label{figure:strictness:games}
\end{figure}

\begin{example}
  Consider the MDP depicted in Figure~\ref{figure:strictness:game1} and let $\arenaState$ be the initial state.
  We consider the two targets $\target_1 = \{t_1\}$ and $\target_2 = \{t_2\}$ and the vector $q = (\frac{1}{2}, \frac{1}{2})$.
  It is clear that no pure strategy witnesses the achievability of $q$; any pure strategy achieves the vector $(1, 0)$ or $(0, 1)$ if it chooses action $a$ or $b$ in $\arenaState$ respectively.
  However, there is an RDD strategy that witnesses the achievability of $q$; any extension of the strategy depicted in Figure~\ref{figure:DDD:strict:RDD} that accounts for the new game states $t_1$ and $t_2$ achieves $q$.
  \hfill$\lhd$
\end{example}

It turns out that RDD strategies suffice to witness that a vector is achievable no matter the considered instance of the problem.
We provide a short proof of this statement below.

\begin{lemma}
  Let $q$ be an achievable vector in the MDP $\game$ with respect to the reachability objectives $\reach{\target_1}$, \ldots, $\reach{\target_k}$ for absorbing targets $\target_1$, \ldots, $\target_k\subseteq\mdpStateSpace$.
  There exists an RDD strategy witnessing the achievability of $q$.
\end{lemma}
\begin{proof}
  It is shown in~\cite{DBLP:journals/lmcs/EtessamiKVY08} that the set of achievable vectors is a polyhedral set.
  Furthermore, the vertices of this set are achievable by pure memoryless strategies.
  It follows that any achievable vector is dominated by a convex combination of vectors achievable by pure memoryless strategies.
  We conclude that RDD strategies suffice to achieve $q$. 
\end{proof}

\subsection{RDD strategies are strictly less expressive than DRD ones}\label{subsection:mixed_different_behavioural}

The goal of this section is to show that there exists a DRD strategy that cannot be emulated by any RDD strategy.
Let us first explain some intuition behind this statement.
Intuitively, an RDD strategy can only randomise once at the start between a finite number of pure FM (DDD) strategies.
After this initial randomisation, the sequence of actions prescribed by the RDD strategy is fixed relative to the play in progress.
Any DRD strategy that chooses an action randomly at each step, such as the strategy depicted in Figure~\ref{figure:RDD:strict:DRD}, i.e., the strategy playing actions $a$ and $b$ with uniform probability at each step in $\simpleGame$, cannot be reproduced by an RDD strategy.
Indeed, this randomisation generates an infinite number of patterns of actions.
These patterns cannot all be captured by an RDD strategy due to the fact that its initial randomisation is over a finite set.

\begin{lemma}\label{lemma:behavioural:mixed:strictness}
  There exists a DRD strategy of $\playerOne$ in $\simpleGame$ such that there is no outcome-equivalent RDD strategy.
\end{lemma}

\begin{proof}
  Let $\stratOne\colon \{s\}\to \dist{\{a, b\}}$ be the memoryless strategy in $\simpleGame$ induced by the Mealy machine depicted in Figure~\ref{figure:RDD:strict:DRD}.
  The distribution $\stratOne(s)$ is the uniform distribution over $\{a, b\}$.
  The strategy $\stratOne$ induces a probability distribution over plays of $\simpleGame$ such that all plays have a probability of zero.
  Indeed, let $\play$ be a play of $\simpleGame$.
  One can view the singleton $\{\play\}$ as the decreasing
  intersection $\bigcap_{k\in\IN}\cyl{\play_{|k}}$.
  Hence, the probability of $\{\play\}$ is the limit of the probability of $\cyl{\play_{|k}}$ when $k$ goes to infinity.
  One can easily show that the probability under $\stratOne$ of $\cyl{\play_{|k}}$ is $\frac{1}{2^k}$.
  It follows that the probability of $\{\play\}$ is zero.

  We now establish that there is no outcome-equivalent RDD strategy.
  First, let us recall that any RDD strategy can be presented as a distribution over a finite number of pure FM strategies.
  Given that there are no probabilities on the transitions of $\simpleGame$, for any pure strategy $\stratOne^{\mathit{pure}}$, there is a single outcome under $\stratOne^{\mathit{pure}}$.
  We can infer that, for any RDD strategy of $\simpleGame$, there must be at least one play that has a non-zero probability, and therefore this strategy cannot be outcome-equivalent to $\stratOne$, ending the proof.
\end{proof}

We present a setting in which RDD strategies do not suffice, whereas DRD strategies suffice.
We study concurrent reachability games.
Let $\game=\concurTuple$ be a game, $\concurState_\init\in\concurStateSpace$ be an initial state and $\target\subseteq\concurStateSpace$ be a set of target vertices.
We consider the reachability objective $\reach{\target}$ again.
In a concurrent zero-sum reachability game, the goal of $\playerOne$ is to maximise the worst-case probability of $\reach{\target}$.
Formally, we say that a strategy $\stratOne$ of $\playerOne$ ensures the threshold $q\in[0, 1]$ from $\concurState_\init$ if  $\inf_{\stratTwo}\proba^{\stratOne, \stratTwo}_{\concurState_\init}(\reach{\target})\geq q$, where $\stratTwo$ ranges over strategies of $\playerTwo$.
The goal of $\playerOne$ is to ensure the greatest possible threshold.

The supremum of the thresholds that can be ensured from $\concurState_\init$ is called the \textit{value} of $\concurState_\init$.
A strategy is \textit{optimal} from $\concurState_\init$ if it ensures the value of $\concurState_\init$.
If there exists an optimal strategy from a state $\concurState_\init$ of value $1$, we say that $\playerOne$ wins almost-surely from $\concurState_\init$.

We illustrate in the following example that RDD strategies may be unable to ensure thresholds that DRD strategies can in concurrent reachability games.

\begin{example}
  Consider the game depicted in Figure~\ref{figure:strictness:game2} and let $s$ be the initial state.
  Let $\target = \{t\}$ be the target.

  We first claim that there are no RDD strategies of $\playerOne$ that win almost-surely from $s$.
  We fix an RDD Mealy machine $\mealy = \mealyTuple$ of $\playerOne$ and let $\stratOne^\mealy$ denote the strategy it induces.
  For all $\mealyStateInit\in\supp{\mealyDistInit}$, we consider the pure FM strategy $\stratOne^{\mealyStateInit}$ induced by $\mealyTupleInSt$.
  We fix $\mealyStateInit\in\supp{\mealyDistInit}$ and a pure strategy $\stratTwo$ of $\playerTwo$ such that for all histories $\hist$ ending in $s$, $\stratTwo(\hist)\neq\stratOne^{\mealyStateInit}(\hist)$.
  It follows that $\proba^{\stratOne^{\mealyStateInit}, \stratTwo}_{\concurState}(\reach{\target}) = 0$.
  This implies that $\mealy$ is not almost-surely winning from $s$ because, by the law of total probability, we have
  \[\proba^{\stratOne^\mealy, \stratTwo}_{\concurState}(\reach{\target}) =
    \sum_{\mealyState\in\mealyStateSpace}\mealyDistInit(\mealyState)\cdot
    \proba^{\stratOne^\mealyState, \stratTwo}_{\concurState}(\reach{\target}).\]

  On the other hand, the memoryless randomised strategy depicted in Figure~\ref{figure:RDD:strict:DRD} is almost-surely winning.
  At each round prior to a visit of $t$, no matter the choices of $\playerTwo$, this strategy ensures a probability of $\frac{1}{2}$ of matching the action of $\playerTwo$.
  It follows that this strategy is almost-surely winning. \hfill$\lhd$
\end{example}

In full generality, there need not exist optimal strategies in concurrent reachability games~\cite{DBLP:journals/tcs/AlfaroHK07}.
Nonetheless, memoryless randomised strategies (which are a restricted class of DRD strategies) can be used to ensure any possible threshold in these games.
In particular, if there exists an optimal strategy, there always exists one that is memoryless.
We summarise these results in the following theorem.
\begin{theorem}[\cite{DBLP:journals/tcs/AlfaroHK07,gog23}]
  In all concurrent reachability games, if a threshold $q$ can be ensured by $\playerOne$, then there exists a memoryless strategy that ensures $q$.
\end{theorem}

\subsection{DRD strategies are strictly less expressive than RRD ones}
\label{subsection:behavioural_different_RRD}

In this section, we show that there exists an RRD strategy that has no outcome-equivalent DRD strategy.
The example we provide is based on existing results for concurrent safety games, i.e., games where the goal is the complement of a reachability objective.
Given a game $\game=\concurTuple$, we let $\target\subseteq\concurStateSpace$ be a set of states and let $\safe{\target}$ be the \textit{safety objective}, which is the set of plays that do not traverse $\target$.
A strategy $\stratOne$ of $\playerOne$ in $\game$ is said to be \textit{positively winning} for the safety objective $\safe{\target}$ from an initial state $\concurState_\init$ if for all strategies $\stratTwo$ of $\playerTwo$, $\proba^{\stratOne, \stratTwo}_{\concurState_\init}(\safe{\target}) > 0$.

Consider the game depicted in Figure~\ref{figure:strictness:game3} with the safety objective $\safe{\{u\}}$ and consider $\concurState$ to be its initial state.
It is shown in~\cite{DBLP:journals/tcs/AlfaroHK07} that $\playerOne$ does not have a positively winning DRD strategy in this game.
The authors of~\cite{DBLP:journals/corr/abs-1006-1404} show however there exists a positively winning RRD strategy.
The Mealy machine of Figure~\ref{figure:DRD:strict:RRD} matches their positively winning RRD Mealy machine.

The main idea underlying the strategy induced by this Mealy machine is the following.
It attempts the action $a$ at all steps with a positive probability due to memory state $m_1$.
It also has a positive probability of never playing $a$ due to memory state $m_2$.
Therefore, $a$ is played after a history $s(bs)^k$ with a probability that decreases to zero as $k$ increases, as otherwise $a$ would eventually occur almost-surely.

This behaviour cannot be achieved with a DRD strategy.
The distribution over memory states of a DRD strategy following a history is a Dirac distribution due to the deterministic initialisation and deterministic updates.
It follows that DRD strategies suggest actions with probabilities given directly by the next-move function, i.e., the image of a DRD strategy is finite.
It follows that there is no DRD strategy that is outcome-equivalent to the strategy depicted in Figure~\ref{figure:DRD:strict:RRD}.
We formalise this argument in the proof of the following lemma.

\begin{lemma}\label{lemma:rrd:behavioural:strictness}
  There exists an RRD strategy of $\playerOne$ in $\simpleGame$ such that there is no outcome-equivalent DRD strategy.
\end{lemma}

\begin{proof}
  We consider the RRD strategy $\stratOne$ induced by the Mealy machine $\mealy = (\mealyStateSpace, m_1, \nextmove, \update)$ depicted in Figure~\ref{figure:DRD:strict:RRD}.
  For any $\prefHist\in (\{s\}\{a, b\})^*$, let $\mealyDist{\prefHist}$ denote the distribution over $\mealyStateSpace$ after $\prefHist$ as taken place under $\mealy$.
  It can be shown by induction that for any $k\in\IN$, $\mealyDist{(sb)^k}(m_1) = 1 - \mealyDist{(sb)^k}(m_2) = \frac{1}{2^k+1}$ and for any $\prefHist\in (\{s\}\{a, b\})^*$ with at least one occurrence of $a$, $\mealyDist{w}(m_1) = 1$.
  It follows that for any $k\in\IN$, $\stratOne((sb)^ks)(a) = \frac{1}{2(2^{k}+1)}$ and $\stratOne((sb)^ks)(b) = \frac{2^{k+1}+1}{2(2^{k}+1)}$, and for any history $\hist$ containing an occurrence of $a$, $\stratOne(\hist)(a) = \stratOne(\hist)(b) = \frac{1}{2}$.
  We obtain that $\stratOne$ plays the action $a$ with positive probabilities that can be arbitrarily small and that all histories of $\simpleGame$ are consistent with $\stratOne$.
  
  We now show that no DRD strategy is outcome-equivalent to $\stratOne$.
  Let $\mealyAlt= \mealyTupleInStAlt$ denote a DRD strategy and let $\stratAltOne$ denote its induced strategy.
  By Lemma~\ref{lemma:prelim:consistent_equivalence}, $\stratAltOne$ is outcome-equivalent to $\stratOne$ if and only if both strategies are equal, as all histories are consistent with $\stratOne$.
  For all $\hist\in\histSet{\simpleGame}$, due to the deterministic initialisation and updates of $\mealyAlt$, we have $\stratAltOne(\hist) = \nextmoveAlt(\mealyStateAlt, \last{\hist})$ for some $\mealyStateAlt\in\mealyStateSpaceAlt$.
  In particular, $\stratAltOne$ cannot play the action $a$ with arbitrarily small positive probabilities as it can only assign finitely many distributions to histories.
  We conclude that $\stratAltOne\neq\stratOne$, which ends the proof.
\end{proof}

We return to positively winning strategies in concurrent safety games.
It is argued in~\cite{DBLP:journals/corr/abs-1006-1404} that RRR strategies are sufficient to win positively in any concurrent safety game.
We build on their argument to show that RRD strategies suffice to win positively in any concurrent safety game.

Each state in a concurrent safety game can be assigned a rank.
States of highest rank are those from which $\playerTwo$ wins almost-surely for their dual reachability objective.
States of minimal rank, if they are not simultaneously of maximal rank, are those from which $\playerOne$ can surely enforce the safety objective no matter the strategy of $\playerTwo$, i.e., $\playerOne$ has a (memoryless) strategy such that all plays consistent with this strategy that start from a state of minimal rank satisfy the safety objective.

Let $\concurState\in\concurStateSpace$ be a state that is positively winning.
There exists an action of $\playerOne$, which we will call a \textit{sound action}, and a set $\concurActionSpaceTwo_\star(\concurState)\subseteq\concurActionSpaceTwo(\concurState)$ of actions of $\playerTwo$ such that the sound action surely prevents moving to states of higher rank against all actions in $\concurActionSpaceTwo_\star(\concurState)$.
Furthermore, for actions of $\playerTwo$ outside of $\concurActionSpaceTwo_\star(\concurState)$, there is an action of $\playerOne$ that moves to a state of strictly lower rank with positive probability.
For instance, in the snowball game depicted in Figure~\ref{figure:strictness:game3}, for state $\concurState$, the action $b$ is a sound action for $\concurState$ with respect to $\concurActionSpaceTwo_\star(\concurState) = \{a\}$.

The property we require on our strategy to win positively is to use a strategy much like that of Figure~\ref{figure:DRD:strict:RRD}.
On the one hand, it must have a positive probability of only using sound actions from any point: this way, the safety objective is ensured whenever $\playerTwo$ only uses actions in the sets of the form $\concurActionSpaceTwo_\star(\concurState)$ in the remainder of the play.
On the other hand, to account for the possibility of $\playerTwo$ taking an action outside of $\concurActionSpaceTwo_\star(\concurState)$ in state $\concurState$, all actions should have a positive probability of occurring in all rounds, so a vertex of lower rank can be reached with positive probability in this case.

Because the state space is finite, one of two cases occurs.
If $\playerTwo$ only resorts to actions compatible with sound actions from some point on, then the safety objective is satisfied with positive probability because sound actions are guaranteed to be always played from some point on with positive probability.
Otherwise, states of minimal ranks are reached with positive probability, from which $\playerOne$ can surely avoid $\target$. 

The idea of the RRR strategy proposed in~\cite{DBLP:journals/corr/abs-1006-1404} to obtain the behaviour described above is to rely on pairs of memory states.
In a pair, one memory state only proposes sound actions and the other memory state suggests all actions uniformly at random.
When initialising the Mealy machine and each time there is a change in the rank of states, to ensure the resulting strategy has the property above, a stochastic memory update is used to give a uniform probability over such a pair of states.

We show that it suffices to randomise once at the start, for each rank (besides the maximum and minimum one), whether only sound actions should be suggested or whether we should play uniformly at random.
This allows us to avoid stochastic updates and obtain an RRD strategy.
\begin{theorem}\label{theorem:safety positive:RRD}
  Let $\game=\concurTuple$ be a game and $\target\subseteq\concurStateSpace$ be a set of states.
  There exists an RRD strategy $\mealy$ such that, for all $\concurState_\init\in\concurStateSpace$, if there exists a positively winning strategy from $\concurState_\init$ for the objective $\safe{\target}$, then $\mealy$ is positively winning from $\concurState_\init$.
\end{theorem}
\begin{proof}
  We assume that there exists at least some state from which $\playerOne$ wins positively, otherwise the result is immediate.
  We use properties of~\cite[Algorithm 3]{DBLP:journals/tcs/AlfaroHK07}, which computes the set of almost-surely winning states in a concurrent reachability game, i.e., the complement of the set of positively winning states for the player with a safety objective.
  Each iteration of this algorithm computes two sets of states that are positively winning for $\playerOne$ and (essentially) removes them from the state space.
  Therefore, it yields a non-increasing sequence $\concurStateSpace=U_0\supseteq U_1 \ldots \supseteq U_{k}$ of sets of states ($k+2$ being double the number of iterations of the algorithm) such that $\concurStateSpace\setminus U_{k}$ is the set of positively winning states for $\playerOne$.
  In particular, note that $\target\subseteq U_{k}$.
  Let, for all $\concurState\in\concurStateSpace$, $\rk(\concurState)$ be the greatest $j$ such that $\concurState\in U_j$.

  The sequence of sets $(U_j)_{1\leq j\leq k}$ has the following property.
  For all states $\concurState\in\concurStateSpace$ such that $\rk(\concurState) < k$, there exists a sound action $\concurActionOne_{\mathsf{sd}}(s)\in\concurActionSpaceOne(\concurState)$ and a subset
  $\concurActionSpaceTwo_\star(\concurState)\subseteq\concurActionSpaceTwo(\concurState)$ such that
  (i) for all $\concurActionTwo\in\concurActionSpaceTwo_\star(\concurState)$ and all $\concurState'\in\supp{\concurTrans(\concurState,\concurActionOne_{\mathsf{sd}}(s), \concurActionTwo)}$, $\rk(\concurState')\leq\rk(\concurState)$,
  and (ii) for all $\concurActionTwo\in\concurActionSpaceTwo(\concurState)\setminus\concurActionSpaceTwo_\star(\concurState)$, there exists an action $\concurActionOne\in\concurActionSpaceOne(\concurState)$ and a state $\concurState'\in\supp{\concurTrans(\concurState,\concurActionOne, \concurActionTwo)}$ such that $\rk(\concurState') < \rk(\concurState)$.
  These conditions follow from the structure of the algorithm.
  In particular, the pure memoryless strategy of $\playerOne$ that only plays sound actions, when played from states of rank $0$, is such that all of its outcomes satisfy $\safe{\target}$ (i.e., states of rank $0$ are surely winning for $\playerOne$).
  
  We now define an RRD strategy.
  Let $\mealy = \mealyTuple$ such that $\mealyStateSpace = \{\mathsf{sd}, \mathsf{un}\}^{k-1}$ ($\mathsf{sd}$ and $\mathsf{un}$ respectively stand for sound and uniform).
  We let $\mealyDistInit$ be a uniform distribution over $\mealyStateSpace$.
  Let $\mealyState = (\mealyState_j)_{1\leq j\leq k-1}\in\mealyStateSpace$ and $\concurState\in\concurStateSpace$.
  If $\rk(\concurState)=k$, we let $\nextmove(\mealyState, \concurState)$ be arbitrary.
  Otherwise, if $\rk(\concurState) = 0$ or $\mealyState_{\rk(\concurState)} = \mathsf{sd}$, we let $\nextmove(\mealyState, \concurState)$ be a Dirac distribution on $\concurActionOne_{\mathsf{sd}}(\concurState)$.
  Otherwise (if $0< \rk(\concurState) < k$ and $\mealyState_{\rk(\concurState)} = \mathsf{un}$), we let $\nextmove(\mealyState, \concurState)$ be a uniform distribution over $\concurActionSpaceOne(\concurState)$.
  The deterministic memory updates are trivial: for all $\mealyState\in\mealyStateSpace$, $\concurState\in\concurStateSpace$ and $\concurAction\in\concurActionSpace(\concurState)$, we let $\update(\mealyState, \concurState, \concurAction) = \mealyState$.
  Given $\prefHist\in(\concurStateSpace\concurActionSpace)^*$, we let $\mealyDist{\prefHist}$ denote the distribution over memory states of $\mealy$ after $\prefHist$ has taken place.
  For $\mealyState\in\mealyStateSpace$, we let $\stratOne^\mealyState$ be the strategy induced by the Mealy machine obtained by fixing the initial state of $\mealy$ to $\mealyState$.

  We now prove that $\mealy$ induces a positively winning strategy from any state from which $\playerOne$ has a positively winning strategy.
  Let $\concurState_0$ be such a state and let $\stratTwo$ be an arbitrary strategy of $\playerTwo$.
  We use  an inductive argument on histories, starting with the history $\hist_0 = \concurState_0$.
  At step $j$ of the induction, we assume that we have some history $\hist_j = \prefHist_j\concurState_j$ consistent with $\stratTwo$ such that $\rk(\concurState_j) < k- j$ and $\supp{\mealyDist{\prefHist_j}} = \{\mathsf{sd}, \mathsf{un}\}^{\rk(\concurState_j)}\times \mealyStateSpace_j$ for some $\mealyStateSpace_j\subseteq \{\mathsf{sd}, \mathsf{un}\}^{k - \rk(\concurState_j)}$ (this last hypothesis implies that $\hist_j$ is consistent with $\mealy$, otherwise $\mealyDist{\prefHist_j}$ would not be defined).
  This induction hypothesis is clearly satisfied at step $0$ of the induction (positively winning states have rank at most $k-1$).

  We consider two cases.
  First, we assume that, for all extensions $\prefHist_j\hist$ of $\hist_j$, if they are consistent with $\stratTwo$ and only sound actions are used by $\playerOne$ in the suffix $\hist$, then $\supp{\stratTwo(\prefHist_j\hist)}\subseteq\concurActionSpaceTwo_\star(\last{\hist})$.
  We remark that if $\rk(\concurState_j) = 0$, we are necessarily in this case.
  We claim that for all extensions $\prefHist_j\hist$ of $\hist_j$ consistent with $\stratTwo$ in which only sound $\playerOne$ actions occur in $\hist$, it holds that all states in $\hist$ have rank at most $\rk(\concurState_j)$.
  This follows by a straightforward induction using the definition of sound actions and actions in sets $\concurActionSpaceTwo_\star(\concurState')$ (informally, the rank of states cannot increase at each step in this setting).

  By the induction hypothesis, there exists some $\mealyState\in\supp{\mealyDist{\prefHist_j}}$ such that $\mealyState_\ell=\mathsf{sd}$ for all $\ell\leq\rk(\concurState_j)$.
  In particular, $\hist_j$ is consistent with $\stratOne^\mealyState$ due to the definition of updates in $\mealy$.
  It follows from the above that all extensions of $\hist_j$ that are consistent with both $\stratOne^\mealyState$ and $\stratTwo$ satisfy $\safe{\target}$ (because all targets have rank $k$).
  Therefore, only a subset of $\cyl{\hist_j}$ of $\proba^{\stratOne^\mealyState, \stratTwo}_{\concurState}$-measure zero is not included in $\safe{\target}$.
  Therefore, $\proba^{\stratOne^\mealyState, \stratTwo}_{\concurState}(\safe{\target}) \geq \proba^{\stratOne^\mealyState, \stratTwo}_{\concurState}(\cyl{\hist_j}) > 0$.
  We conclude that $\proba^{\stratOne, \stratTwo}_{\concurState}(\safe{\target})>0$ as $\proba^{\stratOne^\mealyState, \stratTwo}_{\concurState}(\safe{\target})$ is the conditional probability of $\safe{\target}$ with respect to $\proba^{\stratOne, \stratTwo}_{\concurState}$ assuming that the initial memory state is $\mealyState$.

  Next, assume that there exists a history $\prefHist_j\hist$ extending $\hist_j$ that is consistent with $\stratTwo$, in which only sound actions are used by $\playerOne$ in the suffix $\hist$ and such that $\supp{\stratTwo(\prefHist_j\hist)}\nsubseteq\concurActionSpaceTwo_\star(\last{\hist})$.
  We assume that $\prefHist_j\hist$ is the shortest such extension of $\hist_j$.
  We fix $\concurActionTwo\in\supp{\stratTwo}(\prefHist_j\hist)\setminus\concurActionSpaceTwo_\star(\last{\hist})$, and $\concurActionOne\in\concurActionSpaceOne(\last{\hist})$ and $\concurState_{j+1}\in\supp{\concurTrans(\last{\hist}, \concurActionOne, \concurActionTwo)}$ such that $\rk(\concurState_{j+1}) < \rk(\last{\hist})$.
  We let $\concurAction = (\concurActionOne, \concurActionTwo)$.
  
  We define $\hist_{j+1} = \prefHist_j\hist\concurAction\concurState_{j+1}$ and show that it satisfies the induction hypothesis above.
  First, by construction, $\hist_{j+1}$ is consistent with $\stratTwo$.
  Second, it holds that $\rk(\last{\hist})\leq\rk(\concurState_j)$.
  This can be shown by the same argument as in the first case, as only sound actions occur in $\hist$ and all $\playerTwo$ actions taken in any state $\concurState$ in $\hist$ are in $\concurActionSpaceTwo_\star(\concurState)$.
  It follows that $\rk(\concurState_{j+1}) < \rk(\concurState_{j})$, implying that $\rk(\concurState_{j+1})< k - (j+1)$.
  Third, it can be shown by a straightforward induction that $\supp{\mealyDist{\prefHist}} = \supp{\mealyDist{\prefHist_j}}$ for $\prefHist$ such that $\prefHist_j\hist = \prefHist\last{\hist}$.
  The omitted inductive argument is based on the fact that all $\playerOne$ actions are sound in $\hist$, are taken in states of rank at most $\rk(\concurState_j)$ and $\supp{\mealyDist{\prefHist_j}} = \{\mathsf{sd}, \mathsf{un}\}^{\rk(\concurState_j)}\times \mealyStateSpace_j$.
  Finally, it holds that $\supp{\mealyDist{\prefHist_j\hist\concurAction}} = \{\mealyState\in\supp{\mealyDist{\prefHist_j}} \mid \mealyState_{\rk(\last{\hist})}=\mathsf{un}\}$ if $\concurActionOne\neq\concurActionOne_{\mathsf{sd}}(\last{\hist})$ and $\supp{\mealyDist{\prefHist_j\hist\concurAction}} = \supp{\mealyDist{\prefHist_j}}$ otherwise.
  By the inductive hypothesis, we obtain that
  \[\supp{\mealyDist{\prefHist_j\hist\concurAction}} =
    \{\mathsf{sd}, \mathsf{un}\}^{\rk(\last{\hist})-1}\times I\times \{\mathsf{sd}, \mathsf{un}\}^{\rk(\concurState_j) - \rk(\last{\hist})}\times \mealyStateSpace_j,\]
  where $I = \{\mathsf{un}\}$ in the first case, and $I = \{\mathsf{sd}, \mathsf{un}\}$ otherwise.
  This shows that we can continue the inductive argument with $\hist_{j+1}$.

  The second case can occur in the worst case only in the $k-1$ first steps of the induction: at step $k$, $\concurState_k$ has rank $0$, which guarantees we find ourselves in the first case. This concludes the proof that $\mealy$ is positively winning from $\concurState_0$.
\end{proof}

\subsection{RRD and DDR strategies are incomparable}
\label{subsection:DDR_different_RRD}
We prove in this section that the classes RRD and DDR of finite-memory strategies are incomparable.
We have previously shown Lemma~\ref{lemma:rrd_notin_ddr}, which states that $\text{RDD}\nsubseteq\text{DDR}$ and therefore implies that $\text{DRD}\nsubseteq\text{DDR}$ and $\text{RRD}\nsubseteq\text{DDR}$.
It remains to show that $\text{DDR}\nsubseteq\text{RRD}$.

We illustrate a DDR strategy of $\simpleGame$ that has no outcome-equivalent RRD strategy in Figure~\ref{figure:DDR:not:RRD}.
For ease of analysis, we illustrate in Figure~\ref{figure:RRD:strict:RRR} a DRR strategy with fewer states that is outcome-equivalent to the Mealy machine depicted in Figure~\ref{figure:DDR:not:RRD}.
The DDR strategy of Figure~\ref{figure:DDR:not:RRD} can be obtained by applying the construction of Theorem~\ref{theorem:RRR:RDR} to the Mealy machine of Figure~\ref{figure:RRD:strict:RRR}.

\begin{figure}
  \centering
  \begin{subfigure}[b]{0.55\textwidth}
  \centering
  \begin{tikzpicture}[every state/.style={minimum size=1cm}]
      \node[state, initial below,] (m1) {$m_1$};
      \node[stochastics, node distance=0.4cm, above = of m1] (m1s) {};
      \node[stochasticc, node distance=0.6cm, above = of m1s] (m1c) {};
      \node[state, left = of m1] (m2a) {$(m_2, a)$};
      \node[stochasticc, left = of m2a] (m2c) {};
      \node[stochastics, node distance=0.5cm, above = of m2c] (m2s) {};
      \node[state, node distance=0.3cm, above = of m2s] (m2b) {$(m_2, b)$};
      \node[state, node distance=0.7cm, right = of m1c] (m3) {$m_3$};
      \node[stochastics, node distance=0.5cm, below = of m3] (m3s) {};
\path[-] (m1) edge (m1s);
      \path[-] (m1s) edgenode[right] {$b$} (m1c);
      \path[->] (m1c) edgenode[above] {$\frac{1}{4}$} (m2b);
      \path[->] (m1c) edgenode[above] {$\frac{1}{4}$} (m2a);
      \path[->] (m1c) edgenode[above] {$\frac{1}{2}$} (m3);
      \path[-] (m2s) edgenode[right] {$b$} (m2c);
      \path[->] (m2a) edge node[midway,stochastics] {} node [below right=0.1] {$a$} (m1);
      \path[-] (m2b) edge (m2s);
      \path[->] (m2c) edge[bend left]node[left] {$\frac{1}{2}$} (m2b);
      \path[->] (m2c) edgenode[below] {$\frac{1}{2}$} (m2a);
      \path[<-] (m3) edge[bend left, in=120]node[right] {$b$} (m3s);
      \path[-] (m3s) edge[bend left, out=60](m3);
  \end{tikzpicture}
  \caption{A DDR strategy witnessing $\mathrm{DDR}\nsubseteq\mathrm{RRD}$.}
  \label{figure:DDR:not:RRD}
  \end{subfigure}\hfill
  \begin{subfigure}[b]{0.4\textwidth}
    \centering
    \begin{tikzpicture}[every state/.style={minimum size=1cm}]
        \node[state, initial below,] (m1) {$m_1$};
        \node[stochastics, node distance=0.4cm, above = of m1] (m1s) {};
        \node[stochasticc, node distance=0.5cm, above = of m1s] (m1c) {};
        \node[state, left = of m1c] (m2) {$m_2$};
        \node[state, right = of m1c] (m3) {$m_3$};
        \node[stochastics, node distance=0.5cm, below = of m3] (m3s) {};
        \node[stochastics, left = of m1] (m2s) {};
\path[-] (m1) edge (m1s);
        \path[-] (m1s) edgenode[right] {$b \mid 1$} (m1c);
        \path[->] (m1c) edgenode[above] {$\frac{1}{2}$} (m2);
        \path[->] (m1c) edgenode[above] {$\frac{1}{2}$} (m3);
        \path[-] (m2) edge (m2s);
        \path[->] (m2s) edgenode[below] {$a\mid \frac{1}{2}$} (m1);
        \path[->] (m2s) edge[bend left]node[left] {$b\mid \frac{1}{2}$} (m2);
        \path[<-] (m3) edge[bend left, in=120]node[right] {$b\mid 1$} (m3s);
        \path[-] (m3s) edge[bend left, out=60](m3);
    \end{tikzpicture}
    \caption{An outcome-equivalent RRR strategy with fewer states.}
    \label{figure:RRD:strict:RRR}
  \end{subfigure}
  \caption{Outcome-equivalent strategies witnessing the non-inclusion
    $\mathrm{DDR}\nsubseteq\mathrm{RRD}$. For the sake of readability, we do not label transitions by $s$ as it is the sole state the Mealy machines can read in $\simpleGame$.
    We omit the probability of actions in Figure~\ref{figure:DDR:not:RRD} as outputs are deterministic.}
  \label{figure:strictness:2}
\end{figure}

Intuitively, these strategies have a non-zero probability of never using action $a$ after any history, while they have a positive probability of using action $a$ at any time besides the first round and right after the action $a$ occurs.
The behaviour described above cannot be reproduced by an RRD strategy.
There are two reasons to this.

First, along any play consistent with an RRD strategy, the support of the distribution over memory states cannot increase in size.
Because of deterministic updates, the probability carried by a memory state $m$ can only be transferred to at most one state, and may be lost if the used action cannot be used while in $m$.
This property does not hold for strategies that have stochastic updates, such as those of Figure~\ref{figure:strictness:2}.

Second, one can force situations in which the size of the support of the
distribution over memory states of an RRD strategy must decrease.
If after a given history $h$, the action $a$ has a positive probability of never being
used despite being assigned a positive probability at each round after $h$,
then at some point there must be some memory state of the RRD strategy that has positive
probability and that assigns (via the next-move function) probability zero to action $a$.
For instance, this is the case from the start with the RRD strategy depicted in
Figure~\ref{figure:DRD:strict:RRD}.
Intuitively, if at all times all memory states in the support of the
distribution over memory states after the current history
assign a positive probability to action $a$, the probability of using
$a$ at each round after $h$ would be bounded from below by the smallest positive
probability assigned to $a$ by the next-move function. Therefore $a$ would eventually be
played almost-surely assuming $h$ has taken place, contradicting the fact that there
was a positive probability of never using action $a$ after $h$.
By using action $a$ at a point in which some memory state in the support of the
distribution over memory states assigns probability zero to $a$, the size of the
support of the memory state distribution decreases.

By design of our DDR strategy, if one assumes the existence of an outcome-equivalent RRD strategy, then it is possible to construct a play along which the size of the support of the distribution over memory states of the RRD strategy decreases infinitely often.
Because this size cannot increase along a play, this is not possible, i.e., there is no such RRD strategy.
We formalise the sketch above in the proof of the following lemma.

\begin{lemma}\label{lemma:ddr_notin_rrd}
  There exists a DDR strategy of $\playerOne$ in $\simpleGame$ such that there is no outcome-equivalent RRD strategy.
\end{lemma}
\begin{proof}
  Consider the Mealy machine $\mealy = (M, m_1, \nextmove, \update)$
  depicted in Figure~\ref{figure:RRD:strict:RRR}. We recall that
  $\mealy$ is a DRR Mealy machine that is outcome-equivalent to the DDR
  strategy illustrated in Figure~\ref{figure:DDR:not:RRD}.
  It therefore suffices to show that
  there are no RRD strategies that are outcome-equivalent to $\mealy$ to
  end this proof.
  
  Let $\stratOne$ denote the strategy induced by $\mealy$. Intuitively,
  $\stratOne$ operates as follows. It always uses $b$ in the first round and otherwise
  has a positive probability of never using action $a$ while always having a positive
  probability of playing $a$ at any round. Whenever the action $a$ is used, the behaviour
  of the strategy resets in the following sense: witnessing action $a$
  ensures that $\mealy$ finds itself in memory state $m_1$ after the update,
  thus the strategy repeats its behaviour from the initial state of $\mealy$.

  Lemma~\ref{lemma:prelim:consistent_equivalence} ensures that we need only study plays
  consistent with $\stratOne$ for matters related to outcome-equivalence.
  The finite sequences of actions that can be suggested by this strategy can be described
  by the regular expression $(b^+a)^*b^*$.
  Therefore, we require only the definition of $\stratOne$ over histories in which the underlying sequence of actions is in this language.
  For any $w\in (\{s\}\{a,b\})^*$, let $\mealyDist{\prefHist}$ denote the distribution over memory states of $\mealy$ after $\prefHist$ has taken place.
  It can be shown by induction that for any $\prefHist\in ((\{s\}\{b\})^+\{s\}\{a\})^*$ and $k\geq 1$, we have $\mealyDist{\prefHist}(m_1) = 1$ and $\mealyDist{\prefHist(sb)^k}(m_2) = 1 - \mealyDist{\prefHist(sb)^k}(m_3) = \frac{1}{2^{k-1} + 1}$.
  It follows that for any history $h$ consistent with $\stratOne$ of the form $s$ or $\hist'a s$ and $k\geq 1$, we have $\stratOne(h)(b) = 1$ and $\stratOne(\hist(bs)^k)(a) = 1 - \stratOne(\hist(bs)^k)(b) = \frac{1}{2^{k}+2}$.

  We show that for any history $h$ consistent with $\stratOne$ in which the last used action is $a$, it holds that $\proba_s^{\stratOne}(\{h(bs)^\omega\})>0$, i.e., there is a positive probability of $a$ never being played again after any occurrence of $a$.
  Let $h$ be one such history.

  We first show that $\proba_s^{\stratOne}(\{h(bs)^\omega\}) = \proba_s^{\stratOne}(\cyl{\hist})\cdot \proba_s^{\stratOne}(\{(sb)^\omega\})$.
  We have, for any $k\in\IN$, $\stratOne(h(bs)^k)(b) = \stratOne(s(bs)^k)(b)$ by definition of $\stratOne$.
  Furthermore, the sequences $(\cyl{s(bs)^k})_{k\in\IN}$ and $(\cyl{\hist(bs)^k})_{k\in\IN}$ respectively
  decrease when taking their intersections
  to the singletons $\{(sb)^\omega\}$ and $\{h(bs)^\omega\}$. We obtain
  the following equations from the definition of $\proba_s^{\stratOne}$:
  \begin{align*}
    \proba_s^{\stratOne}(\{h(bs)^\omega\}) = &
    \lim_{k\to\infty}\proba_s^{\stratOne}(\cyl{\hist(bs)^k})\\
    = & \lim_{k\to\infty}\proba_s^{\stratOne}(\cyl{\hist}) \cdot
    \prod_{\ell=0}^{k-1}\stratOne(h(bs)^\ell)(b) \\
    = & \proba_s^{\stratOne}(\cyl{\hist})\cdot \lim_{k\to\infty}\cdot
    \prod_{\ell=0}^{k-1}\stratOne(s(bs)^\ell)(b) \\
    = & \proba_s^{\stratOne}(\cyl{\hist})\cdot \lim_{k\to\infty}
    \proba_s^{\stratOne}(\cyl{s(bs)^k}) \\
    = & \proba_s^{\stratOne}(\cyl{\hist})\cdot \proba_s^{\stratOne}(\{(sb)^\omega)\}).
  \end{align*}
  
  In light of the above, to show that $\proba_s^{\stratOne}(\{h(bs)^\omega\})>0$,
  it suffices to establish that $\proba_s^{\stratOne}(\{(sb)^\omega\})>0$
  because $h$ is assumed to be consistent with $\stratOne$.
  It can be shown that $\proba_s^{\stratOne}(\{(sb)^\omega\}) = \frac{1}{2}$ as follows:
  \begin{align*}
    \proba_s^{\stratOne}(\{(sb)^\omega\})
    & = \lim_{k\to\infty} \proba_s^{\stratOne}(\cyl{s(bs)^k}) \\
    & = \lim_{k\to\infty} 1 \cdot \prod_{j = 1}^{k-1}\frac{2^{j}+1}{2^j+2} \\
    & = \lim_{k\to\infty} \frac{1}{2^{k-1}}\cdot
      \prod_{j = 1}^{k-1}\frac{2^{j}+1}{2^{j-1}+1} \\
    & = \lim_{k\to\infty} \frac{1}{2^{k-1}} \cdot
      \frac{2^{k-1}+1}{2^{1-1}+1} = \frac{1}{2};
  \end{align*}
  the product of the probabilities of $b$ being played in each round is simplified using the fact that the denominator of a term is double the numerator of the previous one.
  This closes the proof of our claimed inequality.

  We now show that no RRD strategy is outcome-equivalent to $\stratOne$.
  Let $\mealyAlt = \mealyTupleAlt$ be an RRD Mealy machine and let $\stratAltOne$ be the strategy it induces.
  For any $\prefHist\in (\{s\}\{a, b\})^*$, let $\mealyDistAlt{\prefHist}$ denote the distribution over memory states in $\mealyStateSpaceAlt$ after $\prefHist$ has taken place under $\mealyAlt$.
  
  The remainder of the proof is structured as follows; we prove two properties of RRD strategies and use them to show that $\stratAltOne$ cannot be outcome-equivalent to $\stratOne$.
  The first claim if that for any history $h = \prefHist s$ consistent with $\stratAltOne$ and action $c\in\{a, b\}$ such that $\stratAltOne(h)(c) > 0$, we have $|\supp{\mealyDistAlt{\prefHist}}| \geq |\supp{\mealyDistAlt{\prefHist sc}}|$, i.e., the size of the support of the distribution over memory states of $\mealyAlt$ does not increase as the play progresses.
  The second claim is that for any history $h$ consistent with $\stratAltOne$, if the probability of $a$ never appearing again after $h$ is non-zero, i.e., $\proba_s^{\stratAltOne}(\{h(bs)^\omega\}) > 0$, and for any $k\in\IN$, we have $\stratAltOne(h(bs)^k)(a)> 0$, then there exists some $k_0\in\IN$ such that $|\supp{\mealyDistAlt{h(bs)^{k_0}b}}| > |\supp{\mealyDistAlt{h(bs)^{k_0+1}a}}|$.

  Let us first prove the first claim.
  It follows from a careful inspection of how the distribution over memory states is updated from one step to the next.
  Let $h = \prefHist s$ be consistent with $\stratOne$ and $c\in\{a, b\}$ such that $\stratAltOne(h)(c) > 0$.
  For any memory state $\mealyStateAlt\in \mealyStateSpaceAlt$, recall that \[\mealyDistAlt{\prefHist sc}(\mealyStateAlt) = \frac{\sum_{\mealyStateAlt'\in \mealyStateSpaceAlt} \mealyDistAlt{\prefHist}(\mealyStateAlt')\cdot \updateAlt(\mealyStateAlt', s, c)(\mealyStateAlt)\cdot \nextmoveAlt(\mealyStateAlt', s)(c)}{\sum_{m'\in M} \mealyDistAlt{\prefHist}(\mealyStateAlt')\cdot \nextmoveAlt(\mealyStateAlt', s)(c)}.\]
  Because updates are deterministic, for any given $\mealyStateAlt'\in\mealyStateSpaceAlt$, there is a unique $\mealyStateAlt\in\mealyStateSpaceAlt$ such that $\updateAlt(\mealyStateAlt', s, c)(\mealyStateAlt) = 1$.
  Therefore any element in $\supp{\mealyDistAlt{\prefHist}}$ transfers its probability to at most one memory state when deriving $\mealyDistAlt{\prefHist sc}$.
  This ends the proof of the first claim.
  We note (for the proof of the second claim) that if $\mealyStateAlt'\in \supp{\mealyDistAlt{\prefHist}}$ is such that $\nextmoveAlt(\mealyStateAlt', s)(c) = 0$, then $\mealyStateAlt'$ does not transfer its probability to any state, and in this case, we have $|\supp{\mealyDistAlt{\prefHist}}| > |\supp{\mealyDistAlt{\prefHist sc}}|$.

  We now move on to the second claim.
  Let $h$ be consistent with $\stratAltOne$ and assume that $\proba_s^{\stratAltOne}(\{h(bs)^\omega\}) > 0$, and for any $k\in\IN$, we have $\stratAltOne(h(bs)^k)(a)> 0$.
  In light of the comment above regarding the second claim, it suffices to show that for some $k_0\in\IN$, we have some $\mealyStateAlt\in \supp{\mealyDistAlt{h(bs)^{k_0}b}}$ such that $\nextmoveAlt(\mealyStateAlt, s)(a) = 0$.
  Assume towards a contradiction that this is not the case, i.e., for all $k\in\IN$ and all $\mealyStateAlt\in \supp{\mealyDistAlt{h(bs)^kb}}$, we have $\nextmoveAlt(\mealyStateAlt, s)(a) > 0$.
  Let $k\in\IN$.
  We show that the probability $\stratAltOne(h(bs)^{k+1})(a)$ is bounded below by a positive constant independent of $k$.
  This follows from the assumption that $\nextmoveAlt(\mealyStateAlt, s)(a) > 0$ for all $\mealyStateAlt\in\supp{\mealyDistAlt{h(bs)^kb}}$
  via the relations 
  \begin{align*}
    \stratAltOne(h(bs)^{k+1})(a)
    & = \sum_{\mealyStateAlt\in \mealyStateSpaceAlt}\mealyDistAlt{h(bs)^{k}b}(\mealyStateAlt)\cdot\nextmoveAlt(\mealyStateAlt, s)(a) \\
    & \geq \sum_{\mealyStateAlt\in \mealyStateSpaceAlt}\mealyDistAlt{h(bs)^{k}b}(\mealyStateAlt)\cdot \min_{\mealyStateAlt'\in \mealyStateSpaceAlt^{a > 0}}\nextmoveAlt(\mealyStateAlt', s)(a) \\
    & = \min_{\mealyStateAlt'\in \mealyStateSpaceAlt^{a>0}}\nextmoveAlt(\mealyStateAlt', s)(a) > 0,
  \end{align*}
  where $\mealyStateSpaceAlt^{a > 0} = \{\mealyStateAlt\in \mealyStateSpaceAlt\mid \nextmoveAlt(\mealyStateAlt, s)(a) > 0\}$. It follows that
  the action $a$ must be used almost-surely assuming $h$ has taken place, contradicting the fact that $\proba_s^{\stratAltOne}(\{h(bs)^\omega\}) > 0$.
  This ends the proof of the second claim.

  We now show that $\stratAltOne$ cannot be outcome-equivalent to $\stratOne$ by
  contradiction.
  Assume $\stratAltOne$ is outcome-equivalent to $\stratOne$.
  Due to the properties of $\stratOne$ shown above, we can repeatedly use the two claims above to construct a sequence of non-zero natural numbers $(k_\ell)_{\ell\in\IN}$ such that $(|\supp{\mealyDistAlt{\prefHist_\ell}}|)_{\ell\in\IN}$ is an infinite decreasing sequence of natural numbers, where $\prefHist_0 = \varepsilon$ and for all $\ell\in\IN$, $\prefHist_{\ell+1}= \prefHist_\ell(sb)^{k_\ell}sa$.
  This contradicts the well-order of $\IN$.
  This shows that there are no RRD strategies that are outcome-equivalent to $\stratOne$.  
\end{proof}

As in the previous sections, we provide a game and a specification that cannot be accomplished using an RRD strategy, but can be accomplished using a DDR strategy.
In the following example, we consider a two-player turn-based game with several reachability objectives with absorbing targets.
The goal is to construct, if it exists, a strategy that ensures given thresholds for several reachability objectives at once.

\begin{example}\label{example:multireach:sg}
  We consider the two-player turn-based game $\game = \concurTuple$ depicted in Figure~\ref{figure:multireach:sg} (ownership of vertices is distinguished by their shape), originating from~\cite{DBLP:conf/mfcs/ChenFKSW13}.
  As $\game$ is turn-based, we lighten the notation of histories and plays by only indicating the action of the player in control of the state.
  We also simplify notations for updates of Mealy machines by only taking in account the actions we keep in plays.
  We let $\arenaActionSpace = \concurActionSpaceOne\cup\concurActionSpaceTwo$ denote the set of actions.
  We consider three targets: $\target_j = \{t_j\}$ for $j\in\{1, 2, 3\}$.
  \newcommand{\belowDist}{8mm}
  \newcommand{\belowDistC}{14mm}
  \newcommand{\belowDistS}{7mm}
  \newcommand{\phantomDist}{1mm}
  \newcommand{\phantomDistC}{5mm}
  \begin{figure}[t]
    \centering
    \begin{tikzpicture}[node distance=28mm]
      \node[state] (s0) {$s_0$};
      \node[stochasticc, node distance=8mm, right = of s0] (s0c) {};
      \node[draw, square, minimum size=1cm, right = of s0c] (s1) {$s_1$};
      \node[stochasticc, right = of s1] (s1c) {};
      \node[draw, square, minimum size=1cm, right = of s1c] (s2) {$s_2$};
      \node[state, node distance={\belowDist}, below = of s0c] (s3) {$s_3$};
      \node[node distance={\belowDist}, below = of s3] (s3p) {};
      \node[state, accepting, node distance={\phantomDist}, left = of s3p] (t11) {$t_1$};
      \node[stochasticc, node distance={\phantomDistC}, right = of s3p] (s3c) {};
      \node[node distance={\belowDistC}, below = of s3c] (s3cp) {};
      \node[state, accepting, dashed, node distance={\phantomDist}, left = of s3cp] (t21) {$t_2$};
      \node[state, accepting, dotted, node distance={\phantomDist}, right = of s3cp] (t31) {$t_3$};
      \node[stochasticc, node distance={\belowDistS}, below = of s1] (s1check) {};
      \node[node distance={\belowDistC}, below = of s1check] (s1checkp) {};
      \node[state, accepting, node distance={\phantomDist}, left = of s1checkp] (t12) {$t_1$};
      \node[state, node distance={\phantomDist}, right = of s1checkp] (s4) {$s_4$};
      \node[node distance={\belowDist}, below = of s4] (s4p) {};
      \node[state, accepting, dashed, node distance={\phantomDist}, left = of s4p] (t22) {$t_2$};
      \node[state, accepting, dotted, node distance={\phantomDist}, right = of s4p] (t32) {$t_3$};
      \node[state, node distance={\belowDist}, below = of s1c] (s5) {$s_5$};
      \node[node distance={\belowDist}, below = of s5] (s5p) {};
      \node[state, accepting, dashed, node distance={\phantomDist}, left = of s5p] (t23) {$t_2$};
      \node[stochasticc, node distance={\phantomDistC}, right = of s5p] (s5c) {};
      \node[node distance={\belowDistC}, below = of s5c] (s5cp) {};
      \node[state, accepting, node distance={\phantomDist}, left = of s5cp] (t13) {$t_1$};
      \node[state, accepting, dotted, node distance={\phantomDist}, right = of s5cp] (t33) {$t_3$};
      \node[stochasticc, node distance={\belowDistS}, below = of s2] (s2check) {};
      \node[node distance={\belowDistC}, below = of s2check] (s2checkp) {};
      \node[state, accepting, dashed, node distance={\phantomDist}, left = of s2checkp] (t24) {$t_2$};
      \node[state, node distance={\phantomDist}, right = of s2checkp] (s6) {$s_6$};
      \node[node distance={\belowDist}, below = of s6] (s6p) {};
      \node[state, accepting, node distance={\phantomDist}, left = of s6p] (t14) {$t_1$};
      \node[state, accepting, dotted, node distance={\phantomDist}, right = of s6p] (t34) {$t_3$};

      \path[-] (s0) edge node[below] {$\bot$} (s0c);
      \path[->] (s2) edge[bend right=15] node[above] {$\continueAct$} (s0);
      \path[-] (s1) edge node[above] {$\continueAct$} (s1c);
      \path[-] (s1) edge node[right] {$\checkAct$} (s1check);
      \path[-] (s2) edge node[right] {$\checkAct$} (s2check);
      
      \path[->] (s0c) edge node[below] {$\frac{1}{2}$} (s1);
      \path[->] (s0c) edge node[right] {$\frac{1}{2}$} (s3);
      \path[->] (s3) edge node[above left] {$\ell$} (t11);
      \path[-] (s3) edge node[above right] {$r$} (s3c);
      \path[->] (s3c) edge node[above left] {$\frac{1}{4}$} (t21);
      \path[->] (s3c) edge node[above right] {$\frac{3}{4}$} (t31);

      \path[->] (s1check) edge node [above left] {$\frac{1}{3}$} (t12);
      \path[->] (s1check) edge node [above right] {$\frac{2}{3}$} (s4);
      \path[->] (s4) edge node [above left] {$\ell$} (t22);
      \path[->] (s4) edge node [above right] {$r$} (t32);

      \path[->] (s1c) edge node[below] {$\frac{1}{2}$} (s2);
      \path[->] (s1c) edge node[right] {$\frac{1}{2}$} (s5);
      \path[->] (s5) edge node[above left] {$\ell$} (t23);
      \path[-] (s5) edge node[above right] {$r$} (s5c);
      \path[->] (s5c) edge node[above left] {$\frac{1}{2}$} (t13);
      \path[->] (s5c) edge node[above right] {$\frac{1}{2}$} (t33);

      \path[->] (s2check) edge node [above left] {$\frac{1}{3}$} (t24);
      \path[->] (s2check) edge node [above right] {$\frac{2}{3}$} (s6);
      \path[->] (s6) edge node [above left] {$\ell$} (t14);
      \path[->] (s6) edge node [above right] {$r$} (t34);
    \end{tikzpicture}
    \caption{A turn-based stochastic game with multiple reachability objectives~\cite{DBLP:conf/mfcs/ChenFKSW13}.
      Circles and squares respectively represent states controlled by $\playerOne$ and $\playerTwo$.
      The only action enabled for players who do not control a state is $\bot$.
      States $t_1$, $t_2$ and $t_3$ are drawn repeatedly for clarity (duplicates all represent the same state).
    Actions $\continueAct$ and $\checkAct$ of $\playerTwo$ stand for proceed and check respectively.}
    \label{figure:multireach:sg}
  \end{figure}
  
  \begin{figure}
    \centering
    \begin{tikzpicture}
      \node[state, initial left] (m1) {$m_0$};
      \node[state, right = of m1] (m2) {$m_1$};
      \node[stochasticc, right = of m2] (m2c) {};
      \node[state, right = of m2c] (m3) {$m_2$};
\path[->] (m1) edge node[above] {$s_0$} (m2);
      \path[-] (m2) edge node[above] {$s_0$} (m2c);
      \path[->] (m2c) edge[bend left] node[below] {$\frac{1}{2}$} (m2);
      \path[->] (m2c) edge node[above] {$\frac{1}{2}$} (m3);
    \end{tikzpicture}
    \caption{A Mealy machine update scheme for the game of Figure~\ref{figure:multireach:sg}. Updates that do not change the memory state are not depicted.}
    \label{figure:multireach:mm}
  \end{figure}
  
  In~\cite{DBLP:conf/mfcs/ChenFKSW13}, it is shown that there is no DRD strategy $\stratOne$ of $\playerOne$ such that for all strategies $\stratTwo$ of $\playerTwo$, $\proba^{\stratOne, \stratTwo}_{s_0}(\reach{\target_j}) \geq \frac{1}{3}$ for all $j\in\{1, 2, 3\}$, despite there existing an infinite-memory one.
  We prove that (i) there is no RRD strategy that satisfies this specification and (ii) there exists a DDR strategy that does. 

  We let, for $k\in\IN$, $\hist_k = s_0 (\bot s_1\continueAct s_2\continueAct s_0)^k$.
  A description of satisfactory strategies is provided in the technical report~\cite[Appendix B]{TechRep:ChenFKSW13}.
  A strategy $\stratOne$ of $\playerOne$ ensures that all targets are each visited with probability $\frac{1}{3}$ if for all $k\in\IN$, $\stratOne(\hist_k\bot s_3)(\ell) = 1 - \frac{1}{3\cdot 2^{k-1}}$, $\stratOne(\hist_k\bot s_1\checkAct s_4)(\ell) = 1 - \frac{1}{2^{k+2}}$, $\stratOne(\hist_k\bot s_1\continueAct s_5)(\ell) = 1 - \frac{1}{3\cdot 2^{k}}$ and $\stratOne(\hist_k\bot s_1\checkAct s_6)(\ell) = 1 - \frac{1}{2^{k+2}}$, and for all $k\in\IN$, the first two equations are necessary to comply with the specification.

  Let $\mealy$ be an RRD strategy and let $\stratAltOne^\mealy$ be its induced strategy.
  We show that $\stratAltOne^\mealy$ cannot satisfy the multi-objective query by showing that the set of distributions $\{\stratAltOne^\mealy(\hist_k\bot s_3)\mid k\in\IN\}$ must be a finite set, which is incompatible with the requirements given above.

  Let $\mealyDist{\prefHist}$ denote the distribution over memory states after $\prefHist\in(\concurStateSpace\arenaActionSpace)^*$ has taken place under $\mealy$.
  For all $k\in\IN$ and $\mealyState\in\mealyStateSpace$, it holds that $\mealyDist{\hist_k\bot s_3}(\mealyState) = \sum_{\mealyState'\in\mealyStateSpace'}\mealyDistInit(\mealyState')$ for some $\mealyStateSpace'\subseteq\mealyStateSpace$ (which depends on both $k$ and $\mealyState$).
  This follows from the equations for the updates of the distributions $\mealyDist{\prefHist}$.
  In all states along $\hist_k \bot$, $\playerOne$ only has a single action.
  Furthermore, $\mealy$ has deterministic updates.
  Therefore, if $\prefHist$ and $\prefHist\concurState\arenaAction$ are prefixes of $\hist_k \bot$, for all memory states $\mealyState\in\mealyStateSpace$, we obtain $\mealyDist{\prefHist\concurState\arenaAction}(\mealyState)$ is the sum of $\mealyDist{\prefHist}(\mealyState')$ for all memory states $\mealyState'$ such that $\update(\mealyState', \concurState, \arenaAction) = \mealyState$.
  In particular, this implies that the set of distributions $\{\mealyDist{\hist_k\bot} \mid k\in\IN\}$ is finite, which shows that $\{\stratAltOne^\mealy(\hist_k\bot s_3)\mid k\in\IN\}$ is a finite set by definition of the strategy induced by a Mealy machine.

  We now describe a Mealy machine $\mealyAlt$ that induces a strategy that coincides with $\stratOne$ over $\cyl{\concurState_0}$, i.e., that ensures a probability of $\frac{1}{3}$ for all three reachability objectives.
  Once more, we provide a DRR strategy that can be transformed into an outcome-equivalent DDR strategy via the construction underlying Theorem~\ref{theorem:RRR:RDR}.
  We depict the relevant update scheme in Figure~\ref{figure:multireach:mm}; updates that do not change the current memory state are omitted from the figure.
  Let $\mealyDistAlt{\prefHist}$ denote the distribution over memory states of $\mealyAlt$ after $\prefHist\in(\concurStateSpace\arenaActionSpace)^*$ has taken place under $\mealyAlt$.
  Let $k\in\IN$.
  Below, we are interested in the distribution over memory states only for $\prefHist_k\in\{\hist_k\bot, \hist_k\bot s_1\checkAct, \hist_k\bot s_1\continueAct,\hist_k\bot s_1 \continueAct s_2\checkAct\}$: it can be shown by a straightforward induction that we have $\mealyDistAlt{\prefHist_k}(\mealyState_1) = 1 - \mealyDistAlt{\prefHist_k}(\mealyState_2) = \frac{1}{2^k}$.

  We now specify the next-move function of $\mealyAlt$ and describe the strategy $\stratOne^\mealyAlt$ induced by $\mealyAlt$.
  We let $\nextmove(\mealyState_0, s)$ be an arbitrary Dirac distribution for all states $\concurState\in\{s_3, s_4, s_5, s_6\}$ (we require Dirac distributions so our Mealy machine has an outcome-equivalent DDR strategy).
  For $s_3$, we let $\nextmove(\mealyState_1, s_3)(r) = \frac{2}{3}$ and $\nextmove(\mealyState_2, s_3)(\ell) = 1$.
  It follows that for all $k\in\IN$, we have $\stratOne^\mealyAlt(\hist_k\bot s_3)(r) = \frac{2}{3\cdot 2^{k}} = \frac{1}{3\cdot 2^{k-1}}$.
  For $s_4$, we let $\nextmove(\mealyState_1, s_4)(r) = \frac{1}{4}$ and $\nextmove(\mealyState_2, s_4)(\ell) = 1$.
  We obtain that for all $k\in\IN$, we have $\stratOne^\mealyAlt(\hist_k\bot s_2\checkAct s_4)(r) = \frac{1}{4\cdot 2^{k}} = \frac{1}{2^{k+2}}$.
  For $s_5$, we let $\nextmove(\mealyState_1, s_5)(r) = \frac{1}{3}$ and $\nextmove(\mealyState_2, s_5)(\ell) = 1$.
  For all $k\in\IN$, it holds that $\stratOne^\mealyAlt(\hist_k\bot s_2\continueAct s_5)(r) = \frac{1}{3\cdot 2^{k}}$.
  Finally, for $s_6$, we let $\nextmove(\mealyState_1, s_6)(r) = \frac{1}{4}$ and $\nextmove(\mealyState_2, s_6)(\ell) = 1$.
  We conclude that for all $k\in\IN$, $\stratOne^\mealyAlt(\hist_k\bot s_2\continueAct s_2\checkAct s_6)(r) = \frac{1}{4\cdot 2^{k}} = \frac{1}{2^{k+2}}$.
  This shows that $\stratOne^\mealyAlt$ ensures all reachability objectives are satisfied with probability at least $\frac{1}{3}$.
  \hfill $\lhd$
\end{example}

Consider a turn-based stochastic game $\game =\concurTuple$ and targets $\target_1, \ldots, \target_k\subseteq\concurStateSpace$.
The general form of the problem treated in the example above is to decide, given an initial state $\concurState_\init\in\concurStateSpace$ and a threshold vector $q\in ([0, 1]\cap\IQ)^k$ whether there exists a strategy $\stratOne$ of $\playerOne$ such that for all strategies $\stratTwo$ of $\playerTwo$, we have $\proba^{\stratOne,\stratTwo}_{\concurState_\init}(\reach{\target_j})\geq q_j$ for all $j\in\{1, \ldots, k\}$.
It is not known whether RRR strategies  of $\playerOne$ suffice to provide a positive answer whenever possible in general.
However, finite-memory strategies suffice to approximate any vector for which the problem has a positive answer.
More precisely, if $\playerOne$ can ensure $q$ from $\concurState_\init\in\concurStateSpace$, then for all $\varepsilon > 0$, $\playerOne$ has an DRD strategy such that for all strategies $\stratTwo$ of $\playerTwo$ and all $j\in\{1, \ldots, k\}$, it holds that $\proba^{\stratOne, \stratTwo}_{\concurState_\init}(\reach{\target_j})\geq q_j - \varepsilon$~\cite{DBLP:conf/mfcs/ChenFKSW13,DBLP:conf/lics/AshokCKWW20}.

\section{Extension: multiplayer games}
\label{section:multiplayer}

In the previous sections, we have only considered two-player games.
We show that the lattice of Figure~\ref{figure:lattice} extends to games with more than two players.

Let $\nPlayer\geq 1$ be a number of players.
Formally, an $\nPlayer$-player concurrent stochastic game is a tuple $\game = \nConcurTuple$ where $\concurStateSpace$ is a non-empty finite set of states, $\concurActionSpaceI$ is a finite set of actions for each player and $\concurTrans\colon\concurStateSpace\times\concurActionSpaceOne\times\ldots\times\concurActionSpaceN\to\dist{\concurStateSpace}$ is a probabilistic transition function.
We reuse the notation $\concurActionSpace = \concurActionSpaceOne\times\ldots\times\concurActionSpaceN$.
We impose the same constraints as in the two-player case regarding actions enabled in states, i.e., whether an action is available to a player is independent of the choices of others.
Plays, histories, strategies, Mealy machines and probability distributions over plays induced by strategies are defined in a similar way as in the two-player setting.

The definition of outcome-equivalence can be naturally extended to multi-player games.
Instead of quantifying universally over strategies of the other player as is done in the two-player setting, one quantifies universally over strategies of all other players in the definition of outcome-equivalence.
Formally, two strategies $\stratOne$ and $\stratAltOne$ of $\playerOne$ are outcome-equivalent if for all strategies $\stratI$ of $\playerI$ for $2\leq i\leq\nPlayer$ and all $\concurState\in\concurStateSpace$, $\proba^{\stratOne, \stratTwo, \ldots, \strat{\nPlayer}}_{\concurState} = \proba^{\stratAltOne, \stratTwo, \ldots, \strat{\nPlayer}}_{\concurState}$.

A single (fictitious) player derived from a coalition of players has access to more behaviours than the coalition, as the single player can randomise over action profiles whereas individual players can only randomise over their own set of actions.
This implies that all probability distributions over action profiles that can be induced by strategies of the players of the coalition playing individually can be simulated by the fictitious player, but the inverse is not true.
This is the crux of the argument showing that our results carry over to the multi-player setting.

\begin{theorem}\label{theorem:multi:lattice}
  The taxonomy of Figure~\ref{figure:lattice} established in two-player games extends to multiplayer games.
\end{theorem}
\begin{proof}
  All results that witness that two classes in the lattice of Figure~\ref{figure:lattice} are separated (i.e., Lemmas~\ref{lemma:rrd_notin_ddr},~\ref{lemma:behavioural:mixed:strictness},~\ref{lemma:rrd:behavioural:strictness} and~\ref{lemma:ddr_notin_rrd}) hold in one-player  games, which are a subclass of multiplayer games.

  We now prove that the inclusion results extend to this setting.
  Let $\stratClass_1$ and $\stratClass_2$ be two classes of finite-memory strategies referred to in Figure~\ref{figure:lattice} such that the lattice asserts that $\stratClass_1\subseteq\stratClass_2$.
  Let $\game = \nConcurTuple$ be an $\nPlayer$-player game.
  In the following argument, we only consider strategies of $\playerOne$ to simplify notation.
  We let $\game' = (\concurStateSpace, \concurActionSpaceOne, \prod_{2\leq i\leq\nPlayer}\concurActionSpaceI, \concurTrans)$ be the two-player (coalition) game  in which the players other than $\playerOne$ are grouped together.
  Although the sets of histories and plays of $\game$ and $\game'$ differ syntactically (due to the nature of action tuples), there is a natural bijection between these sets.
  For this reason, we identify them.
  Therefore, all strategies of $\playerOne$ in $\game$ are strategies of $\playerOne$ in $\game'$ and vice-versa.
  
  Let $\stratOne\in\stratClass_1$ be a strategy of $\playerOne$.
  Because $\stratClass_1\subseteq\stratClass_2$ holds for two-player games, there exists a strategy $\stratAltOne\in\stratClass_2$ such that $\stratOne$ and $\stratAltOne$ are outcome-equivalent in $\game'$.
  We claim that $\stratOne$ and $\stratAltOne$ are outcome-equivalent in $\game$.
  Let $\stratTwo$, \ldots, $\strat{\nPlayer}$ be strategies of players other than $\playerOne$ and $\concurState\in\concurStateSpace$ be an initial state.
  Consider the strategy $\stratAltTwo$ of the second player in $\game'$ defined by $\stratAltTwo(\hist)(\concurActionTwo, \ldots, \concurActionPl{\nPlayer}) = \prod_{2\leq i \leq \nPlayer}\stratI(\hist)(\concurActionI)$ for all $\hist\in\histSet{\game}$ and all $\concurActionI\in\concurActionSpaceI$ for $2\leq i\leq\nPlayer$.
  By definition of distributions induced by plays and outcome-equivalence of $\stratOne$ and $\stratAltOne$ in $\game'$, we obtain $\proba^{\stratOne, \stratTwo, \ldots, \strat{\nPlayer}}_{\game,\mdpState} = \proba^{\stratOne, \stratAltTwo}_{\game', \mdpState} = \proba^{\stratAltOne, \stratAltTwo}_{\game',\mdpState} = \proba^{\stratAltOne, \stratTwo, \ldots, \strat{\nPlayer}}_{\game,\mdpState}$ (where the subscript also indicates the relevant game), ending the proof.
\end{proof}

\section{Extension: imperfect information}
\label{section:imperfect:information}
This section discusses games of imperfect information, and how our results transfer to this setting.
In Section~\ref{section:imperfect:information:intro}, we introduce definitions and terminology for games of imperfect information.
We discuss finite-memory strategies in this setting in Section~\ref{section:imperfect:information:fm}.
Finally, we close with Section~\ref{section:imperfect:information:lattice}, in which we argue that the lattice of Figure~\ref{figure:lattice} transfers to games with perfect recall and provide an adaptation for games of imperfect recall.

\subsection{Games of imperfect information}\label{section:imperfect:information:intro}
We consider two-player stochastic games of imperfect information played on graphs.
Unlike games of perfect information, the players are not fully informed of the current state of the play and the actions that are used along the play.
Instead, they perceive an \textit{observation} for each state and action, and this observation may be shared between different states and actions, making them indistinguishable.
These observations are not shared between the players; each player perceives the ongoing play differently.

We formalise this game model.
A \textit{concurrent stochastic game of imperfect information} is defined as a tuple $\poGame = \poTuple$ where $\concurTuple$ is a game of perfect information,
$\obsSpaceI$ is a finite set of observations of $\playerI$ for $i\in \{1, 2\}$ and $\obsFunI\colon \poStateSpace\cup \poActionSpaceOne\cup\poActionSpaceTwo\to \obsSpaceI$ is the observation function of $\playerI$, which assigns an observation to each state and action.
We require that for any $i\in\{1, 2\}$, for any two states $\poState, \poState'\in \poStateSpace$, $\obsFunI(\poState) = \obsFunI(\poState')$ implies $\poActionSpace(\poState) = \poActionSpace(\poState')$, i.e., in two indistinguishable states, the same actions are available to both players.
We fix $\poGame$ for the remainder of the section and let $\game$ denote the underlying game of perfect information.

Plays and histories of $\poGame$ are respectively defined as plays and histories of $\game$.
We reuse the notations $\playSet{\poGame}$ and $\histSet{\poGame}$ for the sets of plays of $\poGame$ and histories of $\poGame$ respectively.
We extend the observation functions to pairs of actions and to histories. For any $\poAction = (\poActionOne,\poActionTwo)\in\poActionSpace$, we let $\obsFunI(\poAction) = (\obsFunI(\poActionOne),\obsFunI(\poActionTwo))$ and for all histories $\hist = \poState_0\poAction_0\ldots \poState_k$ of $\poGame$, we let $\obsFunI(h) = \obsFunI(\poState_0) \obsFunI(\concurAction_0)\ldots \obsFunI(\poState_k)$.
This extension is used to define strategies in games of imperfect information.

In our setting, $\playerI$ has \textit{perfect recall} if $\playerI$ can distinguish their own actions.
Formally, $\playerI$ has perfect recall if the set of actions $\poActionSpaceI$ is included in the set $\obsSpaceI$ and that for all $\poActionI\in\poActionSpaceI$ and $x\in \poStateSpace\cup\poActionSpaceOne\cup\poActionSpaceTwo$, $\obsFunI(x) = \poActionI$ if and only if $x=\poActionI$.

In $\poGame$, players can only rely on the observations they perceive to select actions.
A pure \textit{(observation-based) strategy} of $\poGame$ is a function $\stratI\colon \obsFunI(\histSet{\poGame})\to \poActionSpaceI$.
Randomised strategies can be defined as mixed strategies (i.e., distributions over pure observation-based strategies) or behavioural strategies.
Specifically, an \textit{observation-based behavioural strategy} is a function $\stratI\colon \obsFunI(\histSet{\poGame})\to \dist{\poActionSpaceI}$.
We will refer to (behavioural) strategies of the underlying game of perfect information $\game$ as \textit{history-based strategies} to distinguish them from observation-based ones.

In contrast to the perfect information setting, if we do not assume perfect recall, there need not be an equivalence between behavioural and mixed strategies.
Thankfully, randomised strategies (be they mixed or behavioural) of $\poGame$ are a subclass of history-based strategies.
This allows us to directly reuse notions previously defined for history-based strategies.
For instance, the probability distributions over plays of $\poGame$ induced by a pair of strategies from an initial state is the corresponding distribution in $\game$.
Furthermore, we avoid the need to consider mixed strategies explicitly this way.

We remark that the equivalent definition of outcome-equivalence for two strategies of $\playerOne$ formulated in Lemma~\ref{lemma:prelim:consistent_equivalence} also extends to the imperfect information setting.
On the one hand, $\playerTwo$ has access to fewer strategies, therefore the condition given in the lemma implies outcome-equivalence (the proof establishes a stronger statement).
On the other hand, the other direction requires strategies of $\playerTwo$ that are consistent with the histories considered in the proof; it suffices to consider a strategy of $\playerTwo$ that selects all available actions at random at all times for the argument to work.

\subsection{Finite-memory strategies}\label{section:imperfect:information:fm}

A strategy is \textit{finite-memory} if it is induced by a (stochastic) Mealy machine that reads observations instead of states and actions.
Formally, we define an \textit{observation-based Mealy machine} of $\playerI$ as a tuple $\mealy= \mealyTuple$ where $\mealyStateSpace$ is a finite set of memory states, $\mealyDistInit$ is an initial distribution over $\mealyStateSpace$, $\update\colon \mealyStateSpace\times \obsSpaceI^3 \to \dist{\mealyStateSpace}$ is the update function and $\nextmove\colon \mealyStateSpace\times \obsSpaceI\to \dist{\poActionSpaceI}$ is the next-move function.

An observation-based Mealy machine is a special case of a Mealy machine whose updates and outputs must coincide given inputs with the same observations.
We can thus derive a history-based strategy from an observation-based Mealy machine in the same way as in the perfect information setting.

To transfer our results on games of perfect information to games of imperfect information, we reuse the same classification of Mealy machines with three-letter acronyms for observation-based Mealy machines.
As was the case in the earlier sections, we will abusively say, e.g., $\mealy$ is an RRR observation-based strategy to mean that $\mealy$ is an observation-based Mealy machine with stochastic initialisation, outputs and updates, and avoid referring to the observation-based strategy it induces in this way.

In general, an observation-based Mealy machine may not induce a behavioural strategy of $\poGame$.
This can be illustrated with a simple RDD strategy.

\begin{example}\label{example:imperfect:mealy:not behavioural}
  We build a one-player game of imperfect information $\poGame_{a, b}$ from the game $\simpleGame$ of Figure~\ref{figure:game:mixed:behavioural}.
  We assign to $\concurState$, $a$ and $b$ a shared observation $o$.
  We consider the Mealy machine depicted in Figure~\ref{figure:DDD:strict:RDD}; note that its updates only depend on the memory state and not on the input and outputs.
  The strategy it induces, which we will denote by $\stratOne$, has a uniform probability of only playing $a$ or only playing $b$.

  No observation-based behavioural strategy is outcome-equivalent to $\stratOne$.
  Let $\stratAltOne\colon \{o\}(\{o\}^2)^*\to\dist{\{a, b\}}$ be a behavioural strategy.
  For it to be outcome-equivalent to $\stratOne$, $\stratAltOne$ has differentiate between the histories $sas$ and $sbs$ and play action $a$ and $b$ respectively following these histories.
  However, because both strategies share the same sequence of observations, $\stratAltOne$ cannot be outcome-equivalent to $\stratOne$. \hfill $\lhd$
\end{example}

We provide two sufficient conditions that ensure that observation-based Mealy machine induce a behavioural strategy.
The first one we present introduces a restriction on the games.
The second one involves no assumptions on games, but instead considers a restricted class of Mealy machines.

First, we show that all finite-memory strategies are behavioural in games with perfect recall.
Intuitively, the distribution over memory states depends heavily on the sequence of actions used by the considered player; the choice of actions conditions the distribution over memory states at each time it is updated.
The visibility of actions makes it so the distribution over memory states depends only on the observations fed to the Mealy machine.

\begin{lemma}\label{lemma:imperfect:fm:uniformity}
  Let $\mealy = \mealyTuple$ be an observation-based Mealy machine of $\playerI$.
  Assume that $\playerI$ has perfect recall in $\poGame$.
  Then the strategy induced by $\mealy$ is an observation-based behavioural strategy.
\end{lemma}

\begin{proof}
  Let $\mealyDist{\prefHist}$ denote the distribution over memory states of $\mealy$ after $\prefHist$ has taken place, for $\prefHist\in (\poStateSpace\poActionSpace)^*$.
  By definition of the strategy induced by a Mealy machine, it suffices to show the following: for all $\prefHist, v\in (\poStateSpace\poActionSpace)^*$ that are mapped to the same sequence of observations, we have $\mealyDist{\prefHist} = \mealyDist{v}$.
  
  Let $\prefHist, v\in (\poStateSpace\poActionSpace)^*$ such that $\prefHist$ and $v$ are mapped to the same sequence of observations.  
  We proceed by induction on the length of the considered sequence $\prefHist$ (which matches that of $v$).
  At the start of a play, an initial memory state is drawn following $\mealyDistInit$.
  Hence the distribution over memory states after the empty word $\varepsilon$ is $\mealyDist{\varepsilon} = \mealyDistInit$.
  In this case, there is nothing to show for the uniformity argument.

  We now assume the following by induction: for $\prefHist = \poState_0\poAction_0\ldots \poState_k\poAction_k$, the distribution $\mealyDist{\prefHist}$ over $\mealyStateSpace$ is well-defined and coincides with $\mealyDist{v}$ for $v =t_0\poActionAlt_0\ldots t_k\poActionAlt_k$ that can be mapped to the same sequence of observations as $\prefHist$.
  We consider $\prefHist' = \prefHist \poState_{k+1} \poAction_{k+1}$ and $v' = v t_{k+1}\poActionAlt_{k+1}$ that share the same sequence of observations.
  We describe $\mealyDist{\prefHist'}$, then infer that $\mealyDist{\prefHist'} = \mealyDist{v'}$.

  Due to the visibility of actions, we have $\poActionI_{k+1} = \obsFunI(\poActionI_{k+1}) = \poActionIAlt_{k+1}$.
  We distinguish two cases: $\mealyDist{\prefHist'}$ is well-defined or it is not.
  First, if for all $\mealyState\in\supp{\mealyDist{\prefHist}}$, we have $\nextmove(\mealyState, \obsFunI(s_{k+1}))(\poActionI_{k+1})=0$, then $\mealyDist{\prefHist'}$ and $\mealyDist{v'}$ are both undefined (i.e., $\prefHist'$ and $v'$ are inconsistent with $\mealy$).
  Therefore, we assume that there is $\mealyState\in\supp{\mealyDist{\prefHist}}$ such that $\nextmove(\mealyState, \obsFunI(\poState_{k+1}))(\poActionI_{k+1})>0$.
  In this case, we have, for any memory state $\mealyState\in\mealyStateSpace$,
  \[\mealyDist{\prefHist'}(\mealyState) =
    \frac{
      \sum_{\mealyState'\in\mealyStateSpace} \mealyDist{\prefHist}(\mealyState')
      \cdot
      \update(\mealyState', \obsFunI(\poState_{k+1}), \obsFunI(\poAction_{k+1}))(\mealyState)
      \cdot
      \nextmove(\mealyState', \obsFunI(\poState_{k+1}))(\poActionI_{k+1})}{\sum_{\mealyState'\in\mealyStateSpace} \mealyDist{\prefHist}(\mealyState')\cdot \nextmove(\mealyState, \obsFunI(\poState_{k+1}))(\poActionI_{k+1})}.\]
  The equation for $\mealyDist{v'}$ is the same as above, except $\poState_{k+1}$ and $\poActionIAdv_{k+1}$ are respectively replaced with $t_{k+1}$ and $\poActionIAdvAlt_{k+1}$.
  It follows immediately from $\obsFunI(\poState_{k+1}) = \obsFunI(t_{k+1})$ and $\obsFunI(\poActionIAdv_{k+1}) = \obsFunI(\poActionIAdvAlt_{k+1})$ that $\mealyDist{\prefHist'} = \mealyDist{v'}$.
  This ends the inductive argument and the proof.
\end{proof}

We have seen through Example~\ref{example:imperfect:mealy:not behavioural} that when lifting the perfect recall hypothesis, Mealy machines with randomised initialisation need not induce behavioural strategies.
A similar claim can be shown for strategies with randomised updates, e.g., by adapting the RDD example so the randomised initialisation is emulated by a stochastic memory update after the first round of the game.
On the other hand, DRD strategies always induce behavioural strategies.

\begin{lemma}\label{lemma:imperfect:fm:drd}
  Let $\mealy = \mealyTupleInSt$ be a DRD strategy of $\playerI$ in $\poGame$.
  Then the strategy induced by $\mealy$ is a behavioural strategy.
\end{lemma}
\begin{proof}
  For a DRD strategy, the distribution over memory states at any point is a Dirac distribution.
  More precisely, the memory state $\mealyState_\prefHist$ reached after $\prefHist\in(\poStateSpace\poActionSpace)^*$ is defined by induction.
  We have $\mealyState_\emptyword = \mealyStateInit$ and for $\prefHist\poState\poAction\in(\poStateSpace\poActionSpace)^+$, we have $\mealyState_{\prefHist\poState\poAction} = \update(\mealyState_\prefHist, \obsFunI(\poState), \obsFunI(\poAction))$.
  It is easy to see that $\mealyState_\prefHist$ depends only on the observations assigned to $\prefHist$, which is sufficient to end the proof.
\end{proof}

\subsection{Transferring our taxonomy to imperfect information}\label{section:imperfect:information:lattice}
We are now concerned with transferring our taxonomy of finite-memory strategies in games of perfect information to games of imperfect information.
We remark that all non-inclusions witnessed in the perfect information case hold in the imperfect information case.

On the one hand, if a player cannot perceive their own actions, some inclusions of the lattice in Figure~\ref{figure:lattice} fail.
This is already suggested by Example~\ref{example:imperfect:mealy:not behavioural} and Lemma~\ref{lemma:imperfect:fm:drd}, which imply together that RDD $\subseteq$ DRD does not hold without perfect recall.
On the other hand, it can be argued that the lattice of Figure~\ref{figure:lattice} stays unchanged in games where a player can see their actions.

\smallskip\noindent\textbf{Imperfect recall.}
We illustrate the lattice for general games of imperfect information in Figure~\ref{figure:lattice:imperfect:information}.
We first discuss the non-trivial inclusion that is preserved in this broader setting, then we explain why the others fail.

\begin{figure*}[tbh]
  \begin{center}
    \begin{tikzpicture}
      \matrix (a) [matrix of nodes, align=center, text width=4cm, column sep=0.1cm, row sep=0.7cm]{
        & {RRR = RDR}  & \\
        DRR &   & RRD \\
        DDR & DRD  & RDD  \\
        & DDD & \\
      };
      \draw (a-1-2) -- (a-2-1) node[midway, above left, align=center] {Lem.~\ref{lemma:imperfect:information:rdd_notin_drr}\\ (strictness)};
      \draw (a-1-2) -- (a-2-3);
      \draw (a-2-3) -- (a-3-3);
      \draw (a-2-1) -- (a-3-1);
      \draw (a-2-1) -- (a-3-2);
      \draw (a-2-3) -- (a-3-2);
      \draw (a-3-1) -- (a-4-2);
      \draw (a-3-2) -- (a-4-2);
      \draw (a-3-3) -- (a-4-2);
    \end{tikzpicture}
    \caption{Lattice of finite-memory strategy classes in games of imperfect information with imperfect recall. We decorate edges with relevant results introduced in this section.}\label{figure:lattice:imperfect:information}
\end{center}
\end{figure*}

Of the three non-trivial inclusions shown in Section~\ref{section:inclusions}, only RRR $\subseteq$ RDR still holds in this setting.
The idea is that Theorem~\ref{theorem:RRR:RDR}, unlike the other two inclusion theorems, does not rely on the visibility of actions in the construction of the Mealy machine.
It even provides a Mealy machine that is agnostic to a player's own actions.
Because states and the actions of the other player only serve the role of inputs (i.e., their nature does not matter), we can adapt the proof of the theorem directly to obtain its direct reformulation in games of imperfect information.

The other two non-trivial inclusions, RDD $\subseteq$ RDR and RRR $\subseteq$ DRR fail in this setting.
As explained previously, Example~\ref{example:imperfect:mealy:not behavioural} and Lemma~\ref{lemma:imperfect:fm:drd} show that the first inclusion cannot hold.
Furthermore, we obtain that DRR and RDD (and RRD) are incomparable.
To illustrate this, we prove that the strategy introduced in Example~\ref{example:imperfect:mealy:not behavioural} has no DRR equivalent.

\begin{lemma}\label{lemma:imperfect:information:rdd_notin_drr}
  Let $\poGame_{a, b}$ denote the game of imperfect information derived from $\simpleGame$ (Figure~\ref{figure:game:mixed:behavioural}) by assigning observation $o$ to everything.
  There exists an RDD strategy in $\poGame_{a, b}$ such that there is no outcome-equivalent DRR strategy.
\end{lemma}
\begin{proof}
  We consider the Mealy machine depicted in Figure~\ref{figure:DDD:strict:RDD} as in Example~\ref{example:imperfect:mealy:not behavioural} and let $\stratOne$ denote the history-based strategy it induces.
  Let $\mealy = \mealyTupleInSt$ be a DRR strategy of $\poGame_{a, b}$ and let $\stratAltOne^\mealy$ be the history-based strategy it induces.
  We assume towards a contradiction that $\stratAltOne^\mealy$ and $\stratOne$ are outcome-equivalent.

  We have $\nextmove(\mealyStateInit, o)(a) = \stratAltOne^\mealy(s)(a) = \stratOne(s)(a) = \frac{1}{2}$.
  It follows that the distributions $\mealyDist{sa}$ and $\mealyDist{sb}$ over $\mealyStateSpace$ after $sa$ and $sb$ have respectively occurred are, by definition, for all $\mealyState\in\mealyStateSpace$,
  \[\mealyDist{sa}(\mealyState) =\frac{\update(\mealyStateInit, o, o)(\mealyState)\cdot\frac{1}{2}}{\frac{1}{2}} = \mealyDist{sb}(\mealyState).\]
  We conclude that $\stratAltOne^\mealy(sas) = \stratAltOne^\mealy(sbs)$.
  However, the outcome-equivalence of $\stratOne$ and $\stratAltOne^\mealy$ implies that $\stratAltOne^\mealy(sas)(a) = \stratOne(sas)(a) = 1$ and $\stratAltOne^\mealy(sbs)(b) = \stratOne(sbs)(b) = 1$, which constitutes a contradiction.
\end{proof}

\smallskip\noindent\textbf{Perfect recall.}
We now consider games where the player we study can see their own actions.
In this case, we have the following theorem.

\begin{theorem}\label{theorem:partial:lattice}
  The taxonomy of Figure~\ref{figure:lattice} for $\playerI$ established in games of perfect information extends to games with imperfect information with perfect recall.
\end{theorem}
\begin{proof}
We have previously explained that Theorem~\ref{theorem:RRR:RDR} holds even without perfect recall.
Therefore, we need only generalise the statements of Theorems~\ref{theorem:mixed:behavioural} and~\ref{theorem:RRR:DRR} to games with imperfect information and perfect recall.
As we did with Theorem~\ref{theorem:RRR:RDR}, we briefly argument how to adapt their proofs when replacing states and actions with observations in a setting with perfect recall.

In Theorem~\ref{theorem:mixed:behavioural}, we simulate RDD strategies by means
of DRD strategies.
We keep track of a finite set of pure FM strategies and remove one whenever we perceive an action that is inconsistent with it.
The visibility of actions makes this approach viable in games of imperfect information.
Furthermore, the RDD strategy that is simulated and all of the pure FM strategies encoded in the simulating DRD strategy all use the exact same observation-based update scheme.
Therefore, any RDD strategy has an outcome-equivalent DRD counterpart in games of imperfect information with perfect recall.

Theorem~\ref{theorem:RRR:DRR} claims that any RRR strategy admits some outcome-equivalent
DRR strategy. The approach consists in adding a new initial memory state, and then
leverage stochastic updates to enter the supplied RRR strategy from the second step
of the game and proceed as though we had been using it from the start.
We designed the updates from the new initial memory state so that, from
the second step in the game, the distribution over memory states was the same in the RRR
strategy and the constructed DRR one. More precisely, the update probability
distribution from the new initial state is defined as the probability over the
memory states of the RRR strategy after one step.
The main argument of the proof of Lemma~\ref{lemma:imperfect:fm:uniformity} ensures that this distribution is robust to the passage to imperfect information and justifies that the proof approach generalises to this setting.
\end{proof}

\section{Conclusion}\label{section:conclusion}
We have provided a complete classification of randomised finite-memory strategies based on the notion of outcome-equivalence in concurrent games of perfect and imperfect information.
We have shown that all inclusions of the studied strategy classes can be witnessed by effective constructions.
Regarding the separation of strategy classes, we have provided examples on the simplest possible game and, additionally, illustrated the separation of classes on games that use specifications from the literature.

Outcome-equivalence is a specification-agnostic means of comparing strategies; two strategies that are outcome-equivalent provide the same performance against any specification no matter the strategy of the other player (or players in a multiplayer setting).
In particular, the established inclusions are universal in a sense, as they hold no matter the means of comparing the behaviour of strategies.
Nonetheless, outcome-equivalence is a very strong criterion for the comparison of strategies.
Given some specification and a strategy in a class, even if there is no outcome-equivalent strategy in another class, there may be a strategy of the second class that performs just as well, or even better with respect to the specification.
This suggests further work, where, given a family of games or specifications, we use some alternative means of comparing strategies and attempt to provide a similar taxonomy in this setting, or to attempt to understand the simplest strategies required to satisfy relevant families of specifications.

\bibliographystyle{elsarticle-num}
\bibliography{references.bib}{}

\begin{thebibliography}{10}
\expandafter\ifx\csname url\endcsname\relax
  \def\url#1{\texttt{#1}}\fi
\expandafter\ifx\csname urlprefix\endcsname\relax\def\urlprefix{URL }\fi
\expandafter\ifx\csname href\endcsname\relax
  \def\href#1#2{#2} \def\path#1{#1}\fi

\bibitem{EM79}
A.~Ehrenfeucht, J.~Mycielski, Positional strategies for mean payoff games,
  International Journal of Game Theory 8~(2) (1979) 109--113.

\bibitem{Con92}
A.~Condon, The complexity of stochastic games, Information and Computation
  96~(2) (1992) 203--224.
\newblock \href {https://doi.org/10.1016/0890-5401(92)90048-K}
  {\path{doi:10.1016/0890-5401(92)90048-K}}.

\bibitem{GZ05}
H.~Gimbert, W.~Zielonka, Games where you can play optimally without any memory,
  in: M.~Abadi, L.~{de Alfaro} (Eds.), Proceedings of the 16th International
  Conference on Concurrency Theory, {CONCUR} 2005, San Francisco, {CA}, {USA},
  August 23--26, 2005, Vol. 3653 of Lecture Notes in Computer Science,
  Springer, 2005, pp. 428--442.
\newblock \href {https://doi.org/10.1007/11539452_33}
  {\path{doi:10.1007/11539452_33}}.

\bibitem{DBLP:conf/dagstuhl/2001automata}
E.~Gr{\"{a}}del, W.~Thomas, T.~Wilke (Eds.), Automata, Logics, and Infinite
  Games: {A} Guide to Current Research [outcome of a Dagstuhl seminar, February
  2001], Vol. 2500 of Lecture Notes in Computer Science, Springer, 2002.
\newblock \href {https://doi.org/10.1007/3-540-36387-4}
  {\path{doi:10.1007/3-540-36387-4}}.

\bibitem{rECCS}
M.~Randour, Automated synthesis of reliable and efficient systems through game
  theory: A case study, in: Proc. of ECCS 2012, Springer Proceedings in
  Complexity XVII, Springer, 2013, pp. 731--738.
\newblock \href {https://doi.org/10.1007/978-3-319-00395-5\_90}
  {\path{doi:10.1007/978-3-319-00395-5\_90}}.

\bibitem{DBLP:conf/lata/BrenguierCHPRRS16}
R.~Brenguier, L.~Clemente, P.~Hunter, G.~A. P{\'{e}}rez, M.~Randour, J.~Raskin,
  O.~Sankur, M.~Sassolas, Non-zero sum games for reactive synthesis, in:
  A.~Dediu, J.~Janousek, C.~Mart{\'{\i}}n{-}Vide, B.~Truthe (Eds.), Proceedings
  of the 10th International Conference on Language and Automata Theory and
  Applications, {LATA} 2016, Prague, Czech Republic, March 14--18, 2016, Vol.
  9618 of Lecture Notes in Computer Science, Springer, 2016, pp. 3--23.
\newblock \href {https://doi.org/10.1007/978-3-319-30000-9\_1}
  {\path{doi:10.1007/978-3-319-30000-9\_1}}.

\bibitem{DBLP:reference/mc/BloemCJ18}
R.~Bloem, K.~Chatterjee, B.~Jobstmann, Graph games and reactive synthesis, in:
  E.~M. Clarke, T.~A. Henzinger, H.~Veith, R.~Bloem (Eds.), Handbook of Model
  Checking, Springer, 2018, pp. 921--962.
\newblock \href {https://doi.org/10.1007/978-3-319-10575-8\_27}
  {\path{doi:10.1007/978-3-319-10575-8\_27}}.

\bibitem{DBLP:conf/focs/EmersonJ88}
E.~A. Emerson, C.~S. Jutla, The complexity of tree automata and logics of
  programs (extended abstract), in: Proceedings of the 29th Annual Symposium on
  Foundations of Computer Science, {FOCS} 1988, White Plains, New York, {USA},
  October 24--26, 1988, {IEEE} Computer Society, 1988, pp. 328--337.
\newblock \href {https://doi.org/10.1109/SFCS.1988.21949}
  {\path{doi:10.1109/SFCS.1988.21949}}.

\bibitem{DBLP:journals/tcs/Zielonka98}
W.~Zielonka, Infinite games on finitely coloured graphs with applications to
  automata on infinite trees, Theoretical Computer Science 200~(1-2) (1998)
  135--183.

\bibitem{DBLP:journals/corr/BruyereHR16}
V.~Bruy{\`{e}}re, Q.~Hautem, M.~Randour, Window parity games: an alternative
  approach toward parity games with time bounds, in: D.~Cantone, G.~Delzanno
  (Eds.), Proceedings of the 7th International Symposium on Games, Automata,
  Logics, and Formal Verification, {GandALF} 2016, Catania, Italy, September
  14--16, 2016, Vol. 226 of {EPTCS}, 2016, pp. 135--148.
\newblock \href {https://doi.org/10.4204/EPTCS.226.10}
  {\path{doi:10.4204/EPTCS.226.10}}.

\bibitem{DBLP:conf/concur/BrihayeDOR19}
T.~Brihaye, F.~Delgrange, Y.~Oualhadj, M.~Randour, Life is random, time is not:
  {M}arkov decision processes with window objectives, in: Fokkink and van
  Glabbeek  \cite{DBLP:conf/concur/2019}, pp. 8:1--8:18.
\newblock \href {https://doi.org/10.4230/LIPIcs.CONCUR.2019.8}
  {\path{doi:10.4230/LIPIcs.CONCUR.2019.8}}.

\bibitem{DBLP:journals/acta/BouyerMRLL18}
P.~Bouyer, N.~Markey, M.~Randour, K.~G. Larsen, S.~Laursen, Average-energy
  games, Acta Informatica 55~(2) (2018) 91--127.
\newblock \href {https://doi.org/10.1007/s00236-016-0274-1}
  {\path{doi:10.1007/s00236-016-0274-1}}.

\bibitem{DBLP:conf/concur/BruyereHRR19}
V.~Bruy{\`{e}}re, Q.~Hautem, M.~Randour, J.~Raskin, Energy mean-payoff games,
  in: Fokkink and van Glabbeek  \cite{DBLP:conf/concur/2019}, pp. 21:1--21:17.
\newblock \href {https://doi.org/10.4230/LIPIcs.CONCUR.2019.21}
  {\path{doi:10.4230/LIPIcs.CONCUR.2019.21}}.

\bibitem{DBLP:journals/acta/ChatterjeeRR14}
K.~Chatterjee, M.~Randour, J.~Raskin, Strategy synthesis for multi-dimensional
  quantitative objectives, Acta Informatica 51~(3-4) (2014) 129--163.
\newblock \href {https://doi.org/10.1007/s00236-013-0182-6}
  {\path{doi:10.1007/s00236-013-0182-6}}.

\bibitem{DBLP:journals/iandc/VelnerC0HRR15}
Y.~Velner, K.~Chatterjee, L.~Doyen, T.~A. Henzinger, A.~M. Rabinovich,
  J.~Raskin, The complexity of multi-mean-payoff and multi-energy games,
  Information and Computation 241 (2015) 177--196.
\newblock \href {https://doi.org/10.1016/j.ic.2015.03.001}
  {\path{doi:10.1016/j.ic.2015.03.001}}.

\bibitem{DBLP:journals/fmsd/RandourRS17}
M.~Randour, J.~Raskin, O.~Sankur, Percentile queries in multi-dimensional
  {Markov} decision processes, Formal methods in system design 50~(2-3) (2017)
  207--248.
\newblock \href {https://doi.org/10.1007/s10703-016-0262-7}
  {\path{doi:10.1007/s10703-016-0262-7}}.

\bibitem{DBLP:conf/tacas/DelgrangeKQR20}
F.~Delgrange, J.~Katoen, T.~Quatmann, M.~Randour, Simple strategies in
  multi-objective {MDPs}, in: A.~Biere, D.~Parker (Eds.), Proceedings (Part
  {I}) of the 26th International Conference on Tools and Algorithms for the
  Construction and Analysis of Systems, {TACAS} 2020, Held as Part of {ETAPS}
  2020, Dublin, Ireland, April 25--30, 2020, Vol. 12078 of Lecture Notes in
  Computer Science, Springer, 2020, pp. 346--364.
\newblock \href {https://doi.org/10.1007/978-3-030-45190-5\_19}
  {\path{doi:10.1007/978-3-030-45190-5\_19}}.

\bibitem{DBLP:conf/lics/ChatterjeeD12}
K.~Chatterjee, L.~Doyen, Partial-observation stochastic games: How to win when
  belief fails, in: Proceedings of the 27th Annual {IEEE} Symposium on Logic in
  Computer Science, {LICS} 2012, Dubrovnik, Croatia, June 25--28, 2012, {IEEE}
  Computer Society, 2012, pp. 175--184.
\newblock \href {https://doi.org/10.1109/LICS.2012.28}
  {\path{doi:10.1109/LICS.2012.28}}.

\bibitem{DBLP:conf/icalp/BerthonRR17}
R.~Berthon, M.~Randour, J.~Raskin, Threshold constraints with guarantees for
  parity objectives in {M}arkov decision processes, in: I.~Chatzigiannakis,
  P.~Indyk, F.~Kuhn, A.~Muscholl (Eds.), Proceedings of the 44th International
  Colloquium on Automata, Languages, and Programming, {ICALP} 2017, Warsaw,
  Poland, July 10--14, 2017, Vol.~80 of LIPIcs, Schloss Dagstuhl -
  Leibniz-Zentrum f{\"{u}}r Informatik, 2017, pp. 121:1--121:15.
\newblock \href {https://doi.org/10.4230/LIPIcs.ICALP.2017.121}
  {\path{doi:10.4230/LIPIcs.ICALP.2017.121}}.

\bibitem{DBLP:journals/corr/abs-1006-1404}
J.~Cristau, C.~David, F.~Horn, How do we remember the past in randomised
  strategies?, in: A.~Montanari, M.~Napoli, M.~Parente (Eds.), Proceedings of
  the First Symposium on Games, Automata, Logic, and Formal Verification,
  {GANDALF} 2010, Minori (Amalfi Coast), Italy, June 17--18, 2010, Vol.~25 of
  {EPTCS}, 2010, pp. 30--39.
\newblock \href {https://doi.org/10.4204/EPTCS.25.7}
  {\path{doi:10.4204/EPTCS.25.7}}.

\bibitem{OR94}
M.~J. Osborne, A.~Rubinstein, A course in game theory, The MIT Press, 1994.

\bibitem{Aumann64}
R.~J.~. Aumann, Mixed and behavior strategies in infinite extensive games, in:
  M.~Dresher, L.~S. Shapley, A.~W. Tucker (Eds.), Advances in Game Theory.
  (AM-52), Volume 52, Princeton University Press, 1964, pp. 627--650.
\newblock \href {https://doi.org/10.1515/9781400882014-029}
  {\path{doi:10.1515/9781400882014-029}}.

\bibitem{DBLP:journals/iandc/BruyereFRR17}
V.~Bruy{\`{e}}re, E.~Filiot, M.~Randour, J.~Raskin, Meet your expectations with
  guarantees: Beyond worst-case synthesis in quantitative games, Information
  and Computation 254 (2017) 259--295.
\newblock \href {https://doi.org/10.1016/j.ic.2016.10.011}
  {\path{doi:10.1016/j.ic.2016.10.011}}.

\bibitem{DBLP:journals/lmcs/BouyerRORV22}
P.~Bouyer, S.~{Le Roux}, Y.~Oualhadj, M.~Randour, P.~Vandenhove, Games where
  you can play optimally with arena-independent finite memory, Logical Methods
  in Computer Science 18~(1) (2022).
\newblock \href {https://doi.org/10.46298/lmcs-18(1:11)2022}
  {\path{doi:10.46298/lmcs-18(1:11)2022}}.

\bibitem{DBLP:conf/concur/BouyerORV21}
P.~Bouyer, Y.~Oualhadj, M.~Randour, P.~Vandenhove, Arena-independent
  finite-memory determinacy in stochastic games, in: S.~Haddad, D.~Varacca
  (Eds.), Proceedings of the 32nd International Conference on Concurrency
  Theory, {CONCUR} 2021, Virtual Conference, August 24--27, 2021, Vol. 203 of
  LIPIcs, Schloss Dagstuhl -- Leibniz-Zentrum f{\"{u}}r Informatik, 2021, pp.
  26:1--26:18.
\newblock \href {https://doi.org/10.4230/LIPIcs.CONCUR.2021.26}
  {\path{doi:10.4230/LIPIcs.CONCUR.2021.26}}.

\bibitem{DBLP:journals/tcs/AlfaroHK07}
L.~de~Alfaro, T.~A. Henzinger, O.~Kupferman, Concurrent reachability games,
  Theoretical Computer Science 386~(3) (2007) 188--217.
\newblock \href {https://doi.org/10.1016/j.tcs.2007.07.008}
  {\path{doi:10.1016/j.tcs.2007.07.008}}.

\bibitem{Sha53}
L.~S. Shapley, \href{https://www.pnas.org/content/39/10/1095}{Stochastic
  games}, Proceedings of the National Academy of Sciences 39~(10) (1953)
  1095--1100.
\newblock \href
  {http://arxiv.org/abs/https://www.pnas.org/content/39/10/1095.full.pdf}
  {\path{arXiv:https://www.pnas.org/content/39/10/1095.full.pdf}}, \href
  {https://doi.org/10.1073/pnas.39.10.1095}
  {\path{doi:10.1073/pnas.39.10.1095}}.
\newline\urlprefix\url{https://www.pnas.org/content/39/10/1095}

\bibitem{MaitraSudderth}
A.~Maitra, W.~Sudderth, Stochastic games with {Borel} payoffs, in: A.~Neyman,
  S.~Sorin (Eds.), Stochastic Games and Applications, Springer Netherlands,
  Dordrecht, 2003, pp. 367--373.

\bibitem{DBLP:journals/iandc/Chatterjee0GH15}
K.~Chatterjee, L.~Doyen, H.~Gimbert, T.~A. Henzinger, Randomness for free,
  Information and Computation 245 (2015) 3--16.
\newblock \href {https://doi.org/10.1016/j.ic.2015.06.003}
  {\path{doi:10.1016/j.ic.2015.06.003}}.

\bibitem{DBLP:conf/fsttcs/0001PR18}
S.~{Le Roux}, A.~Pauly, M.~Randour, Extending finite-memory determinacy by
  {B}oolean combination of winning conditions, in: S.~Ganguly, P.~K. Pandya
  (Eds.), Proceedings of the 38th {IARCS} Annual Conference on Foundations of
  Software Technology and Theoretical Computer Science, {FSTTCS} 2018,
  Ahmedabad, India, December 11--13, 2018, Vol. 122 of LIPIcs, Schloss Dagstuhl
  - Leibniz-Zentrum fuer Informatik, 2018, pp. 38:1--38:20.
\newblock \href {https://doi.org/10.4230/LIPIcs.FSTTCS.2018.38}
  {\path{doi:10.4230/LIPIcs.FSTTCS.2018.38}}.

\bibitem{DBLP:conf/stacs/BouyerRV22}
P.~Bouyer, M.~Randour, P.~Vandenhove, Characterizing omega-regularity through
  finite-memory determinacy of games on infinite graphs, in: P.~Berenbrink,
  B.~Monmege (Eds.), Proceedings of the 39th International Symposium on
  Theoretical Aspects of Computer Science, {STACS} 2022, Marseille, France,
  March 15--18, 2022, Vol. 219 of LIPIcs, Schloss Dagstuhl - Leibniz-Zentrum
  f{\"{u}}r Informatik, 2022, pp. 16:1--16:16.
\newblock \href {https://doi.org/10.4230/LIPIcs.STACS.2022.16}
  {\path{doi:10.4230/LIPIcs.STACS.2022.16}}.

\bibitem{KopThesis}
E.~Kopczy{\'n}ski, Half-positional determinacy of infinite games, Ph.D. thesis,
  Warsaw University (2008).

\bibitem{DBLP:journals/lmcs/0001HPW19}
J.~Gutierrez, P.~Harrenstein, G.~Perelli, M.~J. Wooldridge,
  \href{https://doi.org/10.23638/LMCS-15(3:32)2019}{Nash equilibrium and
  bisimulation invariance}, Logical Methods in Computer Science 15~(3) (2019).
\newblock \href {https://doi.org/10.23638/LMCS-15(3:32)2019}
  {\path{doi:10.23638/LMCS-15(3:32)2019}}.
\newline\urlprefix\url{https://doi.org/10.23638/LMCS-15(3:32)2019}

\bibitem{DBLP:conf/icalp/CasaresO23}
A.~Casares, P.~Ohlmann,
  \href{https://doi.org/10.4230/LIPIcs.ICALP.2023.122}{Characterising memory in
  infinite games}, in: K.~Etessami, U.~Feige, G.~Puppis (Eds.), Proceedings of
  the 50th International Colloquium on Automata, Languages, and Programming,
  {ICALP} 2023, July 10--14, 2023, Paderborn, Germany, Vol. 261 of LIPIcs,
  Schloss Dagstuhl - Leibniz-Zentrum f{\"{u}}r Informatik, 2023, pp.
  122:1--122:18.
\newblock \href {https://doi.org/10.4230/LIPICS.ICALP.2023.122}
  {\path{doi:10.4230/LIPICS.ICALP.2023.122}}.
\newline\urlprefix\url{https://doi.org/10.4230/LIPIcs.ICALP.2023.122}

\bibitem{DBLP:journals/jacm/BertrandGG17}
N.~Bertrand, B.~Genest, H.~Gimbert, Qualitative determinacy and decidability of
  stochastic games with signals, Journal of the ACM 64~(5) (2017) 33:1--33:48.
\newblock \href {https://doi.org/10.1145/3107926} {\path{doi:10.1145/3107926}}.

\bibitem{DBLP:conf/qest/ChatterjeeAH04}
K.~Chatterjee, L.~{de Alfaro}, T.~A. Henzinger, Trading memory for randomness,
  in: Proceedings of the 1st International Conference on Quantitative
  Evaluation of Systems, {QEST} 2004, Enschede, The Netherlands, 27--30
  September 2004, {IEEE} Computer Society, 2004, pp. 206--217.
\newblock \href {https://doi.org/10.1109/QEST.2004.1348035}
  {\path{doi:10.1109/QEST.2004.1348035}}.

\bibitem{DBLP:conf/hybrid/ChatterjeeHP08}
K.~Chatterjee, T.~A. Henzinger, V.~S. Prabhu, Trading infinite memory for
  uniform randomness in timed games, in: M.~Egerstedt, B.~Mishra (Eds.),
  Proceedings of the 11th International Workshop on Hybrid Systems: Computation
  and Control, {HSCC} 2008, St. Louis, MO, USA, April 22--24, 2008, Vol. 4981
  of Lecture Notes in Computer Science, Springer, 2008, pp. 87--100.
\newblock \href {https://doi.org/10.1007/978-3-540-78929-1_7}
  {\path{doi:10.1007/978-3-540-78929-1_7}}.

\bibitem{DBLP:conf/stacs/Horn09}
F.~Horn, Random fruits on the {Zielonka} tree, in: S.~Albers, J.~Marion (Eds.),
  Proceedings of the 26th International Symposium on Theoretical Aspects of
  Computer Science, {STACS} 2009, Freiburg, Germany, February 26--28, 2009,
  Vol.~3 of LIPIcs, Schloss Dagstuhl --Leibniz-Zentrum f{\"{u}}r Informatik,
  Germany, 2009, pp. 541--552.
\newblock \href {https://doi.org/10.4230/LIPIcs.STACS.2009.1848}
  {\path{doi:10.4230/LIPIcs.STACS.2009.1848}}.

\bibitem{DBLP:conf/concur/MonmegePR20}
B.~Monmege, J.~Parreaux, P.~Reynier, Reaching your goal optimally by playing at
  random with no memory, in: I.~Konnov, L.~Kov{\'{a}}cs (Eds.), Proceedings of
  the 31st International Conference on Concurrency Theory, {CONCUR} 2020,
  Vienna, Austria, September 1--4, 2020, Vol. 171 of LIPIcs, Schloss Dagstuhl
  --Leibniz-Zentrum f{\"{u}}r Informatik, 2020, pp. 26:1--26:21.
\newblock \href {https://doi.org/10.4230/LIPIcs.CONCUR.2020.26}
  {\path{doi:10.4230/LIPIcs.CONCUR.2020.26}}.

\bibitem{DBLP:conf/concur/MainR22}
J.~C.~A. Main, M.~Randour, Different strokes in randomised strategies:
  Revisiting {Kuhn's} theorem under finite-memory assumptions, in: B.~Klin,
  S.~Lasota, A.~Muscholl (Eds.), Proceedings of the 33rd International
  Conference on Concurrency Theory, {CONCUR} 2022, Warsaw, Poland, September
  12--16, 2022, Vol. 243 of LIPIcs, Schloss Dagstuhl --Leibniz-Zentrum
  f{\"{u}}r Informatik, 2022, pp. 22:1--22:18.
\newblock \href {https://doi.org/10.4230/LIPIcs.CONCUR.2022.22}
  {\path{doi:10.4230/LIPIcs.CONCUR.2022.22}}.

\bibitem{Dur19}
R.~Durrett, Probability: Theory and Examples, 5th Edition, Cambridge Series in
  Statistical and Probabilistic Mathematics, Cambridge University Press, 2019.
\newblock \href {https://doi.org/10.1017/9781108591034}
  {\path{doi:10.1017/9781108591034}}.

\bibitem{DBLP:journals/corr/abs-1104-3489}
T.~Br{\'{a}}zdil, V.~Brozek, K.~Chatterjee, V.~Forejt, A.~Kucera, {Markov}
  decision processes with multiple long-run average objectives, Logical Methods
  in Computer Science 10~(1) (2014).
\newblock \href {https://doi.org/10.2168/LMCS-10(1:13)2014}
  {\path{doi:10.2168/LMCS-10(1:13)2014}}.

\bibitem{DBLP:journals/lmcs/ChatterjeeKK17}
K.~Chatterjee, Z.~Kret{\'{\i}}nsk{\'{a}}, J.~Kret{\'{\i}}nsk{\'{y}}, Unifying
  two views on multiple mean-payoff objectives in {Markov} decision processes,
  Logical Methods in Computer Science 13~(2) (2017).
\newblock \href {https://doi.org/10.23638/LMCS-13(2:15)2017}
  {\path{doi:10.23638/LMCS-13(2:15)2017}}.

\bibitem{DBLP:journals/lmcs/EtessamiKVY08}
K.~Etessami, M.~Z. Kwiatkowska, M.~Y. Vardi, M.~Yannakakis, Multi-objective
  model checking of {Markov} decision processes, Logical Methods in Computer
  Science 4~(4) (2008).
\newblock \href {https://doi.org/10.2168/LMCS-4(4:8)2008}
  {\path{doi:10.2168/LMCS-4(4:8)2008}}.

\bibitem{gog23}
N.~Fijalkow, N.~Bertrand, P.~Bouyer{-}Decitre, R.~Brenguier, A.~Carayol,
  J.~Fearnley, H.~Gimbert, F.~Horn, R.~Ibsen{-}Jensen, N.~Markey, B.~Monmege,
  P.~Novotn{\'{y}}, M.~Randour, O.~Sankur, S.~Schmitz, O.~Serre, M.~Skomra,
  Games on graphs, CoRR abs/2305.10546 (2023).
\newblock \href {https://doi.org/10.48550/arXiv.2305.10546}
  {\path{doi:10.48550/arXiv.2305.10546}}.

\bibitem{DBLP:conf/mfcs/ChenFKSW13}
T.~Chen, V.~Forejt, M.~Z. Kwiatkowska, A.~Simaitis, C.~Wiltsche, On stochastic
  games with multiple objectives, in: K.~Chatterjee, J.~Sgall (Eds.),
  Proceedings of the 38th International Symposium on Mathematical Foundations
  of Computer Science, {MFCS} 2013, Klosterneuburg, Austria, August 26--30,
  2013, Vol. 8087 of Lecture Notes in Computer Science, Springer, 2013, pp.
  266--277.
\newblock \href {https://doi.org/10.1007/978-3-642-40313-2\_25}
  {\path{doi:10.1007/978-3-642-40313-2\_25}}.

\bibitem{TechRep:ChenFKSW13}
T.~Chen, V.~Forejt, M.~Kwiatkowska, A.~Simaitis, C.~Wiltsche, On stochastic
  games with multiple objectives, Tech. Rep. RR-13-06, University of Oxford,
  Department of Computer Science (2013).

\bibitem{DBLP:conf/lics/AshokCKWW20}
P.~Ashok, K.~Chatterjee, J.~Kret{\'{\i}}nsk{\'{y}}, M.~Weininger, T.~Winkler,
  Approximating values of generalized-reachability stochastic games, in:
  H.~Hermanns, L.~Zhang, N.~Kobayashi, D.~Miller (Eds.), Proceedings of the
  35th Annual {ACM/IEEE} Symposium on Logic in Computer Science, {LICS} 2020,
  Saarbr{\"{u}}cken, Germany, July 8--11, 2020, {ACM}, 2020, pp. 102--115.
\newblock \href {https://doi.org/10.1145/3373718.3394761}
  {\path{doi:10.1145/3373718.3394761}}.

\bibitem{DBLP:conf/concur/2019}
W.~J. Fokkink, R.~van Glabbeek (Eds.), Proceedings of the 30th International
  Conference on Concurrency Theory, {CONCUR} 2019, Amsterdam, the Netherlands,
  August 27--30, 2019, Vol. 140 of LIPIcs, Schloss Dagstuhl - Leibniz-Zentrum
  f{\"{u}}r Informatik, 2019.
\newblock \href {https://doi.org/10.4230/LIPIcs.CONCUR.2019}
  {\path{doi:10.4230/LIPIcs.CONCUR.2019}}.

\end{thebibliography}

\newpage
\appendix
\section{Probability over memory states in stochastic-update Mealy machines}\label{appendix:probability_memory}

\subsection{Inductive relation for the distribution over memory states}

We fix a game $\game = \concurTuple$.
In this section, we derive the formula for updates of the distribution over memory states of a Mealy machine after a word in $(\concurStateSpace\concurActionSpace)^*$ takes place under its induced strategy.
We build our reasoning on conditional probabilities.
We show the equations for a Mealy machine of $\playerOne$; the reasoning for $\playerTwo$ is analogous.
We fix a Mealy machine $\mealy =\mealyTuple$ of $\playerOne$.

Let $\prefHist= \concurState_0\concurAction_0\concurState_1\concurAction_1\ldots\concurState_k\concurAction_k\in(\concurStateSpace\concurActionSpace)^*$.
We study the distribution over $\mealyStateSpace$ after $\prefHist$ takes place.
This distribution is well-defined only under specific assumptions on $\prefHist$.
The probability of being in some memory state $\mealyState$ after $\prefHist$ is formally the conditional probability of being in $\mealyState$ at step $k+1$ given $\prefHist$.
We must therefore require that $\prefHist$ is of positive probability under $\mealy$ and (at least) one strategy of $\playerTwo$, i.e., $\prefHist$ must be consistent with $\mealy$.

We reuse the notation $\mealyDist{\prefHist}$ introduced in Section~\ref{section:preliminaries}.
The main goal of this section is to prove the inductive relation for $\mealyDist{\prefHist}$ recalled below.
Assume $\prefHist$ is not the empty word and let $\prefHist' = \concurState_0\concurAction_0\concurState_1\concurAction_1\ldots \concurState_{k-1}\concurAction_{k-1}$.
We prove the equation:
\begin{equation}\label{equation:update}
  \mealyDist{\prefHist}(\mealyState) = \frac{\sum_{\mealyState'\in\mealyStateSpace} \mealyDist{\prefHist'}(\mealyState')\cdot \update(\mealyState', \concurState_k, \concurAction_k)(m)\cdot \nextmove(\mealyState', \concurState_k)(\concurActionOne_k)}{\sum_{\mealyState'\in\mealyStateSpace} \mealyDist{\prefHist'}(\mealyState')\cdot \nextmove(\mealyState', \concurState_k)(\concurActionOne_k)}.
\end{equation}

We derive this equation by studying the Markov chain induced in $\game$ by $\mealy$ and a strategy of $\playerTwo$ from an initial state.
We fix a strategy $\stratTwo$ of $\playerTwo$ and an initial state $\concurState_\init\in\concurStateSpace$.
In the sequel, we prove that the equations above hold for any $\prefHist\in (\concurStateSpace\concurActionSpace)^*$ that starts in $\concurState_\init$ and is consistent with $\mealy$ and $\stratTwo$.
As indicated by Equation~\eqref{equation:update}, the choice of $\stratTwo$ has no impact on $\mealyDist{\prefHist}$ (this strategy is required so the Markov chain is well-defined).

\subsection{Description of the Markov chain}

First, we describe the Markov chain induced by playing $\mealy$ and $\stratTwo$ from $\concurState_\init$ in $\game$.
Formally, it is an infinite Markov chain where states are non-empty sequences $(\concurState_0, \mealyState_0, \concurAction_0)\ldots (\concurState_k, \mealyState_k, \concurAction_k)$ in $(\concurStateSpace\times\mealyStateSpace\times\concurActionSpace)^*$ where $\concurState_0\concurAction_0\ldots \concurAction_{k-1}\concurState_k$ is a history of $\game$ and $\concurAction_k\in\concurActionSpace(\concurState_k)$.
The initial probability of a state $(\concurState_\init, \mealyState, \concurAction)$ is given as the product
$\mu_\init(\mealyState)\cdot \nextmove(\mealyState, \concurState_\init)(\concurActionOne)\cdot \stratTwo(\concurState_\init)(\concurActionTwo);$
we multiply the probability that $\mealyState$ is drawn as the initial memory state, that $\concurActionOne$ is selected in memory state $\mealyState$ and that $\concurActionTwo$ is selected by $\stratTwo$.
The initial distribution assigns $0$ to any other state of the Markov chain.

Let $t = (\concurState_0, \mealyState_0, \concurAction_0)\ldots (\concurState_k, \mealyState_k, \concurAction_k)$ and $t' = t (\concurState_{k+1}, \mealyState_{k+1}, \concurAction_{k+1})$ be two states of the Markov chain.
The transition probability from $t$ to $t'$ is defined by the product
\begin{equation*}
  \concurTrans(\concurState_{k},\concurAction_{k})(\concurState_{k+1})\cdot
  \update(\mealyState_{k}, \concurState_{k},\concurAction_{k})(\mealyState_{k+1}) \cdot
  \nextmove(\mealyState_{k+1},\concurState_{k+1})(\concurActionOne_{k+1})\cdot
  \stratTwo(\concurState_0\concurAction_0\ldots\concurAction_{k}\concurState_{k+1})(\concurActionTwo_{k+1}).
\end{equation*}

We define a probability measure over infinite sequences of states of the Markov chain described above in the standard way, using cylinders.
Initial infinite sequences of this Markov chain belong in $((\concurStateSpace\times\mealyStateSpace\times\concurActionSpace)^*)^\omega$ and are of the form $t_0 (t_0t_1)(t_0t_1t_2)\ldots$  where $t_k\in \concurStateSpace\times\mealyStateSpace\times\concurActionSpace$.
We identify these infinite initial sequences to elements of $(\concurStateSpace\times\mealyStateSpace\times\concurActionSpace)^\omega$.
We will write $\proba$ for the probability distribution over $(\concurStateSpace\times\mealyStateSpace\times\concurActionSpace)^\omega$ obtained this way.

In the sequel, we use random variables defined over $(\concurStateSpace\times\mealyStateSpace\times\concurActionSpace)^\omega$ to refer to components or parts of these sequences and derive Equation~\eqref{equation:update}.
We introduce some notation.
Let $B$ denote a set. For any random variable $X\colon (\concurStateSpace\times\mealyStateSpace\times\concurActionSpace)^\omega\to B$ and $b\in B$, we write $\{X = b\}$ for $X^{-1}(\{b\})$ and omit the braces when evaluating $\proba$ over such sets, e.g., we write $\proba(X = b)$ for $\proba(\{X = b\})$.

We use the following random variables.
We denote by $\concurStateSpace_k$ (resp.~$\mealyStateSpace_k$, $\concurActionSpace_k = (\concurActionSpaceOne_k, \concurActionSpaceTwo_k)$) the random variable that describes the state of the game (resp.~memory state, pair of actions) at position $k$ of a sequence in $(\concurStateSpace\times\mealyStateSpace\times\concurActionSpace)^\omega$.
We write $W_k$  for the random variable describing the sequence $W_k = \concurStateSpace_0\concurActionSpace_0\concurStateSpace_1\concurActionSpace_1\ldots \concurStateSpace_{k-1}\concurActionSpace_{k-1}$ which is the sequence read by $\mealy$ prior to step $k$.
Similarly, we write $H_k$ (resp.~$\overline{\mealyStateSpace_k}$) for the random variable $H_k = W_k\concurStateSpace_k$ (resp.~$\overline{\mealyStateSpace_k} = \mealyStateSpace_0\mealyStateSpace_1\ldots \mealyStateSpace_{k}$) that describes the history at step $k$ (resp.~the sequence of memory states up to step $k$).

We now list properties of these random variables we refer to in the proof of Equation~\eqref{equation:update}.
We will be concerned with conditional probabilities, and therefore all upcoming equations will assume that some event has a positive probability.
We mainly rely on the properties listed below.

First, memory updates only depend on the latest memory state, game state and pair of actions.
Formally, let us take a non-empty sequence $\prefHist=\concurState_0\concurAction_0\ldots \concurState_{k-1}\concurAction_{k-1}\in (\concurStateSpace\concurActionSpace)^+$ such that $\proba(W_k = \prefHist)>0$.
For any sequence of memory states $\overline{\mealyState} = \mealyState_0\mealyState_1\ldots \mealyState_{k-1}\in \mealyStateSpace^{k}$ such that $\proba(\overline{\mealyStateSpace_{k-1}}=\overline{\mealyState}\mid W_k = \prefHist)>0$, we have, for every state $\mealyState\in\mealyStateSpace$,
\begin{align*}
  \proba(\mealyStateSpace_k = \mealyState \mid & W_k = w\land \overline{\mealyStateSpace_{k-1}} = \overline{\mealyState}) \\
  &=  \proba(\mealyStateSpace_k = \mealyState \mid \concurStateSpace_{k-1} = \concurState_{k-1}\land \mealyStateSpace_{k-1} = \mealyState_{k-1}\land \concurActionSpace_{k-1} = \concurAction_{k-1}) \\
  & = \update(\mealyState_{k-1}, \concurState_{k-1}, \concurAction_{k-1})(m).
\end{align*}

Next, memory updates are independent from game state updates.
In particular, for any history $\hist=\concurState_0\concurAction_0\ldots \concurAction_{k-1}\concurState_k\in\histSet{\game}$ such that $\proba(H_k = \hist)>0$, we have for any  $\mealyState\in\mealyStateSpace$,
\[\proba(\mealyStateSpace_k = \mealyState \mid H_k = h) = \proba(\mealyStateSpace_k = \mealyState \mid W_k = \prefHist),\]
where $\prefHist$ denotes $\concurState_0\concurAction_0\ldots \concurState_{k-1}\concurAction_{k-1}$.

The last three properties are related to the probability of actions following a history.
To formulate these properties, we fix $\hist=\concurState_0\concurAction_0\ldots \concurState_k\in\histSet{\game}$ such that $\proba(H_k = \hist)>0$ and a sequence of memory states $\overline{\mealyState} = \mealyState_0\mealyState_1\ldots \mealyState_k\in M^{k+1}$ that is compatible with $\hist$, i.e., such that $\proba(\overline{\mealyStateSpace_k}=\overline{\mealyState}\mid H_k = \hist)>0$.
First, we note that the actions choices of the players are independent given $\hist$ and $\overline{\mealyState}$, i.e., for all $\concurAction\in\concurActionSpace(\concurState_k)$, we have
\begin{align*}
  \proba(\concurActionSpace_k = \concurAction\mid
  & H_k = \hist\land \overline{\mealyStateSpace_k} = \overline{\mealyState}) \\ 
  =\, &  \proba(\concurActionSpaceOne_k = \concurActionOne\mid H_k = \hist\land \overline{\mealyStateSpace_k} = \overline{\mealyState})\cdot \proba(\concurActionSpaceTwo_k = \concurActionTwo\mid H_k = \hist\land \overline{\mealyStateSpace_k} = \overline{\mealyState}).
\end{align*}

Second, we remark that the next action of $\playerOne$ at any point depends only on the last state of the history and the last memory state.
In other words, for any sequence of memory states $\overline{\mealyState} = \mealyState_0\mealyState_1\ldots \mealyState_k\in \mealyStateSpace^{k+1}$ such that $\proba(\overline{\mealyStateSpace_k}=\overline{\mealyState}\mid H_k = \hist)>0$ (i.e., any sequence of memory states likely to occur by processing $\hist$)  and action $\concurActionOne\in \concurActionSpaceOne(\concurState_k)$,
\begin{align*}
  \proba(\concurActionSpaceOne_k = \concurActionOne\mid
  H_k = h\land \overline{\mealyStateSpace_k} = \overline{\mealyState}) & =
  \proba(\concurActionSpaceOne_k = \concurActionOne \mid
  \concurStateSpace_k = \concurState_k\land \mealyStateSpace_k = \mealyState_k) \\
  & = \nextmove(\mealyState_k, \concurState_k)(\concurActionOne).
\end{align*}
Finally, we remark that the probability of the next action of $\playerTwo$ is given by $\stratTwo(\hist)$ and is independent of the sequence of memory states $\overline{\mealyState}$.
Formally, we have,
\[\proba(\concurActionSpaceTwo_k = \concurActionTwo\mid
  H_k = \hist\land \overline{\mealyStateSpace_k} = \overline{\mealyState}) =
  \proba(\concurActionSpaceTwo_k =\concurActionTwo \mid H_k = \hist) =
  \stratTwo(\hist)(\concurActionTwo).
\]

\subsection{Proving Equation~\eqref{equation:update}}
Let $\prefHist = \concurState_0\concurAction_0\concurState_1\concurAction_1\ldots \concurState_{k}\concurAction_{k}\in (\concurStateSpace\concurActionSpace)^*$ such that $\proba(W_{k+1}=\prefHist) > 0$.
For any $\mealyState\in\mealyStateSpace$, the probability $\mealyDist{\prefHist}(\mealyState)$ is formalised by the conditional probability $\proba(\mealyStateSpace_{k+1}=m\mid W_{k+1} = \prefHist)$.
Henceforth, we assume that $\prefHist$ is non-empty.
Let $\prefHist' = \concurState_0\concurAction_0\ldots \concurState_{k-1}\concurAction_{k-1}$ be the prefix of $\prefHist$ without its last state and last action.
To prove Equation~\eqref{equation:update}, we must express $\mealyDist{\prefHist}$ as a function of $\mealyDist{\prefHist'}$.

We fix $\mealyState\in\mealyStateSpace$ for the remainder of the section.
The first step in our approach is to consider all possible paths in $\mealy$ that reach $\mealyState$ and have a positive probability of occurring whenever $\prefHist$ is processed by $\mealy$.
Considering these paths will allow us to exhibit the term in which $\update$ appears within Equation~\eqref{equation:update}.
We use the notation $\paths_{\prefHist}^{\mealyState}$ for the set of sequences $\mealyState_0\mealyState_1\ldots \mealyState_k$ such that the path $\mealyState_0\mealyState_1\ldots \mealyState_k\mealyState$ in $\mealy$ is compatible with $\prefHist$, i.e., we let
\[\paths_{\prefHist}^{\mealyState} = \{\mealyState_0\mealyState_1\ldots \mealyState_{k}\in \mealyStateSpace^{k+1}\mid \proba( \overline{\mealyStateSpace_{k+1}}= \mealyState_0\ldots \mealyState_{k}\mealyState\mid W_{k+1} = \prefHist)>0\}.\]
We define, for any memory state $\mealyState'\in\mealyStateSpace$, the set $\paths_{\prefHist'}^{\mealyState'}$ as a subset of $\mealyStateSpace^{k}$ in the same fashion.
 It follows from the law of total probability (formulated for conditional probabilities), that
\begin{align*}
  \mealyDist{\prefHist}(\mealyState)
  & =\proba(\mealyStateSpace_{k+1}=\mealyState \mid  W_{k+1} = \prefHist) \\
  &= \sum_{\overline{\mealyState}\mealyState'\in \paths_{\prefHist}^{\mealyState}}
    \proba(\mealyStateSpace_{k+1}=\mealyState\mid W_{k+1} = \prefHist\land \overline{\mealyStateSpace_k}=\overline{\mealyState}\mealyState')\cdot
    \proba( \overline{\mealyStateSpace_k}=\overline{\mealyState}\mealyState'\mid W_{k+1} = \prefHist) \\
  &= \sum_{\overline{\mealyState}\mealyState'\in \paths_{\prefHist}^{\mealyState}}
    \update(\mealyState', \concurState_k, \concurAction_k)(m) \cdot
    \proba( \overline{\mealyStateSpace_k}=\overline{\mealyState}\mealyState'\mid W_{k+1} = \prefHist) \\
  &= \sum_{\mealyState'\in\mealyStateSpace}\sum_{\overline{\mealyState}\in \paths_{\prefHist'}^{\mealyState'}}
    \update(\mealyState', \concurState_k, \concurAction_k)(m) \cdot
    \proba( \overline{\mealyStateSpace_k}=\overline{\mealyState}\mealyState'\mid W_{k+1} = \prefHist) \\
  &= \sum_{\mealyState'\in\mealyStateSpace}
    \left(
    \update(\mealyState', \concurState_k, \concurAction_k)(m)\cdot
    \sum_{\overline{\mealyState}\in\paths_{\prefHist'}^{\mealyState'}}
    \proba(\overline{\mealyStateSpace_k}=\overline{\mealyState}\mealyState'\mid W_{k+1} = \prefHist)
    \right).
\end{align*}

We now shift our focus to the inner sum.
Let us fix $\mealyState'\in\mealyStateSpace$.
This sum is indexed by all paths in $\mealy$ that reach $\mealyState'$ and have positive probability.
Therefore, it follows from the law of total probability that
\[\sum_{\overline{\mealyState}\in\paths_{\prefHist'}^{\mealyState'}}
  \proba(\overline{\mealyStateSpace_k}=\overline{\mealyState}\mealyState'\mid
  W_{k+1} = \prefHist) =
  \proba(\mealyStateSpace_k=\mealyState'\mid
  W_{k+1} = \prefHist).
\]

We underscore that this probability is not $\mealyDist{\prefHist'}(\mealyState') = \proba(\mealyStateSpace_k=\mealyState'\mid W_k= \prefHist')$.
Up to this point, we have established that
\begin{equation}\label{equation:conditional:3}
  \mealyDist{\prefHist}(\mealyState) =
  \sum_{\mealyState'\in\mealyStateSpace}
  \update(\mealyState', \concurState_k, \concurAction_k)(\mealyState)\cdot
  \proba(\mealyStateSpace_k=\mealyState'\mid W_{k+1} = \prefHist).
\end{equation}

Using Bayes' theorem, we can show a relation between the probability $\proba(\mealyStateSpace_k=\mealyState'\mid W_{k+1} = \prefHist)$ and $\mealyDist{\prefHist'}(\mealyState')$.
Let us write $\hist'$ in the following for the history $\prefHist'\concurState_k$ given by $\prefHist$ without its last action.
We note that $\{W_{k+1} = \prefHist\}$ and $\{H_{k} = \hist'\}\cap \{\concurActionSpace_{k} = \concurAction_k\}$ both denote the same set.
We have the following chain of equations:
\begin{align*}
  \proba(\mealyStateSpace_k=\mealyState'\mid
  & W_{k+1} = \prefHist) \\
  &= \proba(\mealyStateSpace_k=\mealyState'\land H_k = \hist'\mid
    W_{k+1} = \prefHist)\\
  &= \frac{\proba(W_{k+1} = \prefHist \mid
    \mealyStateSpace_k=\mealyState'\land H_k = \hist')\cdot
    \proba(\mealyStateSpace_k=\mealyState'\land H_k = \hist')}{\proba(W_{k+1} = \prefHist)} \\
  &= \frac{\proba(\concurActionSpace_k = \concurAction_k \mid
    \mealyStateSpace_k=\mealyState'\land H_k = \hist')\cdot
    \proba(\mealyStateSpace_k=\mealyState'\mid H_k = \hist')}{\proba(\concurActionSpace_k = \concurAction_k \mid H_k = \hist')}.
\end{align*}

The first equality is a consequence of $W_{k+1} = \prefHist$ implying $H_k=h'$. Bayes' theorem is used between lines two and three. To go from the third to the fourth line, both the numerator and denominator of the fraction have been multiplied by $\proba(H_k = h')$ and the definition of conditional probabilities has been used to rewrite the denominator and the rightmost factor of the numerator.

We now analyse the three terms of the fraction above.
The probability $\proba(\mealyStateSpace_k=\mealyState'\mid H_k = \hist')$ is equal to the probability $\proba(\mealyStateSpace_k=\mealyState'\mid W_k = \prefHist')$.
This is exactly $\mealyDist{\prefHist'}(\mealyState')$.

Next, we obtain from the independence of the action choices of both players and  how the action probabilities are computed that
\[\proba(\concurActionSpace_k = \concurAction_k \mid \mealyStateSpace_k=\mealyState'\land H_k = \hist') = \nextmove(\mealyState', \concurState_k)(\concurActionOne)\cdot\stratTwo(\hist')(\concurActionTwo).\]

We use the previous equation to analyse the final term of the quotient.
We have
\begin{align*}
  \proba(\concurActionSpace_k & = \concurAction_k \mid H_k  = h')\\
  &= \sum_{\substack{\mealyState''\in M\\\proba(\mealyStateSpace_k=\mealyState''\mid H_k=h')>0}} \proba(\concurActionSpace_k = \concurAction_k \mid \mealyStateSpace_k = \mealyState'' \land H_k = h')\cdot \proba(\mealyStateSpace_k = \mealyState''\mid H_k=h')  \\
  &=  \stratTwo(\hist')(\concurActionTwo_k)\cdot \sum_{\mealyState''\in M} \nextmove(\mealyState'', \concurState_k)(\concurActionOne_k)\cdot \mealyDist{\prefHist'}(\mealyState'').
\end{align*}

By injecting the above in Equation~\eqref{equation:conditional:3}, we directly obtain Equation~\eqref{equation:update} (note that any term appearing in a denominator is non-zero by the assumption $\proba(W_{k+1} = \prefHist)>0$).
This concludes the explanation on how to derive the formula to update the distribution over memory states following a state transition and the choice of a pair of actions.
 
\let\simpleGame\undefined
\let\checkAct\undefined
\let\continueAct\undefined

\end{document}